\newtheorem{definition}{Definition}[section]
\newtheorem{theorem}{Theorem}[section]
\newtheorem{lemma}{Lemma}[section]
\newtheorem{remark}{Remark}[section]
\newtheorem{corollary}{Corollary}[section]
\newcommand{\argmax}{\operatornamewithlimits{argmax}}
\renewcommand{\thefootnote}{}
\newcommand{\bydef} {{\buildrel{\triangle}\over =}}
\begin{document}
\title{\huge On the Multiple Access Channel with Asymmetric Noisy State Information at the Encoders}
\author{Nevroz~\c{S}en,~ Fady~Alajaji,~Serdar~Y\"{u}ksel and Giacomo~Como }
%\author{Nevroz~\c{S}en,~\IEEEmembership{Student~Member,~IEEE,}
%        Fady~Alajaji,~\IEEEmembership{Senior~Member,~IEEE,}
%        ~Serdar~Y\"{u}ksel,~\IEEEmembership{~Member,~IEEE} and Giacomo~Como,~\IEEEmembership{~Member,~IEEE} }% <-this % stops a space
\maketitle
\renewcommand{\thefootnote}{}
\footnotetext{This work was supported in part by the Natural Sciences and Engineering Research Council of Canada (NSERC).}
\footnotetext{The material in this paper was presented in part at Forty-Ninth Annual Allerton Conference on Communication, Control, and Computing, Monticello, IL, September 2011.}
\footnote{N. \c{S}en, F. Alajaji and S. Y\"{u}ksel are with the Department of Mathematics and Statistics, Queen's University, Kingston, ON K7L 3N6, Canada (email:\{nsen,fady,yuksel\}@mast.queensu.ca).}
\footnote{G. Como is with the Department of Automatic Control, Lund University, SE-221 00 Lund, Sweden (email:giacomo@control.lth.se).}
\vspace{-0.5in}
\begin{abstract}
We consider the problem of reliable communication over multiple-access channels (MAC) where the channel is driven by an independent and identically distributed state process and the encoders and the decoder are provided with various degrees of asymmetric noisy channel state information (CSI). For the case where the encoders observe causal, asymmetric noisy CSI and the decoder observes complete CSI, we provide inner and outer bounds to the capacity region, which are tight for the sum-rate capacity. We then observe that, under a Markov assumption, similar capacity results also hold in the case where the receiver observes noisy CSI. Furthermore, we provide a single letter characterization for the capacity region when the CSI at the encoders are asymmetric deterministic functions of the CSI at the decoder and the encoders have non-causal noisy CSI (its causal version is recently solved in \cite{como-yuksel}). When the encoders observe asymmetric noisy CSI with asymmetric delays and the decoder observes complete CSI, we provide a single letter characterization for the capacity region. Finally, we consider a cooperative scenario with common and private messages, with asymmetric noisy CSI at the encoders and complete CSI at the decoder. We provide a single letter expression for the capacity region for such channels. For the cooperative scenario, we also note that as soon as the common message encoder does not have access to CSI, then in any noisy setup, covering the cases where no CSI or noisy CSI at the decoder, it is possible to obtain a single letter characterization for the capacity region. The main component in these results is a generalization of a converse coding approach, recently introduced in \cite{como-yuksel} for the MAC with asymmetric quantized CSI at the encoders and herein considerably extended and adapted for the noisy CSI setup.
\end{abstract}
\renewcommand{\thefootnote}{\arabic{footnote}}
\setcounter{footnote}{0}
\begin{keywords}
Finite-state multiple-access channel, Asymmetric noisy channel state information, Capacity region, Cooperative multiple-access channel, Converse coding theorem.
\end{keywords}
\section{Introduction and Literature Review}
Modeling communication channels with a state process, which governs the channel behavior, fits well for many physical scenarios. For single-user channels, the characterization of the capacity with various degrees of channel state information at the transmitter (CSIT) and at the receiver (CSIR) is well understood. Among them, Shannon \cite{Sha_Side} provides the capacity formula for a discrete memoryless channel with causal noiseless CSIT, where the state process is independent and identically distributed (i.i.d.), in terms of Shannon strategies (random functions from the state space to the channel input space). In \cite{gelfand-pinsker} Gel'fand and Pinsker consider the same problem with non-causal side information and establish a single-letter capacity formula. In \cite{Salehi}, noisy state observation available at both the transmitter and the receiver is considered and the capacity under such a setting is derived. Later, in \cite{caire-shamai} this result is shown to be a special case of Shannon's model and the authors also prove that when CSIT is a deterministic function of CSIR optimal codes can be constructed directly on the input alphabet. In \cite{erez-zamir}, the authors examine the discrete modulo-additive noise channel with casual CSIT which governs the noise distribution, and they determine the optimal strategies that achieve channel capacity. In \cite{gold-var} fading channels with perfect channel state information at the transmitter is considered and it is shown that with instantaneous and perfect CSI, the transmitter can adjust the data rates for each channel state to maximize the average transmission rate. In \cite{yuk-tat}, a single letter characterization of the capacity region for single-user finite-state Markovian channels with quantized state information available at the transmitter and full state information at the decoder is provided. In a closely related direction, finite-state channels (with memory) with output feedback is investigated in \cite{tat-mit}. In particular, \cite{tat-mit} shows that it is possible to formulate the computation of feedback capacity as a stochastic control problem. In \cite{per-weiss-gold}, finite-state channels with feedback, where feedback is a time-invariant deterministic function of the output samples, is considered.

The literature on finite state multiple access channels (FS-MAC) with different assumptions of CSIR and CSIT (such as causal vs non-causal, perfect vs imperfect) is extensive and the main contributions of the current paper have several interactions with the available results in the literature, which we present in Subsection \ref{subsec:main-cont}. Hence, we believe that in order to suitably highlight the contributions of this paper, it is worth to discuss the relevant literature for the multi-user setting in more detail. To start, \cite{das-narayan} provides a multi-letter characterization of the capacity region of time-varying MACs with general channel statistics (with/without memory) under a general state process (not necessarily stationary or ergodic) and with various degrees of CSIT and CSIR. In \cite{das-narayan}, it is also shown that when the channel is memoryless, if the encoders use only the past $k$ asymmetric partial (but not noisy) CSI and the decoder has complete CSI, then it is possible to simplify the multi-letter characterization to a single letter one \cite[Theorem 4]{das-narayan}. In \cite{Jafar}, a general framework for the capacity region of MACs with causal and non-causal CSI is presented. In particular, an achievable rate region is presented for the memoryless FS-MAC with correlated CSI and the sum-rate capacity is established under the condition that the state information available to each encoder are independent. In \cite{Cemal}, MACs with complete CSIR and noncausal, partial, rate limited CSITs are considered. In particular, for the degraded case, i.e., the case where the CSI available at one of the encoders is a subset of the CSI available at the other encoder, a single letter formula for the capacity region is provided and when the CSITs are not degraded, inner and outer bounds are derived, see \cite[Theorems 1, 2]{Cemal}. In \cite{lap-ste-1}, memoryless FS-MACs with two independent states, each known causally and strictly causally to one encoder, is considered and an achievable rate region, which is shown to contain an achievable region where each user applies Shannon strategies, is proposed. In \cite{LiYen}, another achievable rate region for the same problem is proposed and in \cite{lap-ste-3} it is shown that this region can be strictly larger than the one proposed in \cite{lap-ste-1}. In \cite{lap-ste-1} it is also shown that strictly casual CSI does not increase the sum-rate capacity. In \cite{basher} the finite-state Markovian MAC with asymmetric delayed CSITs is studied and its capacity region is determined. Another active research direction on the FS-MAC regards the so-called cooperative FS-MAC where there exists a degraded condition on the message sets. In particular, \cite{som-sha-verdu} and \cite{kotagari} characterize the capacity region of the cooperative FS-MAC with states non-causally and causally available at the transmitters. For more recent results on the cooperative FS-MAC problem see references \cite{Zaidi} and \cite{perm-sha-som}. Finally, for a comprehensive survey on channel coding with side information see \cite{keshet}.

The most relevant work to this paper is \cite{como-yuksel}, which presents a single letter characterization of the capacity region for memoryless FS-MAC in which transmitters observe asymmetric partial quantized CSI causally, and the receiver has full CSI. In the converse part of this work, which we discuss in more detail below, the authors use team decision theoretic methods \cite{Wits} (see also \cite{Yuksel}, \cite{maha-tene} and \cite{Nayyar} for recent team decision and control theoretic approaches). When a comparison of this result with the previously mentioned results is made, we observe the following: $i)$ it shows that when the state process is i.i.d. there is no loss of optimality if the encoders use a window size of $k=1$ in \cite[Theorem 3]{das-narayan}, $ii)$ it extends the causal part of result \cite[Theorem 5]{Jafar} to the case where CSITs are not independent, and finally, $iii)$ it partially answers the setup in \cite[Theorem 2]{Cemal} with the assumption that CSITs are causal.
\subsection{Main Contributions and Connections with the Literature}\label{subsec:main-cont}
We consider several scenarios where the encoders and the decoder observe various degrees of noisy CSI. The essential requirement we impose is that the noisy CSI available to the decision makers is realized via the corruption of CSI by different noise processes, which give a realistic physical structure of the communication setup. We herein note that the asymmetric noisy CSI assumption is acceptable as typically the feedback links are imperfect and sufficiently far from each other so that the information carried through them is corrupted by different (independent) noise processes. Finally, what makes (asymmetric) noisy setups particularly interesting are the facts that
\begin{itemize}
\item [(a)] No transmitter CSI contains the CSI available to the other one;
\item [(b)] CSI available to the decoder does not contain any of the CSI available to the two encoders.
\end{itemize}
When existing results, which provide a single letter capacity formulation, are examined, it can be observed that most of them do not satisfy $(a)$ or $(b)$ or both (e.g., \cite{como-yuksel}, \cite{das-narayan}, \cite{Jafar}, \cite{Cemal}, \cite{basher}). Nonetheless, among these, \cite{das-narayan} discusses the situation with noisy CSI and the authors make the observation that the situation where the CSI at the encoders and decoder are noisy versions of $S_t$ can be accommodated by their models. However, they also note that if the noises corrupting transmitters and receiver CSI are different, then the encoder CSI will, in general, not be contained in the decoder CSI. Hence, motivated by similar observations in the literature (e.g., \cite{Jafar}), we partially treat the scenarios below and provide inner and outer bounds, which are tight for the sum-rate capacity, for the scenarios $(1)$ and $(1a)$ and provide a single letter characterization for the capacity region of the latter scenarios:
\begin{itemize}
\item [(1)] The memoryless FS-MAC in which each of the transmitters has an asymmetric causal noisy CSI and the receiver has complete CSI (Theorems \ref{theo:inbnd-cau-fl}, \ref{theo:outbnd-cau-fl} and Corollary \ref{cor:sumra-cau-fl}).
    \begin{itemize}
    \item [(1a)] The memoryless FS-MAC in which each of the transmitters has an asymmetric causal noisy CSI and the receiver has also noisy CSI (Corollaries \ref{cor:inbnd-cau-rn}, \ref{cor:outbnd-cau-rn} and \ref{cor:sumra-cau-rn}).
    \item [(1b)] The memoryless FS-MAC in which each of the transmitters has an asymmetric causal and non-causal noisy CSIT which is a deterministic function of the noisy CSIR at the receiver (Theorem \ref{theo:main-inp-ncas}).
    \end{itemize}
\item [(2)] The memoryless FS-MAC in which each of the transmitters has an asymmetrically delayed and asymmetric noisy CSI and the receiver has complete CSI (Theorem \ref{theo:main-inp}).
\item [(3)] The cooperative memoryless FS-MAC in which both transmitters transmit a common message and one transmitter (informed transmitter) transmits a private message. The informed transmitter has causal noisy CSI, the other encoder has a delayed noisy CSI and the receiver has various degrees of CSI (Theorems \ref{theo:main-corr} and \ref{theo:main-corr-gen}).
\end{itemize}

Let us now briefly position these contributions with respect to the available results in the literature. The sum-rate capacity determined in $(1)$ and $(1a)$ can be thought as an extension of \cite[Theorem 4]{Jafar} to the case where the encoders have correlated CSI. The causal setup of $(1b)$, with the observation of the existence of an equivalent channel, is solved in \cite{como-yuksel}. The solution that we provide to the non-causal case partially solves \cite{Cemal} and extends \cite[Theorem 5]{Jafar} to the case where the encoders have correlated CSI. Furthermore, since the causal and non-causal capacities are identical for scenario $(1b)$, the causal solution can be considered as an extension of \cite[Proposition 1]{caire-shamai} to a noisy multi-user case. Finally, $(3)$ is an extension of \cite[Theorem 4]{som-sha-verdu} to a noisy setup.

\subsection{The Converse Coding Approach}
In this work, we adopt and expand on the converse technique presented in \cite{como-yuksel} and use it in a noisy setup. The converse coding approach of \cite{como-yuksel} is based on using \textit{memoryless stationary team policies} which play a key role in showing that the past information is irrelevant. This is obtained by showing that under any policy that one can achieve using an arbitrary decentralized coding policy, the same performance can be achieved by using memoryless stationary team policies. More specifically, this is accomplished in two steps. In the first step, it is shown that any achievable rate pair can be approximated with the convex combinations of conditional mutual information terms which are indexed by the past CSIR. In the second step, the conditional probability distribution, for which these conditional mutual information terms are a function of, is examined. With the observation that the past CSIR only affects the ``controls,'' i.e., memoryless stationary team policies, taking the convex hull associated to all
possible such controls completes the converse part. However, as the authors mention in \cite[Remark 2]{como-yuksel}, for the validity of the above arguments, it would suffice that the state information available at the
decoder contains the one available at the two transmitters. In this way, the decoder does not need to estimate the coding policies used in decentralized time-sharing.

For the noisy setup, we need to modify this approach to account for the fact that the decoder does not have access to the state information at the encoders, and that the past state information does not lead to a tractable recursion. This difficulty is overcome by showing that a product form on the team policies exists in the noisy setup as well.

The rest of the paper is organized as follows. In Sections \ref{sec:causal}, \ref{sec:delay} and \ref{sec:corr}, we formally state scenarios (1)-(1b), (2) and (3), respectively, and present the main results and several observations. In Section \ref{sec:example}, we provide two examples in one of which (the modulo-additive FS-MAC) we apply the result of \cite{erez-zamir} and get the full capacity region by only considering the tightness of the sum-rate capacity. Finally, in Section \ref{sec:conc}, we present concluding remarks.

Throughout the paper, we will use the following notations. A random variable will be denoted by an upper case letter $X$ and its particular realization by a lower case letter $x$. For a vector $v$, and a positive integer $i$, $v_{i}$ will denote the $i$-th entry of $v$, while $v_{[i]} = (v_1, \cdots, v_i)$ will denote the vector of the first $i$ entries and $v_{[i,j]} = (v_{i}, \cdots, v_j),\enspace i\leq j$ will denote the vector of entries between $i,j$ of $v$. For a finite set $\mathcal{A}$, $\mathcal{P}(\mathcal{A})$ will denote the simplex of probability distributions over $\mathcal{A}$. Probability distributions are denoted by $P(\cdot)$ and subscripted by the name of the random variables and conditioning, e.g., $P_{U,T|V,S}(u,t|v,s)$ is the conditional probability of $(U=u,T=t)$ given $(V=v,S=s)$. Finally, for a positive integer $n$, we shall denote by $\mathcal{A}^{(n)} := \bigcup_{0<s<n} \mathcal{A}^{s}$ the set of $\mathcal{A}$-strings of length smaller than $n$. We denote the indicator function of an event $E$ by $1_{\{E\}}$. All sets considered hereafter are finite.

\section{Asymmetric Causal Noisy CSIT and Complete CSIR}\label{sec:causal}
Consider a two-user memoryless FS-MAC, with two encoders, $a, b$, and two independent message sources $W_a$ and $W_b$ which are uniformly distributed in the finite sets $\mathcal W_a$ and $\mathcal W_b$, respectively. The channel inputs from the encoders are $X^{a} \in {\cal X}_a$ and $X^{b} \in {\cal X}_b$, respectively, and the channel output is $Y \in {\cal Y}$. The channel state process is modeled as a sequence $\{S_t\}_{t=1}^{\infty}$ of random variables in some finite space $\cal S$. The two encoders have access to a causal noisy version of the state information $S_t$ at each time $t\geq1$, modeled by $S_{t}^{a} \in {\cal S}_{a}$, $S_{t}^{b} \in {\cal S}_{b}$, respectively, where the joint distribution of $(S_t,S_t^a,S_t^b)$ factorizes as
\begin{eqnarray}
P_{S_{t}^{a}, S_{t}^{b},S_t}(s_{t}^{a}, s_{t}^{b},s_t)=P_{S_{t}^{a}|S_t}(s_{t}^{a}|s_t)P_{S_{t}^{b}|S_t}(s_{t}^{b}|s_t)P_{S_t}(s_t)\label{eq:noi-recfl-mrk}.
\end{eqnarray}
We also assume that $S_t$ is fully available at the receiver (see Fig. \ref{fig:mac-noi-cau}) and that $\{(S_t,S_t^a,S_t^b)\}_{t=1}^{\infty}$ is  a sequence of independent and identically distributed triples, independent from $(W_a,W_b)$. Therefore, we have that for any $n\geq1$,
\begin{eqnarray}
P_{S_{[n]},S_{[n]}^a,S_{[n]}^b,W_a,W_b}(s_{[n]},s_{[n]}^a,s_{[n]}^b,w_a,w_b)=\prod_{t=1}^n \frac{1}{|{\cal W}_a|}\frac{1}{|{\cal W}_b|}P_{S_{t}^{a}|S_t}(s_{t}^{a}|s_t)P_{S_{t}^{b}|S_t}(s_{t}^{b}|s_t)P_{S_t}(s_t)\label{eq:sta-iid}.
\end{eqnarray}
The channel inputs at time $t$, i.e., $X_t^{a}$ and $X_t^{b}$, are functions of the locally available information $(W_a,S_{[t]}^{a})$ and $(W_b,S_{[t]}^{b})$, respectively. Let $\mathbf{W}:=(W_a,W_b)$ and $\mathbf{X_t}:=(X_t^a,X_t^b)$, respectively. Then, the laws governing $n$-sequences of state, input and output letters are given by
\begin{eqnarray}
P_{Y_{[n]}|\mathbf{W},\mathbf{X}_{[n]},S_{[n]},S_{[n]}^a,S_{[n]}^b}(y_{[n]}|\mathbf{w},\mathbf{x}_{[n]},s_{[n]},s_{[n]}^a,s_{[n]}^b)=\prod _{t=1}^nP_{Y_t|X_{t}^{a}, X_{t}^{b}, S_t}(y_t|x_{t}^{a}, x_{t}^{b}, s_t), \label{eq:ch-recfl}
\end{eqnarray}
where the channel's transition probability distribution, $P_{Y_t|X_{t}^{a}, X_{t}^{b}, S_t}(y_t|x_{t}^{a}, x_{t}^{b}, s_t)$, is given a priori.
\begin{definition}\label{def:maccode-causal}
An $(n, 2^{nR_a}, 2^{nR_b})$ code with block length $n$ and rate pair $(R_a, R_b)$ for an FS-MAC with causal noisy state feedback consists of
\begin{itemize}
\item [(1)] A sequence of mappings for each encoder
\begin{center}
$\phi_{t}^{(a)}: \mathcal{S}_{a}^{t} \times \mathcal{W}_a  \rightarrow {\cal X}_a, \enspace t=1,2,...n$;\\
\end{center}
\begin{center}
$\phi_{t}^{(b)}: \mathcal{S}_{b}^{t} \times \mathcal{W}_b \rightarrow {\cal X}_b, \enspace t=1,2,...n$.
\end{center}
\item [2)] An associated decoding function
\begin{center}
$\psi: \mathcal{S}^n\times{\cal Y}^n\rightarrow \mathcal{W}_a \times \mathcal{W}_b$.\\
\end{center}
\end{itemize}
\end{definition}
\begin{figure}
\setlength{\unitlength}{3947sp}%
\begingroup\makeatletter\ifx\SetFigFont\undefined%
\gdef\SetFigFont#1#2#3#4#5{%
  \reset@font\fontsize{#1}{#2pt}%
  \fontfamily{#3}\fontseries{#4}\fontshape{#5}%
  \selectfont}%
\fi\endgroup%
\begin{picture}(5945,2853)(0,-3377)
%  METADATA <id>41</id>
\thinlines
{\color[rgb]{0,0,0}\put(1083,-2549){\vector( 1, 0){569}}
}%
%  METADATA <id>7</id>
{\color[rgb]{0,0,0}\put(1658,-2909){\framebox(1200,712){}}
}%
%  METADATA <id>50</id>
{\color[rgb]{0,0,0}\put(1658,-1725){\framebox(1200,712){}}
}%
%  METADATA <id>40</id>
{\color[rgb]{0,0,0}\put(1083,-1327){\vector( 1, 0){569}}
}%
%  METADATA <id>11</id>
{\color[rgb]{0,0,0}\put(3283,-2296){\framebox(1521,705){}}
}%
%  METADATA <id>52</id>
{\color[rgb]{0,0,0}\put(5281,-2311){\framebox(1149,712){}}
}%
%  METADATA <id>33</id>
{\color[rgb]{0,0,0}\put(6430,-1753){\vector( 1, 0){536}}
}%
%  METADATA <id>34</id>
{\color[rgb]{0,0,0}\put(6430,-2201){\vector( 1, 0){536}}
}%
%  METADATA <id>59</id>
{\color[rgb]{0,0,0}\put(2860,-1317){\line( 1, 0){233}}
\put(3092,-1317){\line( 0,-1){391}}
\put(3092,-1708){\vector( 1, 0){190}}
}%
%  METADATA <id>64</id>
{\color[rgb]{0,0,0}\put(2860,-2538){\line( 1, 0){233}}
\put(3092,-2538){\line( 0, 1){371}}
\put(3092,-2160){\vector( 1, 0){190}}
}%
%  METADATA <id>65</id>
{\color[rgb]{0,0,0}\put(2207,-536){\vector( 0,-1){468}}
\put(2207,-536){\line( 1, 0){1735}}
\put(3940,-536){\line( 0,-1){1053}}
}%
%  METADATA <id>67</id>
{\color[rgb]{0,0,0}\put(2208,-3365){\vector( 0, 1){461}}
\put(2208,-3365){\line( 1, 0){1746}}
\put(3951,-3365){\line( 0, 1){1064}}
}%
%  METADATA <id>56</id>
{\color[rgb]{0,0,0}\put(4811,-1762){\vector( 1, 0){470}}
}%
%  METADATA <id>74</id>
{\color[rgb]{0,0,0}\put(4811,-2159){\vector( 1, 0){470}}
}%
%  METADATA <id>42</id>
\put(1356,-1230){\makebox(0,0)[b]{\smash{{\SetFigFont{11}{13.2}{\rmdefault}{\mddefault}{\updefault}{\color[rgb]{0,0,0}$W_a$}%
}}}}
%  METADATA <id>2</id>
\put(2240,-1281){\makebox(0,0)[b]{\smash{{\SetFigFont{11}{13.2}{\rmdefault}{\mddefault}{\updefault}{\color[rgb]{0,0,0}Encoder}%
}}}}
%  METADATA <id>3</id>
\put(2250,-1520){\makebox(0,0)[b]{\smash{{\SetFigFont{11}{13.2}{\rmdefault}{\mddefault}{\updefault}{\color[rgb]{0,0,0}$\phi_{t}^{(a)}(W_a,S_{[t]}^a)$}%
}}}}
%  METADATA <id>9</id>
\put(2240,-2691){\makebox(0,0)[b]{\smash{{\SetFigFont{11}{13.2}{\rmdefault}{\mddefault}{\updefault}{\color[rgb]{0,0,0}$\phi_{t}^{(b)}(W_b,S_{[t]}^b)$}%
}}}}
%  METADATA <id>8</id>
\put(2240,-2484){\makebox(0,0)[b]{\smash{{\SetFigFont{11}{13.2}{\rmdefault}{\mddefault}{\updefault}{\color[rgb]{0,0,0}Encoder}%
}}}}
%  METADATA <id>14</id>
\put(3995,-1802){\makebox(0,0)[b]{\smash{{\SetFigFont{11}{13.2}{\rmdefault}{\mddefault}{\updefault}{\color[rgb]{0,0,0}Channel}%
}}}}
%  METADATA <id>15</id>
\put(4091,-2110){\makebox(0,0)[b]{\smash{{\SetFigFont{11}{13.2}{\rmdefault}{\mddefault}{\updefault}{\color[rgb]{0,0,0}$P(Y_t|X_t^a, X_t^b,S_t)$}%
}}}}
%  METADATA <id>46</id>
\put(5890,-2132){\makebox(0,0)[b]{\smash{{\SetFigFont{11}{13.2}{\rmdefault}{\mddefault}{\updefault}{\color[rgb]{0,0,0}$\psi(Y_{[n]},S_{[n]})$}%
}}}}
%  METADATA <id>22</id>
\put(5834,-1885){\makebox(0,0)[b]{\smash{{\SetFigFont{11}{13.2}{\rmdefault}{\mddefault}{\updefault}{\color[rgb]{0,0,0}Decoder}%
}}}}
%  METADATA <id>35</id>
\put(6703,-1710){\makebox(0,0)[b]{\smash{{\SetFigFont{11}{13.2}{\rmdefault}{\mddefault}{\updefault}{\color[rgb]{0,0,0}$\hat{W}_a$}%
}}}}
%  METADATA <id>36</id>
\put(6708,-2149){\makebox(0,0)[b]{\smash{{\SetFigFont{11}{13.2}{\rmdefault}{\mddefault}{\updefault}{\color[rgb]{0,0,0}$\hat{W}_b$}%
}}}}
%  METADATA <id>43</id>
\put(1363,-2460){\makebox(0,0)[b]{\smash{{\SetFigFont{11}{13.2}{\rmdefault}{\mddefault}{\updefault}{\color[rgb]{0,0,0}$W_b$}%
}}}}
%  METADATA <id>16</id>
\put(3099,-1268){\makebox(0,0)[b]{\smash{{\SetFigFont{11}{13.2}{\rmdefault}{\mddefault}{\updefault}{\color[rgb]{0,0,0}$X_t^a$}%
}}}}
%  METADATA <id>17</id>
\put(3063,-2763){\makebox(0,0)[b]{\smash{{\SetFigFont{11}{13.2}{\rmdefault}{\mddefault}{\updefault}{\color[rgb]{0,0,0}$X_t^b$}%
}}}}
%  METADATA <id>44</id>
\put(2438,-827){\makebox(0,0)[b]{\smash{{\SetFigFont{11}{13.2}{\rmdefault}{\mddefault}{\updefault}{\color[rgb]{0,0,0}$S_{t}^a$}%
}}}}
%  METADATA <id>45</id>
\put(2444,-3159){\makebox(0,0)[b]{\smash{{\SetFigFont{11}{13.2}{\rmdefault}{\mddefault}{\updefault}{\color[rgb]{0,0,0}$S_{t}^b$}%
}}}}
%  METADATA <id>25</id>
\put(5009,-1700){\makebox(0,0)[b]{\smash{{\SetFigFont{11}{13.2}{\rmdefault}{\mddefault}{\updefault}{\color[rgb]{0,0,0}$Y_t$}%
}}}}
%  METADATA <id>37</id>
\put(5042,-2050){\makebox(0,0)[b]{\smash{{\SetFigFont{11}{13.2}{\rmdefault}{\mddefault}{\updefault}{\color[rgb]{0,0,0}$S_{t}$}%
}}}}
\end{picture}
\caption{The multiple-access channel with asymmetric causal noisy state feedback.}
\label{fig:mac-noi-cau}
\vspace{-0.2in}
\end{figure}
The system's probability of error, $P_{e}^{(n)}$, is given by
\begin{eqnarray}
P_{e}^{(n)}=\frac{1}{2^{n(R_a+R_b)}}\sum_{w_a=1}^{2^{nR_a}}\sum_{w_b=1}^{2^{nR_b}}P\left(\psi(Y_{[n]},S_{[n]})\neq (w_a,w_b)| \mathbf{W}=\mathbf{w}\right).\nonumber
\end{eqnarray}
A rate pair $(R_a, R_b)$ is achievable if for any $\epsilon > 0$, there exists, for all $n$ sufficiently large an $(n, 2^{nR_a}, 2^{nR_b})$ code such that $\frac{1}{n}\log |{\cal W}_a| \geq R_a > 0$, $\frac{1}{n}\log |{\cal W}_b| \geq R_b > 0$ and $P_{e}^{(n)} \leq \epsilon$. The capacity region of the FS-MAC, ${\cal C}_{FS}$, is the closure of the set of all achievable rate pairs $(R_a, R_b)$ and the sum-rate capacity is defined as ${\cal C}_{FS}^{\sum}:=\max_{(R_a,R_b) \in {\cal C}_{FS}}(R_a+R_b)$.

Before proceeding with the main result, we introduce \textit{memoryless stationary team policies} \cite{como-yuksel} and their associated rate regions. Let the set of all possible functions from ${\cal S}_{a}$ to ${\cal X}_{a}$ and ${\cal S}_{b}$ to ${\cal X}_{b}$  be denoted by ${\cal T}_{a}:={{\cal X}_{a}}^{{\cal S}_a}$ and ${\cal T}_{b}:={{\cal X}_{b}}^{{\cal S}_b}$, respectively.
We shall refer to ${\cal T}_a$-valued and ${\cal T}_b$-valued random vectors as Shannon strategies.
%\end{definition}
\begin{definition}\cite{como-yuksel}\label{def:team-pol-shst}
A memoryless stationary (in time) team policy is a family
\begin{eqnarray}
\Pi=\left\{\pi=\left(\pi_{T^a}(\cdot),\pi_{T^b}(\cdot)\right)\in {\cal P}({\cal T}_a)\times{\cal P}({\cal T}_b)\right\}\label{eq:team-pol-shst}
%\pi=\{\pi_{T^i}(\cdot)\in {\cal P}({\cal T}^i)| i \in \{a,b\}\}\label{eq:team-pol-shst}
\end{eqnarray}
of probability distribution pairs on $({\cal T}_a,{\cal T}_b)$.

For every memoryless stationary team policy $\pi$, let $\mathcal{R}_{FS}(\pi)$ denote the region of all rate pairs $R=(R_a,R_b)$ satisfying
\begin{eqnarray}
R_a &<& I(T^{a};Y|T^{b},S) \quad \label{eq:ra1-shst-fl}\\
R_b &<& I(T^{b};Y|T^{a},S) \quad  \label{eq:ra2-shst-fl}\\
R_a+R_b &<& I(T^a,T^{b};Y|S) \label{eq:ra3-shst-fl}
\end{eqnarray}
where $S$, $T^a$, $T^b$ and $Y$ are random variables taking values in ${\cal S}$, ${\cal T}_{a}$, ${\cal T}_{b}$ and $\cal Y$, respectively, and whose joint probability distribution factorizes as
\begin{eqnarray}
P_{S,T^a,T^b,Y}(s,t^a,t^b,y)=P_S(s)P_{Y|T^a,T^b,S}(y|t^a,t^b,s)\pi_{T^a}(t^a)\pi_{T^b}(t^b)\label{eq:joi-dist-shst-fl}.
\end{eqnarray}
\end{definition}
Let ${\cal C}_{IN}:=\overline{co}\bigg(\bigcup_{\pi}\mathcal{R}_{FS}(\pi)\bigg)$ denote the closure of the convex hull of the rate regions $\mathcal{R}_{FS}(\pi)$ given by (\ref{eq:ra1-shst-fl})-(\ref{eq:ra3-shst-fl}) associated to all possible memoryless stationary team polices as defined in (\ref{eq:team-pol-shst}). We now present an inner bound and an outer bound to the capacity region. The latter bound is obtained by providing a tight converse to the sum-rate capacity.
\begin{theorem}[Inner Bound to ${\cal C}_{FS}$]\label{theo:inbnd-cau-fl}
 ${\cal C}_{IN}\subseteq {\cal C}_{FS}$.
\end{theorem}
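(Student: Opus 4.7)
The plan is to establish achievability of every rate pair in $\mathcal{R}_{FS}(\pi)$ for each fixed memoryless stationary team policy $\pi$ via a Shannon-strategy random coding scheme, and then obtain the closed convex hull $\mathcal{C}_{IN}$ through a standard time-sharing argument. The key conceptual move, following Shannon's treatment of causal CSI, is to code over the function space $\mathcal{T}_a=\mathcal{X}_a^{\mathcal{S}_a}$ (resp.\ $\mathcal{T}_b$) rather than directly over $\mathcal{X}_a$ (resp.\ $\mathcal{X}_b$). Because the noisy CSI at each encoder is realized only at transmission time, the strategies $T^a,T^b$ can be selected ahead of time while the realized inputs $X^a_t = T^a_t(S^a_t)$, $X^b_t=T^b_t(S^b_t)$ absorb the local noisy observations $S^a_t,S^b_t$ at the encoder.

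Concretely, fix $\pi=(\pi_{T^a},\pi_{T^b})$. For each $w_a\in\mathcal{W}_a$ draw a codeword $\mathbf{T}^a(w_a)=(T^a_1(w_a),\dots,T^a_n(w_a))\in\mathcal{T}_a^n$ with i.i.d.\ entries $\sim\pi_{T^a}$, and analogously for $w_b$ using $\pi_{T^b}$; encoders transmit $X^a_t=T^a_t(w_a)(S^a_t)$ and $X^b_t=T^b_t(w_b)(S^b_t)$. The decoder, knowing $S_{[n]}$, declares $(\hat w_a,\hat w_b)$ to be the unique pair such that $(\mathbf{T}^a(\hat w_a),\mathbf{T}^b(\hat w_b),S_{[n]},Y_{[n]})$ is jointly typical under the single-letter law of (\ref{eq:joi-dist-shst-fl}). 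The core computation is to verify that the induced process $\{(T^a_t,T^b_t,S_t,Y_t)\}_{t=1}^n$ under this scheme is i.i.d.\ with per-letter marginal exactly (\ref{eq:joi-dist-shst-fl}): this follows by conditioning on $(T^a_t,T^b_t,S_t)$ and summing out $(S^a_t,S^b_t)$ using the asymmetric noise factorization (\ref{eq:noi-recfl-mrk}), the i.i.d.\ assumption (\ref{eq:sta-iid}), and the memoryless channel law (\ref{eq:ch-recfl}). A standard MAC joint-typicality error analysis, treating the triple $(T^a,T^b,S)$ as the effective input and $Y$ as the output, then yields vanishing average error probability whenever $(R_a,R_b)$ satisfies (\ref{eq:ra1-shst-fl})--(\ref{eq:ra3-shst-fl}) strictly.

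The last step is to enlarge the achievable region to $\overline{co}\bigl(\bigcup_\pi\mathcal{R}_{FS}(\pi)\bigr)$: convex combinations of two team policies are achievable by block time-sharing (splitting a length-$n$ block into two sub-blocks in which different Shannon-strategy codebooks are used), and boundary points are reached by the usual continuity/closure argument. I expect the only nontrivial step to be the verification of the induced i.i.d.\ product form; once that is in place, the rest is a routine adaptation of the Cover--Thomas MAC achievability argument, with the auxiliary noisy CSI variables $S^a_t,S^b_t$ playing no role beyond defining the effective channel $P_{Y|T^a,T^b,S}$ obtained after marginalization.
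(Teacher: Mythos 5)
Your proposal is correct and matches the paper's proof essentially step for step: random generation of Shannon-strategy codewords over $\mathcal{T}_a^n,\mathcal{T}_b^n$ i.i.d.\ according to $\pi$, encoding by evaluating $T^a_t(S^a_t),T^b_t(S^b_t)$, joint-typicality decoding of $(\mathbf{T}^a,\mathbf{T}^b,S_{[n]},Y_{[n]})$, the observation that $\{(T^a_t,T^b_t,S_t,Y_t)\}$ is i.i.d.\ with the single-letter law (\ref{eq:joi-dist-shst-fl}), the standard Cover--Thomas MAC error bound, and time-sharing for the closed convex hull. No gaps or deviations.
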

The achievability proof follows the standard arguments of joint $\epsilon$-typical $n$-sequences \cite[Section 15.2]{cover}.
\begin{definition}\cite{cover}\label{def:jnt-typ}
Fix integer $k\geq1$. The set ${\cal A}_{\epsilon}^{n}$ of $\epsilon$-typical $n$-sequences $\{(x^1_{[n]},\cdots,x^k_{[n]})\}$ with respect to the distribution $P_{X^1,\cdots,X^k}(x^1, \cdots, x^k)=\prod_{i=1}^k P_{X^i}(x^i)$ is defined by
\begin{eqnarray}
{\cal A}_{\epsilon}^{n}=\left\{(x^1_{[n]},\cdots,x^k_{[n]})\in {\cal X}^1\times \cdots {\cal X}^k:|-\frac{1}{n}\log\left(P(\textbf{u})\right)-H(U)|<\epsilon, \forall U \subseteq \{X^1,\cdots,X^k\}\right\} \nonumber
\end{eqnarray}
where $\textbf{u}$ denotes an ordered sequence in $x^1_{[n]},\cdots,x^k_{[n]}$ corresponding to $U$.
\end{definition}
\begin{proof}[Proof of Theorem \ref{theo:inbnd-cau-fl}]
Fix $(R_a,R_b)\in \mathcal{R}_{FS}(\pi)$.

\textbf{\textit{Codebook Generation}}
Fix $\pi_{T^a}(t^a)$ and $\pi_{T^b}(t^b)$. For each $w_a \in \{1,\cdots, 2^{nR_a}\}$, randomly generate its corresponding $n$-tuple $t_{[n],w_a}^{a}$, each according to $\prod_{i=1}^n\pi_{T_{i}^{a}}(t_{i,w_a}^{a})$. Similarly, For each $w_b \in \{1,\cdots, 2^{nR_b}\}$, randomly generate its corresponding $n$-tuple $t_{[n],w_b}^{b}$, each according to $\prod_{i=1}^n\pi_{T_{i}^{b}}(t_{i,w_b}^{b})$. The set of these codeword pairs form the codebook, which is revealed to the decoder while codewords $t_{i,w_l}^{l}$ are revealed to encoder $l$, $l=\{a,b\}$.

\textbf{\textit{Encoding}}
Define the encoding functions as follows: $x_{i}^{a}(w_a)=\phi_{i}^{a}(w_a,s_{[i]}^a)=t_{i,w_a}^a(s_i^a)$ and $x_{i}^{b}(w_b)=\phi_{i}^{b}(w_b,s_{[i]}^b)=t_{i,w_b}^b(s_i^b)$ where $t_{i,w_a}^a$ and $t_{i,w_b}^b$ denote the $i$th component of $t_{[n],w_a}^{a}$ and $t_{[n],w_b}^{b}$, respectively, and $s_i^a$ and $s_i^b$ denote the last components of $s_{[i]}^a$ and $s_{[i]}^b$, respectively, $i=1,\cdots,n$.
Therefore, to send the messages $w_a$ and $w_b$, we simply transmit the corresponding $t_{[n],w_a}^{a}$ and $t_{[n],w_b}^{b}$, respectively.

\textbf{\textit{Decoding}}
After receiving $(y_{[n]},s_{[n]})$, the decoder looks for the only pair $(w_a,w_b)$ such that $(t_{[n],w_a}^{a},t_{[n],w_b}^{b}$ $,y_{[n]},s_{[n]})$ are jointly $\epsilon-$typical and declares this pair as its estimate $(\hat{w}_a,\hat{w}_b)$.

\textbf{\textit{Error Analysis}}
Without loss of generality, we can assume that $(w_a,w_b)=(1,1)$ was sent. An error occurs, if the correct codewords are not typical with the received sequence or there is a pair of incorrect codewords that are typical with the received sequence. Define the events $E_{\alpha,\beta}\bydef\big\{(T_{[n],\alpha}^a,T_{[n],\beta}^b,Y_{[n]},S_{[n]})\in {\cal A}_{\epsilon}^n\big\}$, $\alpha\in\{1,\cdots,2^{nR_a}\}$ and $\beta\in\{1,\cdots,2^{nR_b}\}$.
Then, by the union bound we get{\allowdisplaybreaks
\begin{eqnarray}
P_{e}^{n}&=&P\big(E_{1,1}^c\bigcup_{(\alpha,\beta)\neq(1,1)}E_{\alpha,\beta}\big)\nonumber\\
&\leq& P(E_{1,1}^c)+\sum_{\alpha=1,\beta\neq1}P(E_{\alpha,\beta}) + \sum_{\alpha\neq1,\beta=1}P(E_{\alpha,\beta})+ \sum_{\alpha\neq1,\beta\neq1}P(E_{\alpha,\beta})\label{eq:erbound-cau-fl}
\end{eqnarray}}where $E_{1,1}^c$ denotes the complement set of $E_{1,1}$. It can easily be verified that $\{Y_i,S_i,T_i^a,T_i^b\}_{i=1}^{\infty}$ is an i.i.d. sequence and by \cite[Theorem 15.2.1]{cover}, $ P(E_{1,1}^c)\rightarrow0$ as $n\rightarrow \infty$.
Next, let us consider the second term
\begin{eqnarray}
\sum_{\alpha=1,\beta\neq1}P(E_{\alpha=1,\beta\neq1})&=&\sum_{\alpha=1,\beta\neq1}P((T_{[n],1}^a,T_{[n],\beta}^b,Y_{[n]},S_{[n]})\in A_{\epsilon}^n)\nonumber\\
&\overset{(i)}{=}&\sum_{\alpha=1,\beta\neq1}\sum_{(t_{[n]}^a,t_{[n]}^b,y_{[n]},s_{[n]})\in A_{\epsilon}^n}P_{T_{[n]}^b}(t_{[n]}^b)P_{T_{[n]}^a, Y_{[n]},S_{[n]}}(t_{[n]}^a, y_{[n]},s_{[n]})\nonumber\\
&\overset{}{\leq}&\sum_{\alpha=1,\beta\neq1}|A_{\epsilon}^n|2^{-n[H(T^b)-\epsilon]}2^{-n[H(T^a,Y,S)-\epsilon]}\label{eq:erbo0-cau-fl}\\
&\leq&2^{nR_b}2^{-n[H(T^b)+ H(T^a,Y,S) - H(T^a,T^b,Y,S)-3\epsilon]}\nonumber\\
&\overset{(ii)}{=}&2^{n[R_b-I(T^b;Y|S,T^a)-3\epsilon]}\label{eq:erbo1-cau-fl}
\end{eqnarray}
where $(i)$ holds since for $\beta\neq 1$, $T_{[n],\beta}^b$ is independent of $(T_{[n],1}^a,Y_{[n]},S_{[n]})$ and $(ii)$ follows since $T^b$ and $(T^a,S)$ are independent and $I(T^b;Y,T^a,S)=I(T^b;T^a,S)+I(T^b;Y|T^a,S)$
$=I(T^b;Y|T^a,S)$, where $I(T^b;T^a,S)=0$. Following the same steps for $(\alpha\neq1,\beta=1)$ and $(\alpha\neq1,\beta\neq1)$ we get
\begin{eqnarray}
\sum_{\alpha\neq1,\beta=1}P(E_{\alpha,\beta})\leq2^{n[R_a-I(T^a;Y|T^b,S)-3\epsilon]}, \enspace\sum_{\alpha\neq1,\beta\neq1}P(E_{\alpha,\beta})\leq2^{n[R_a+R_b-I(T^a,T^b;Y|S)-3\epsilon]} \label{eq:erbo2-cau-fl},
\end{eqnarray}
and the rate conditions of the $\mathcal{R}_{FS}(\pi)$ imply that each term tends in (\ref{eq:erbound-cau-fl}) tends to zero as $n \rightarrow \infty$. This shows the achievability of a rate pair $(R_a, R_b) \in \mathcal{R}_{FS}(\pi)$. Achievability of any rate pair in ${\cal C}_{IN}$ follows from a standard time-sharing argument.
\end{proof}
Let
\begin{eqnarray}
{\cal C}_{OUT}:=\biggl\{(R_a,R_b)\in \mathds{R}^{+}\times\mathds{R}^{+}: R_a+R_b\leq\sup_{\pi_{T^a}(t^a)\pi_{T^b}(t^b)}I(T^a,T^b;Y|S)\biggr\}\nonumber,
\end{eqnarray}
where $\mathds{R}^{+}$ is the set of positive reals.
\begin{theorem}[Outer Bound to ${\cal C}_{FS}$]\label{theo:outbnd-cau-fl}
${\cal C}_{FS}\subseteq {\cal C}_{OUT}$.
\end{theorem}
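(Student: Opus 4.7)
The plan is to establish the sum-rate converse by adapting the team-policy approach of \cite{como-yuksel} to the noisy CSI setup. Applying Fano's inequality to any achievable $(R_a,R_b) \in \mathcal{C}_{FS}$ and using that $(W_a,W_b)$ is independent of $S_{[n]}$ (so $I(W_a,W_b;Y_{[n]},S_{[n]}) = I(W_a,W_b;Y_{[n]}\,|\,S_{[n]})$), one obtains
\begin{align*}
n(R_a+R_b-\epsilon_n) \,\leq\, I(W_a,W_b;Y_{[n]}\,|\,S_{[n]}) \,=\, \sum_{t=1}^n I(W_a,W_b;Y_t\,|\,Y_{[t-1]},S_{[n]}),
\end{align*}
where $\epsilon_n \to 0$ by Fano, and the goal becomes to bound each summand by $\sup_{\pi_{T^a}\pi_{T^b}} I(T^a,T^b;Y\,|\,S)$.

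For each time $t$, I would first use the i.i.d.\ assumption (\ref{eq:sta-iid}) to drop the future states $S_{[t+1,n]}$ from the conditioning (they are independent of all variables up to time $t$), and then push $Y_{[t-1]}$ into the first argument via the elementary inequality $I(A;B|C,D) \leq I(A,D;B|C)$, obtaining the upper bound $I(W_a,W_b,Y_{[t-1]};Y_t\,|\,S_t,S_{[t-1]})$. Next, I introduce the Shannon strategies $T_t^a := \phi_t^{(a)}(W_a,S_{[t-1]}^a,\cdot) \in \mathcal{T}_a$ and $T_t^b := \phi_t^{(b)}(W_b,S_{[t-1]}^b,\cdot) \in \mathcal{T}_b$, so that $X_t^a = T_t^a(S_t^a)$ and $X_t^b = T_t^b(S_t^b)$. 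Invoking the channel memorylessness (\ref{eq:ch-recfl}) together with the fact that $(S_t^a,S_t^b)$ given $S_t$ is independent of the past, one verifies that $Y_t$ is conditionally independent of $(W_a,W_b,Y_{[t-1]})$ given $(T_t^a,T_t^b,S_t,S_{[t-1]})$, which via data processing yields
\begin{align*}
I(W_a,W_b,Y_{[t-1]};Y_t\,|\,S_t,S_{[t-1]}) \,\leq\, I(T_t^a,T_t^b;Y_t\,|\,S_t,S_{[t-1]}).
\end{align*}

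The decisive final step exploits the noise factorization (\ref{eq:noi-recfl-mrk}): conditional on any realization $S_{[t-1]}=u$, the sequences $S_{[t-1]}^a$ and $S_{[t-1]}^b$ are independent, so combined with $W_a \perp W_b$ the pair $(T_t^a,T_t^b)$ admits a \emph{product} conditional law $\pi^u_{T^a}\cdot\pi^u_{T^b}$, i.e., a memoryless stationary team policy in the sense of Definition \ref{def:team-pol-shst}. Moreover, the conditional channel $P_{Y_t|T_t^a,T_t^b,S_t,S_{[t-1]}=u}$ does not depend on $u$ and coincides with the single-letter channel appearing in (\ref{eq:joi-dist-shst-fl}), because $(S_t^a,S_t^b)$ given $S_t$ is independent of $S_{[t-1]}$. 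Hence $I(T_t^a,T_t^b;Y_t\,|\,S_t,S_{[t-1]}=u) \leq \sup_{\pi_{T^a}\pi_{T^b}} I(T^a,T^b;Y\,|\,S)$ for every $u$; averaging over $S_{[t-1]}$, summing over $t$, dividing by $n$, and letting $n\to\infty$ gives the claimed bound $R_a+R_b \leq \sup_{\pi_{T^a}\pi_{T^b}} I(T^a,T^b;Y\,|\,S)$.

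The main obstacle is the past output sequence $Y_{[t-1]}$, which, as a joint function of both encoders' messages and noisy CSI, generically couples $T_t^a$ and $T_t^b$ under any conditioning that includes it. This is precisely the difficulty highlighted in Section I-B: because the decoder lacks the encoders' noisy CSI, past receiver observations do not by themselves yield a tractable product structure on the team policies. The resolution, implemented above, is to condition only on the past decoder CSI $S_{[t-1]}$ (which is available at the decoder) and to absorb $Y_{[t-1]}$ into the data-processing step, thus recovering the product team policy that is required by the definition of $\mathcal{C}_{OUT}$.
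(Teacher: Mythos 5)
Your proof is correct and takes essentially the same route as the paper: Fano's inequality, single-letterization with Shannon strategies $T_t^a,T_t^b$, bounding out the past output $Y_{[t-1]}$ via the Markov property / data processing, and — crucially — conditioning only on the past decoder CSI $S_{[t-1]}$ to recover the product form $\pi^u_{T^a}\pi^u_{T^b}$ (the paper's Lemma~\ref{lem:fact-cau-fl}), which is exactly the paper's key observation. The minor differences (you push $Y_{[t-1]}$ into the first argument of the mutual information and invoke data processing, while the paper uses conditioning-reduces-entropy on the entropy difference and then equation~(\ref{eq:conv-cau-fl-6a}) directly) are cosmetic and yield the identical single-letter bound.
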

As a consequence of Theorems \ref{theo:inbnd-cau-fl} and \ref{theo:outbnd-cau-fl}, we have the following corollary which can be thought of as an extension of \cite[Theorem 4]{Jafar} to the case where the encoders have correlated CSI.
\begin{corollary}\label{cor:sumra-cau-fl}
\begin{eqnarray}
{\cal C}^{FS}_{\sum}=\sup_{\pi_{T^a}(t^a)\pi_{T^b}(t^b)}I(T^a,T^b;Y|S)\label{eq:sumra-cau-fl}.
\end{eqnarray}
\end{corollary}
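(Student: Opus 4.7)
The plan is to derive the corollary by combining the inner and outer bounds from Theorems \ref{theo:inbnd-cau-fl} and \ref{theo:outbnd-cau-fl}, so the work reduces to verifying that these two bounds agree on the sum-rate. No new probabilistic construction is needed beyond what those theorems already supply.

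For the achievability direction, I would fix an arbitrary memoryless stationary team policy $\pi=(\pi_{T^a},\pi_{T^b})$ and appeal to Theorem \ref{theo:inbnd-cau-fl}, which guarantees that $\mathcal{R}_{FS}(\pi)\subseteq{\cal C}_{FS}$. From the defining inequalities (\ref{eq:ra1-shst-fl})--(\ref{eq:ra3-shst-fl}), the dominant face of the pentagon $\mathcal{R}_{FS}(\pi)$ contains rate pairs $(R_a,R_b)$ with $R_a+R_b$ arbitrarily close to $I(T^a,T^b;Y|S)$ (for instance by choosing the corner point $R_a=I(T^a;Y|T^b,S)-\delta$, $R_b=I(T^b;Y|S)-\delta$, which is feasible since $I(T^a;Y|T^b,S)+I(T^b;Y|S)=I(T^a,T^b;Y|S)$ under the factorization (\ref{eq:joi-dist-shst-fl})). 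Hence ${\cal C}^{FS}_{\sum}\geq I(T^a,T^b;Y|S)$ for every product $\pi_{T^a}(t^a)\pi_{T^b}(t^b)$, and taking the supremum yields the lower bound.

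For the converse direction, I would simply invoke Theorem \ref{theo:outbnd-cau-fl}: since ${\cal C}_{FS}\subseteq{\cal C}_{OUT}$ and every $(R_a,R_b)\in{\cal C}_{OUT}$ satisfies $R_a+R_b\leq \sup_{\pi_{T^a}\pi_{T^b}} I(T^a,T^b;Y|S)$ by construction, the same bound holds for ${\cal C}^{FS}_{\sum}=\max_{(R_a,R_b)\in{\cal C}_{FS}}(R_a+R_b)$.

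Chaining the two inequalities gives equality (\ref{eq:sumra-cau-fl}). The substantive content really lives in Theorems \ref{theo:inbnd-cau-fl} and \ref{theo:outbnd-cau-fl}: the only point worth checking carefully in the corollary itself is that the supremum over Shannon-strategy product distributions in the outer bound is exactly matched by the corner-point sum-rate attainable inside the pentagon of the inner bound, which is immediate from the identity above. I do not anticipate a genuine obstacle here; it is a routine packaging of the two bounds.
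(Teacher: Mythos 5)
Your proposal is correct and takes essentially the same approach as the paper: both reduce the corollary to Theorems \ref{theo:inbnd-cau-fl} and \ref{theo:outbnd-cau-fl} and then verify that, under the product law (\ref{eq:joi-dist-shst-fl}), the sum-rate constraint in the pentagon $\mathcal{R}_{FS}(\pi)$ is attained. The paper phrases this as $I(T^a;Y|T^b,S)+I(T^b;Y|T^a,S)\geq I(T^a,T^b;Y|S)$ (equivalently $H(T^b|Y,S)\geq H(T^b|T^a,Y,S)$), while you exhibit the explicit corner point $(I(T^a;Y|T^b,S)-\delta,\, I(T^b;Y|S)-\delta)$ via the chain rule and the same entropy inequality; these are the same observation, stated in different but interchangeable ways.
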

\begin{proof}[Proof of Theorem \ref{theo:outbnd-cau-fl}]
We need to show that all achievable rates satisfy
\begin{eqnarray}
R_a+R_b \leq \sup_{\pi_{T^a}(t^a)\pi_{T^b}(t^b)}I(T^a,T^b;Y|S),\nonumber
\end{eqnarray}
i.e., a converse for the sum-rate capacity. Following \cite{como-yuksel}, let
\begin{eqnarray}
\alpha_{\mathbf{\mu}}:= \frac{1}{n}P_{S_{[t-1]}}(\mathbf{\mu})\enspace\mbox{and}\enspace \eta(\epsilon):=\frac{\epsilon}{1-\epsilon}\log|{\cal Y}|+\frac{H(\epsilon)}{1-\epsilon}\label{eq:eta-epsilon}.
\end{eqnarray}
Observe that $\lim_{\epsilon\rightarrow 0}\eta(\epsilon)=0$ and
\begin{eqnarray}
\sum_{\mathbf{\mu} \in {{\cal S}}^{(n)}}\alpha_{\mathbf{\mu}}=\frac{1}{n}\sum_{1\leq t \leq n}\enspace \sum_{\mathbf{\mu} \in {{\cal S}}^{(t-1)}}P_{S_{[t-1]}}(\mathbf{\mu})=1\nonumber,
\end{eqnarray}
where ${{\cal S}}^{(n)}$ and ${{\cal S}}^{(t-1)}$ are the sets of all ${\cal S}$-strings of length $n$ and $(t-1)$, respectively.

First recall that, $\forall t\ge1$, $X_t^a=\phi_{t}^{(a)} \left(W_a,S_{[t]}^a\right)=\phi_{t}^{(a)} \left(W_a,S_{[t-1]}^a,S_t^a\right)$ and $X_t^b=\phi_{t}^{(b)} \left(W_b,S_{[t]}^b\right)=\phi_{t}^{(b)} \left(W_b,S_{[t-1]}^b, S_t^b\right)$.
Then, we can define the Shannon strategies $T_t^a\in\mathcal T_a$ and $T_t^b\in\mathcal T_b$ by putting, for every $s_a\in\mathcal S_a$ and $s_b\in\mathcal S_b$,
\begin{equation}
T_t^a(s_a):=\phi_{t}^{(a)}\left(W_a,S_{[t-1]}^a,s_a\right),\qquad T_t^b(s_b):=\phi_{t}^{(b)}\left(W_b,S_{[t-1]}^b,s_b\right) . \label{eq:conv-cau-fl-1}
\end{equation}
We now show that the sum of any achievable rate pair can be written as the convex combinations of conditional mutual information terms which are indexed by the realization of past complete state information.
\begin{lemma}\label{lem:conv-cau-fl}
Let $T_t^{a} \in {\cal T}_{a}$ and $T_t^b \in {\cal T}_{b}$ be the Shannon strategies induced by $\phi_{t}^{(a)}$ and $\phi_{t}^{(b)}$, respectively, as shown in (\ref{eq:conv-cau-fl-1}). Assume that a rate pair $R=(R_a,R_b)$, with block length $n\geq1$ and a constant $\epsilon \in (0,1/2)$, is achievable. Then,
\begin{eqnarray}
R_a+R_b\leq \sum_{\mathbf{\mu} \in {\cal S}^{(n)}}\alpha_{\mathbf{\mu}} I(T_t^a,T_t^b;Y_t|S_t,S_{[t-1]}=\mathbf{\mu})+\eta(\epsilon) \label{eq:conv-cau-fl-2}.
\end{eqnarray}
\end{lemma}
\begin{proof}
Let $\mathbf{T}_t:=(T_t^a,T_t^b)$. By Fano's inequality, we get
\begin{eqnarray}
H(\mathbf{W}|Y_{[n]},S_{[n]})\leq H(\epsilon) + \epsilon\log(|{\cal W}_a||{\cal W}_b|). \label{eq:conv-cau-fl-3}
\end{eqnarray}
Observing that
\begin{eqnarray}
I(\mathbf{W};Y_{[n]}, S_{[n]})&=&H(\mathbf{W})-H(\mathbf{W}|Y_{[n]},S_{[n]})\nonumber\\
&=&\log(|{\cal W}_a||{\cal W}_b|)-H(\mathbf{W}|Y_{[n]},S_{[n]}).\label{eq:conv-cau-fl-4}
\end{eqnarray}
Combining (\ref{eq:conv-cau-fl-3}) and (\ref{eq:conv-cau-fl-4}) gives
\begin{eqnarray}
(1-\epsilon)\log(|{\cal W}_a||{\cal W}_b|)\leq I(\mathbf{W};Y_{[n]}, S_{[n]})+H(\epsilon)\nonumber
\end{eqnarray}
and
\begin{eqnarray}
R_a+R_b&\leq& \frac{1}{n}\log(|{\cal W}_a||{\cal W}_b|)\leq \frac{1}{1-\epsilon}\frac{1}{n}\left(I(\mathbf{W};Y_{[n]}, S_{[n]})+H(\epsilon)\right)\label{eq:conv-cau-fl-5}.
\end{eqnarray}
Furthermore,
\begin{eqnarray}
I(\mathbf{W};Y_{[n]},S_{[n]})&=&\sum_{t=1}^{n}\left[H(Y_t,S_t|S_{[t-1]},Y_{[t-1]})-H(Y_t,S_t|\mathbf{W},S_{[t-1]},Y_{[t-1]})\right]\nonumber\\
&\overset{(i)}{=}&\sum_{t=1}^{n}\left[H(Y_t|S_{[t]},Y_{[t-1]})-H(Y_t|\mathbf{W},S_{[t]},Y_{[t-1]})\right]\nonumber\\
&\overset{(ii)}{\leq}&\sum_{t=1}^{n}\left[H(Y_t|S_{[t]})-H(Y_t|\mathbf{W},S_{[t]},Y_{[t-1]},\mathbf{T}_t)\right]\nonumber\\
&\overset{(iii)}{=}&\sum_{t=1}^{n}\left[H(Y_t|S_{[t]})-H(Y_t|S_{[t]},\mathbf{T}_t)\right]\nonumber\\
&=&\sum_{t=1}^{n}I(\mathbf{T}_t;Y_t|S_{[t]})\label{eq:conv-cau-fl-6}
\end{eqnarray}
where $(i)$ is implied by (\ref{eq:sta-iid}), in $(ii)$ $\mathbf{T}_t:=(T_t^a,T_t^b)$ are Shannon strategies whose realizations are mappings $t_t^i:S_t^{i}\rightarrow X_t^i$ for $i=\{a,b\}$ and thus $(ii)$ holds since conditioning reduces entropy. Finally, $(iii)$ follows since
\begin{eqnarray}
&&\hspace{-1in}P_{Y_t|\mathbf{W},S_t,S_{[t-1]},Y_{[t-1]},T_t^a,T_t^b}(y_t|\mathbf{w},s_t,s_{[t-1]},y_{[t-1]},t_t^a,t_t^b)\nonumber\\
&=&\sum_{s_t^a,s_t^b}P_{Y_t|S_t,S_t^a,S_t^b,T_t^a,T_t^b}(y_t|s_t,s_t^a,s_t^b,t_t^a,t_t^b) P_{S_t^a,S_t^b|S_t}(s_t^a,s_t^b|s_t)\nonumber\\
%&&\hspace{-0.3in}=\sum_{s_t,s_t^a,s_t^b}P(y_t|s_t,s_t^a,s_t^b,t_t^a,t_t^b)P(s_t,s_t^a,s_t^b|s_t,t_t^a,t_t^b)\nonumber\\
&=&P_{Y_t|S_t,T_t^a,T_t^b}(y_t|s_t,t_t^a,t_t^b)\label{eq:conv-cau-fl-6a}
\end{eqnarray}
where the first equality is verified by (\ref{eq:ch-recfl}) and (\ref{eq:sta-iid}), where $x_t^i=t_t^i(s_t^i)$ for $i=\{a,b\}$. At this point, it is worth to note that by (\ref{eq:conv-cau-fl-6a}), one can remove $S_{[t-1]}$ from (\ref{eq:conv-cau-fl-6}) in the conditioning. However, we will soon observe why it is crucial to keep it when we prove the product form. Now, let $\chi(\epsilon):=\frac{H(\epsilon)}{n(1-\epsilon)}$ and combining (\ref{eq:conv-cau-fl-5})-(\ref{eq:conv-cau-fl-6}) gives{\allowdisplaybreaks
\begin{eqnarray}
R_a+R_b&\leq&\frac{1}{n}\log(|{\cal W}_a||{\cal W}_b|)\nonumber\\
%&&\hspace{-0.15in}\leq\frac{1}{1-\epsilon}\frac{1}{n}\sum_{t=1}^nI(T_t^a,T_t^b;Y_t|S_{[t]})+\chi(\epsilon)\nonumber\\
&\leq&\left(\frac{1}{1-\epsilon}\frac{1}{n}\sum_{t=1}^nI(T_t^a,T_t^b;Y_t|S_{[t]})\right)+\chi(\epsilon)+(n-1)\chi(\epsilon)\nonumber\\
&\overset{(a)}{\leq}&\frac{1}{1-\epsilon}\frac{1}{n}\sum_{t=1}^nI(T_t^a,T_t^b;Y_t|S_{[t]})+\eta(\epsilon)-\frac{\epsilon}{1-\epsilon}\frac{1}{n}\sum_{t=1}^nI(T_t^a,T_t^b;Y_t|S_{[t]})\nonumber\\
&=&\frac{1}{n}\sum_{t=1}^nI(T_t^a,T_t^b;Y_t|S_{[t]})+\eta(\epsilon)\label{eq:conv-cau-fl-7}
\end{eqnarray}}where $(a)$ is valid since $I(T_t^a,T_t^b;Y_t|S_{[t]})\leq \log|{\cal Y}|$. Furthermore,
\begin{eqnarray}
I(T_t^a,T_t^b;Y_t|S_{[t]})=n\sum_{\mathbf{\mu} \in {\cal S}^{(t-1)}}\alpha_{\mathbf{\mu}}I(T_t^a,T_t^b;Y_t|S_t, S_{[t-1]}=\mathbf{\mu}),
\end{eqnarray}
and substituting the above into (\ref{eq:conv-cau-fl-7}) yields (\ref{eq:conv-cau-fl-2}).
\end{proof}
Note that, for any $t\geq1$, $I(T_t^a,T_t^b;Y_t|S_t,S_{[t-1]}=\mathbf{\mu})$ is a function of the joint conditional distribution of channel state $S_t$, inputs $T_t^a$, $T_t^b$ and output $Y_t$ given the past realization $(S_{[t-1]}=\mathbf{\mu})$. Hence, to complete the proof of the outer bound, we need to show that $P_{T_t^a,T_t^b,Y_t,S_t|S_{[t-1]}}(t^a,t^b,y,s|\mathbf{\mu})$ factorizes as in (\ref{eq:joi-dist-shst-fl}). This is done in the lemma below. In particular, it is crucial to observe that the knowledge of the past state at the decoder, $S_{[t-1]}$, is enough to provide a product form on $T^a$ and $T^b$. Let
\begin{eqnarray}
\Upsilon_{\mathbf{\mu_a}}^a(t^a):=\{w_a:\phi_{t}^{(a)}(w_a,s_{[t-1]}^a=\mathbf{\mu_a})=t^a\}, \enspace\Upsilon_{\mathbf{\mu_b}}^b(t^b):=\{w_b:\phi_{t}^{(b)}(w_b,s_{[t-1]}^b=\mathbf{\mu_b})=t^b\}\label{eq:conv-fact-fl-1}
\end{eqnarray}
and
\begin{eqnarray}
\pi_{T^a}^{\mathbf{\mu_a}}(t^a)&:=&\sum_{w_a\in \Upsilon_{\mathbf{\mu_a}}^a(t^a)}\frac{1}{|{\cal W}_a|}, \enspace\pi_{T^b}^{\mathbf{\mu_b}}(t^b)\quad:=\quad\sum_{w_b\in \Upsilon_{\mathbf{\mu_b}}^b(t^b)}\frac{1}{|{\cal W}_b|}, \nonumber\\
\pi_{T^a}^{\mathbf{\mu}}(t^a)&:=&\sum_{\mathbf{\mu_a}}\pi_{T^a}^{\mathbf{\mu_a}}(t^a)P_{S_{[t-1]}^a|S_{[t-1]}}(\mathbf{\mu_a}|\mathbf{\mu}),\nonumber\\ \pi_{T^b}^{\mathbf{\mu}}(t^b)&:=&\sum_{\mathbf{\mu_b}}\pi_{T^b}^{\mathbf{\mu_b}}(t^b)P_{S_{[t-1]}^b|S_{[t-1]}}(\mathbf{\mu_b}|\mathbf{\mu}),\label{eq:conv-fact-fl-2}
\end{eqnarray}
where $\mathbf{\mu_a}$ and $\mathbf{\mu_b}$ denote particular realizations of $S_{[t-1]}^a$ and $S_{[t-1]}^b$, respectively.
\begin{lemma}\label{lem:fact-cau-fl}
For every $1\leq t\leq n$ and $\mathbf{\mu} \in {\cal S}^{t-1}$, the following holds
\begin{eqnarray}
P_{T_t^a,T_t^b,Y_t,S_t|S_{[t-1]}}(t^a,t^b,y,s|\mathbf{\mu})&=&P_S(s)P_{Y|S,T^a,T^b}(y|s,t^a,t^b)\pi_{T^a}^{\mathbf{\mu}}(t^a)\pi_{T^b}^{\mathbf{\mu}}(t^b).\label{eq:conv-fact-fl-3}
\end{eqnarray}
\end{lemma}
\begin{proof}
Let $\textbf{S}:=(S_t,S_t^a,S_t^b)$ and $\textbf{s}:=(s,s_t^a,s_t^b)$. Observe that{\allowdisplaybreaks
\begin{eqnarray}
\hspace{-0.2in}P_{T_t^a,T_t^b,Y_t,S_t|S_{[t-1]}}(t^a,t^b,y,s|\mathbf{\mu})&=&\sum_{s_t^a\in{\cal S}^a}\sum_{s_t^b\in{\cal S}^b}
P_{\textbf{S},T_t^a,T_t^b,Y_t|S_{[t-1]}}(\textbf{s},t^a,t^b,y|\mathbf{\mu})\nonumber\\
&=&\sum_{s_t^a\in{\cal S}^a}\sum_{s_t^b\in{\cal S}^b}P_{Y|\textbf{S},T_t^a,T_t^b}(y|\textbf{s},t^a,t^b) P_{\textbf{S},T_t^a,T_t^b|S_{[t-1]}}(\textbf{s},t^a,t^b|\mathbf{\mu})\label{eq:conv-fact-fl-4}
\end{eqnarray}
where the second equality is shown in (\ref{eq:conv-cau-fl-6a}).} Let us now consider the term $P_{\textbf{S},T_t^a,T_t^b|S_{[t-1]}}(\textbf{s},t^a,t^b|\mathbf{\mu})$ above. We have the following
\begin{eqnarray}
&&\hspace{-0.5in}P_{\textbf{S},T_t^a,T_t^b|S_{[t-1]}}(\textbf{s},t^a,t^b|\mathbf{\mu})\nonumber\\
&=&\sum_{w_a\in {\cal W}_a}\sum_{w_b\in {\cal W}_b}\sum_{\mathbf{\mu_a}}\sum_{\mathbf{\mu_b}}P_{\mathbf{W},S_{[t-1]}^a,S_{[t-1]}^b,\textbf{S},T_t^a,T_t^b|S_{[t-1]}}(\mathbf{w},\mathbf{\mu_a},\mathbf{\mu_b},\textbf{s},t^a,t^b|\mathbf{\mu})\nonumber\\
&\overset{(i)}{=}&P_{\textbf{S}}(\textbf{s})\sum_{w_a\in {\cal W}_a}\sum_{w_b\in {\cal W}_b}\sum_{\mathbf{\mu_a}}\sum_{\mathbf{\mu_b}}P_{\mathbf{W},S_{[t-1]}^a,S_{[t-1]}^b,T_t^a,T_t^b|S_{[t-1]}}(\mathbf{w},\mathbf{\mu_a},\mathbf{\mu_b},t^a,t^b|\mathbf{\mu})\nonumber\\
&\overset{(ii)}{=}&P_{\textbf{S}}(\textbf{s})\sum_{w_a\in {\cal W}_a}\sum_{w_b\in {\cal W}_b}\sum_{\mathbf{\mu_a}}\sum_{\mathbf{\mu_b}}1_{\{t^l=\phi_{t}^{(l)}(w_l,\mathbf{\mu_l}),\enspace l=a,b\}} P_{\mathbf{W},S_{[t-1]}^a,S_{[t-1]}^b|S_{[t-1]}}(\mathbf{w},\mathbf{\mu_a},\mathbf{\mu_b}|\mathbf{\mu})\nonumber\\
&\overset{(iii)}{=}&P_{\textbf{S}}(\textbf{s})\sum_{w_a\in {\cal W}_a}\sum_{w_b\in {\cal W}_b}\sum_{\mathbf{\mu_a}}\sum_{\mathbf{\mu_b}}1_{\{t^l=\phi_{t}^{(l)}(w_l,\mathbf{\mu_l}),\enspace l=a,b\}}\frac{1}{|{\cal W}_a|}\frac{1}{|{\cal W}_b|}P_{S_{[t-1]}^a,S_{[t-1]}^b|S_{[t-1]}}(\mathbf{\mu_a},\mathbf{\mu_b}|\mathbf{\mu})\nonumber\\
&\overset{(iv)}{=}&P_{\textbf{S}}(\textbf{s})\sum_{\mathbf{\mu_a}}P_{S_{[t-1]}^a|S_{[t-1]}}(\mathbf{\mu_a}|\mathbf{\mu})\sum_{\mathbf{\mu_b}}P_{S_{[t-1]}^b|S_{[t-1]}}(\mathbf{\mu_b}|\mathbf{\mu})\nonumber\\
&&\hspace{1.4in}\sum_{w_a\in {\cal W}_a} \frac{1}{|{\cal W}_a|}1_{\{t^a=\phi_{t}^{(a)}(w_a,\mathbf{\mu_a})\}}\sum_{w_b\in {\cal W}_b} \frac{1}{|{\cal W}_b|}1_{\{t^b=\phi_{t}^{(b)}(w_b,\mathbf{\mu_b})\}}\nonumber\\
&\overset{(v)}{=}&P_{\textbf{S}}(\textbf{s})\sum_{\mathbf{\mu_a}}P_{S_{[t-1]}^a|S_{[t-1]}}(\mathbf{\mu_a}|\mathbf{\mu}) \sum_{w_a\in \Upsilon_{\mathbf{\mu_a}}^a(t^a)}\frac{1}{|{\cal W}_a|}\sum_{\mathbf{\mu_b}}P_{S_{[t-1]}^b|S_{[t-1]}}(\mathbf{\mu_b}|\mathbf{\mu})\sum_{w_b\in\Upsilon_{\mathbf{\mu_b}}^b(t^b)}\frac{1}{|{\cal W}_b|}\nonumber\\
&\overset{(vi)}{=}&P_{\textbf{S}}(\textbf{s})\sum_{\mathbf{\mu_a}}P_{S_{[t-1]}^a|S_{[t-1]}}(\mathbf{\mu_a}|\mathbf{\mu})\pi_{T^a}^{\mathbf{\mu_a}}(t^a)\sum_{\mathbf{\mu_b}}P_{S_{[t-1]}^b|S_{[t-1]}}(\mathbf{\mu_b}|\mathbf{\mu})\pi_{T^b}^{\mathbf{\mu_b}}(t^b)\nonumber\\
&\overset{(vii)}{=}&P_{\textbf{S}}(\textbf{s})\pi_{T^a}^{\mathbf{\mu}}(t^a)\pi_{T^b}^{\mathbf{\mu}}(t^b)\label{eq:conv-fact-fl-5}
\end{eqnarray}
where $(i)$ is due to (\ref{eq:sta-iid}) and (\ref{eq:conv-cau-fl-1}), $(ii)$ is valid by (\ref{eq:conv-cau-fl-1}), $(iii)$ is due to (\ref{eq:sta-iid}), $(iv)$ is valid by (\ref{eq:noi-recfl-mrk}) and (\ref{eq:conv-cau-fl-1}), $(v)$ is valid due to (\ref{eq:conv-fact-fl-1}) and $(vi)-(vii)$ is valid due to (\ref{eq:conv-fact-fl-2}). Substituting (\ref{eq:conv-fact-fl-5}) into (\ref{eq:conv-fact-fl-4}) proves the lemma.
\end{proof}

We can now complete the proof of Theorem \ref{theo:outbnd-cau-fl}. With Lemma \ref{lem:conv-cau-fl} it is shown that the sum of any achievable rate pair can be approximated by the convex combinations of rate conditions given in (\ref{eq:ra3-shst-fl}) which are indexed by $\mathbf{\mu} \in {\cal S}^{(n)}$ and satisfy (\ref{eq:joi-dist-shst-fl}) for joint state-input-output distributions.
More explicitly, we have{\allowdisplaybreaks
\begin{eqnarray}
R_a+R_b&\leq& \sum_{\mathbf{\mu} \in {\cal S}^{(n)}}\alpha_{\mathbf{\mu}} I(T_t^a,T_t^b;Y_t|S_t,S_{[t-1]}=\mathbf{\mu})+\eta(\epsilon)\nonumber\\
&=&\sum_{\mathbf{\mu} \in {\cal S}^{(n)}}\alpha_{\mathbf{\mu}} I(T_t^a,T_t^b;Y_t|S_t)_{\pi_{T^a}^{\mathbf{\mu}}(t^a)\pi_{T^b}^{\mathbf{\mu}}(t^b)}+\eta(\epsilon)\nonumber\\
&\leq&\sup_{\left(\pi_{T^a}^{\mathbf{\mu}}(t^a)\pi_{T^b}^{\mathbf{\mu}}(t^b),\enspace\mathbf{\mu}\right)}I(T_t^a,T_t^b;Y_t|S_t)+\eta(\epsilon)\nonumber\\
&\leq&\sup_{\left(\pi_{T^a}(t^a)\pi_{T^b}(t^b)\in \Pi\right)}I(T_t^a,T_t^b;Y_t|S_t)+\eta(\epsilon)\nonumber,
\end{eqnarray}}where $I(T_t^a,T_t^b;Y_t|S_t)_{\pi_{T^a}^{\mathbf{\mu}}(t^a)\pi_{T^b}^{\mathbf{\mu}}(t^b)}$ denotes the mutual information induced by the product distribution $\pi_{T^a}^{\mathbf{\mu}}(t^a)\pi_{T^b}^{\mathbf{\mu}}(t^b)$ and the second step is valid since $I(T_t^a,T_t^b;Y_t|S_t,S_{[t-1]}=\mathbf{\mu})$ is a function of the joint conditional distribution of channel state $S_t$, inputs $T_t^a, T_t^b$ and output $Y_t$ given the past realization $(S_{[t-1]}=\mathbf{\mu})$. Hence, since $\lim_{\epsilon \rightarrow 0}\eta(\epsilon)=0$, any achievable pair satisfies $R_a+R_b \leq \sup_{\pi_{T^a}(t^a)\pi_{T^b}(t^b)}I(T^a,T^b;Y|S)$.
\end{proof}
Having achievability and converse proof in hand, we can now prove Corollary \ref{cor:sumra-cau-fl}.
\begin{proof}[Proof of Corollary \ref{cor:sumra-cau-fl}]
We need to show that $\exists \enspace (R_a,R_b) \in {\cal C}_{IN}$ achieving (\ref{eq:sumra-cau-fl}). We follows steps akin to \cite[p.535]{cover} where discrete memoryless MACs are considered. Let us fix $\pi_{T^a}(t^a)\pi_{T^b}(t^b)$ and consider the rate constraints given in ${\cal C}_{IN}$
\begin{eqnarray}
I(T^a;Y|T^b,S)&=&H(T^a|T^b,S)-H(T^a|T^b,Y,S)=H(T^a)-H(T^a|T^b,Y,S)\label{eq:cor-sumra-fl-1} \\
I(T^b;Y|T^a,S)&=&H(T^b|T^a,S)-H(T^b|T^a,Y,S)=H(T^b)-H(T^b|T^a,Y,S)\label{eq:cor-sumra-fl-2}
\end{eqnarray}
and
\begin{eqnarray}
I(T^a,T^b;Y|S)&=&H(T^a,T^b)-H(T^a,T^b|Y,S)\nonumber\\
&=&H(T^a)+H(T^b)-H(T^a|T^b,Y,S)-H(T^b|Y,S),\label{eq:cor-sumra-fl-3}
\end{eqnarray}
where (\ref{eq:cor-sumra-fl-1}), (\ref{eq:cor-sumra-fl-2}) and (\ref{eq:cor-sumra-fl-3}) are valid since $T^a$ and $T^b$ are independent of each other and independent of $S$. Observe now that for any $\pi_{T^a}(t^a)\pi_{T^b}(t^b)$, $I(T^a;Y|T^b,S) + I(T^b;Y|T^a,S)\geq I(T^a,T^b;Y|S)$ since $H(T^b|Y,S)\geq H(T^b|T^a,Y,S)$. Therefore, the sum-rate constraint in $ {\cal C}_{IN}$ is always active and hence, there exists $(R_a,R_b) \in {\cal C}_{IN}$ achieving (\ref{eq:sumra-cau-fl}).
\end{proof}
We now present a number of remarks.
\begin{remark}\label{rem:outbnd-cau-fl}
One essential step in the proof of Theorem \ref{theo:outbnd-cau-fl} is that, once we have the complete CSI, conditioning on which allows a product form on $T^a$ and $T^b$, there is no loss of optimality (for the sum-rate capacity) in using associated memoryless team policies instead of using all the past information at the receiver.
\end{remark}
\begin{remark}\label{rem:corr-cau-fl}
For the validity of Corollary \ref{cor:sumra-cau-fl}, it is crucial to have the product form on $(T^a,T^b)$. If this is not the case, we would get that $I(T^a;Y|T^b,S)+I(T^b;Y|T^a,S)=H(T^a|T^b)+H(T^b|T^a)-H(T^a|T^b,Y,S)-H(T^b|T^a,Y,S)$ and
$I(\mathbf{T};Y|S)=H(T^a|T^b)+H(T^b)-H(T^a|T^b,Y,S)-H(T^b|Y,S)$. Therefore, it is possible to get an obsolete sum-rate constraint in ${\cal C}_{IN}$ and hence, achievability of ${\cal C}_{FS}^{\sum}$ is not guaranteed.
\end{remark}
\begin{remark}\label{rem:diff-us-cy}
The main difference between the problem that we consider here and the one considered in \cite{como-yuksel} is the encoders' information at the decoder. More explicitly, in \cite{como-yuksel}, the information at the encoders are available at the decoder. From this perspective, the main contribution of the result of this section can be thought as showing that when the decoder has no knowledge of encoders' CSI, by enlarging the input space, there is no loss of optimality (for the sum-rate capacity) if the optimization is performed by ignoring the past CSI at the encoders given that the decoder has complete CSI.
\end{remark}
\subsection{Asymmetric Causal Noisy CSIT and Noisy CSIR}\label{sec:noisy}
In many practical applications, CSI first needs to be estimated by the receiver, such as using training methods, and then the receiver feeds back this information to the transmitters. This motivates us to consider a scenario where the decoder is first provided with noisy CSI (where the noise models the estimation error) and then, it feeds back this noisy CSI to the encoders thorough independent but noisy feedback links as shown in Fig. \ref{fig:mac-rec-noi}, where $N_a$ and $N_b$ denote independent noise processes. The two encoders have causal noisy versions of the state information $S_t$ at each time $t\geq1$, $S_{t}^{a} \in {\cal S}_{a}$, $S_{t}^{b} \in {\cal S}_{b}$, respectively, and the decoder has access to noisy CSI at time $t$ ,$S_t^r\in {\cal S}_{r}$. Based on the physical setup, the joint distribution of $(S_t,S_t^a,S_t^b,S_t^r)$ satisfies
\begin{eqnarray}
P_{S_{t}^{a}, S_{t}^{b},S_t^r,S_t}(s_{t}^{a}, s_{t}^{b}, s_t^r, s_t)=P_{S_{t}^{a}|S_t^r}(s_{t}^{a}|s_t^r)P_{S_{t}^{b}|S_t^r}(s_{t}^{b}|s_t^r)P_{S_t|S_t^r}(s_t|s_t^r)P_{S_t^r}(s_t^r)\label{eq:noi-recnoi-mrk}.
\end{eqnarray}
We also assume that the channel is memoryless (i.e., (\ref{eq:ch-recfl}) holds) and that
\begin{eqnarray}
&&\hspace{-1in}P_{S_{[n]},S_{[n]}^a,S_{[n]}^b,S_{[n]}^r,W_a,W_b}(s_{[n]},s_{[n]}^a,s_{[n]}^b,s_{[n]}^r, w_a,w_b)\nonumber\\
&&\quad\quad\quad=\hspace{0.1in}\prod_{t=1}^n \frac{1}{|{\cal W}_a|}\frac{1}{|{\cal W}_b|}P_{S_{t}^{a}|S_t^r}(s_{t}^{a}|s_t^r)P_{S_{t}^{b}|S_t^r}(s_{t}^{b}|s_t^r)P_{S_t,S_t^r}(s_t,s_t^r)\label{eq:star-iid}.
\end{eqnarray}
We first provide inner and outer bounds on the capacity region and an expression for the sum-rate capacity, akin to Theorems \ref{theo:inbnd-cau-fl}, \ref{theo:outbnd-cau-fl} and Corollary \ref{cor:sumra-cau-fl}, respectively, when the feedback links are noisy. In the next subsection, by assuming that the CSITs are asymmetric deterministic functions of CSIR we obtain the full capacity region.

A code can be defined as in Definition \ref{def:maccode-causal}, except $\psi: \mathcal{S}_{r}^n\times{\cal Y}^n\rightarrow \mathcal{W}_a \times \mathcal{W}_b$. $P_e^{(n)}$, achievable rates and the capacity region, ${\cal \tilde{C}}_{NS}$, are defined similarly. The sum-rate capacity is denoted by ${\cal \tilde{C}}_{NS}^{\sum}$. We also keep Definition \ref{def:team-pol-shst} and slightly change the associated rate region.
\begin{figure}
%
%  Created by WinFIG version 3.04
%  METADATA <version>1.0</version>
%
\setlength{\unitlength}{3947sp}%
\begingroup\makeatletter\ifx\SetFigFont\undefined%
\gdef\SetFigFont#1#2#3#4#5{%
  \reset@font\fontsize{#1}{#2pt}%
  \fontfamily{#3}\fontseries{#4}\fontshape{#5}%
  \selectfont}%
\fi\endgroup
\begin{picture}(6097,3566)(0,-3329)
%  METADATA <id>11</id>
\thinlines
{\color[rgb]{0,0,0}\put(3072,-1851){\framebox(1521,705){}}
}%
%  METADATA <id>52</id>
{\color[rgb]{0,0,0}\put(5070,-1866){\framebox(1149,712){}}
}%
%  METADATA <id>33</id>
{\color[rgb]{0,0,0}\put(6216,-1308){\vector( 1, 0){536}}
}%
%  METADATA <id>34</id>
{\color[rgb]{0,0,0}\put(6219,-1756){\vector( 1, 0){572}}
}%
%  METADATA <id>59</id>
{\color[rgb]{0,0,0}\put(2452,-872){\line( 1, 0){429}}
\put(2881,-872){\line( 0,-1){391}}
\put(2881,-1263){\vector( 1, 0){190}}
}%
%  METADATA <id>64</id>
{\color[rgb]{0,0,0}\put(2452,-2093){\line( 1, 0){429}}
\put(2881,-2093){\line( 0, 1){391}}
\put(2881,-1700){\vector( 1, 0){190}}
}%
%  METADATA <id>56</id>
{\color[rgb]{0,0,0}\put(4600,-1317){\vector( 1, 0){477}}
}%
%  METADATA <id>74</id>
{\color[rgb]{0,0,0}\put(4591,-1714){\vector( 1, 0){482}}
}%
%  METADATA <id>40</id>
{\color[rgb]{0,0,0}\put(718,-871){\vector( 1, 0){569}}
}%
%  METADATA <id>41</id>
{\color[rgb]{0,0,0}\put(729,-2097){\vector( 1, 0){569}}
}%
%  METADATA <id>7</id>
{\color[rgb]{0,0,0}\put(1299,-2450){\framebox(1149,712){}}
}%
%  METADATA <id>50</id>
{\color[rgb]{0,0,0}\put(1298,-1280){\framebox(1149,712){}}
}%
%  METADATA <id>84</id>
{\color[rgb]{0,0,0}\put(3831,-2857){\line(-1, 0){1939}}
\put(1892,-2857){\vector( 0, 1){401}}
}%
%  METADATA <id>87</id>
{\color[rgb]{0,0,0}\put(3813,-212){\line(-1, 0){1928}}
\put(1885,-212){\vector( 0,-1){355}}
}%
%  METADATA <id>88</id>
{\color[rgb]{0,0,0}\put(4000,-212){\line( 1, 0){1669}}
\put(5669,-212){\line( 0,-1){940}}
}%
%  METADATA <id>89</id>
{\color[rgb]{0,0,0}\put(4000,-2852){\line( 1, 0){1669}}
\put(5669,-2852){\line( 0, 1){980}}
}%
%  METADATA <id>90</id>
{\color[rgb]{0,0,0}\put(3910,190){\vector( 0,-1){317}}
}%
%  METADATA <id>93</id>
{\color[rgb]{0,0,0}\put(3910,-3255){\vector( 0, 1){317}}
}%
%  METADATA <id>14</id>
\put(3784,-1357){\makebox(0,0)[b]{\smash{{\SetFigFont{11}{13.2}{\rmdefault}{\mddefault}{\updefault}{\color[rgb]{0,0,0}Channel}%
}}}}
%  METADATA <id>15</id>
\put(3880,-1665){\makebox(0,0)[b]{\smash{{\SetFigFont{11}{13.2}{\rmdefault}{\mddefault}{\updefault}{\color[rgb]{0,0,0}$P(Y_t|X_t^a, X_t^b,S_t)$}%
}}}}
%  METADATA <id>46</id>
\put(5679,-1687){\makebox(0,0)[b]{\smash{{\SetFigFont{11}{13.2}{\rmdefault}{\mddefault}{\updefault}{\color[rgb]{0,0,0}$\psi(Y_{[n]},S_{[n]}^r)$}%
}}}}
%  METADATA <id>22</id>
\put(5623,-1440){\makebox(0,0)[b]{\smash{{\SetFigFont{11}{13.2}{\rmdefault}{\mddefault}{\updefault}{\color[rgb]{0,0,0}Decoder}%
}}}}
%  METADATA <id>35</id>
\put(6492,-1265){\makebox(0,0)[b]{\smash{{\SetFigFont{11}{13.2}{\rmdefault}{\mddefault}{\updefault}{\color[rgb]{0,0,0}$\hat{W}_a$}%
}}}}
%  METADATA <id>36</id>
\put(6497,-1704){\makebox(0,0)[b]{\smash{{\SetFigFont{11}{13.2}{\rmdefault}{\mddefault}{\updefault}{\color[rgb]{0,0,0}$\hat{W}_b$}%
}}}}
%  METADATA <id>16</id>
\put(2688,-823){\makebox(0,0)[b]{\smash{{\SetFigFont{11}{13.2}{\rmdefault}{\mddefault}{\updefault}{\color[rgb]{0,0,0}$X_t^a$}%
}}}}
%  METADATA <id>17</id>
\put(2688,-2318){\makebox(0,0)[b]{\smash{{\SetFigFont{11}{13.2}{\rmdefault}{\mddefault}{\updefault}{\color[rgb]{0,0,0}$X_t^b$}%
}}}}
%  METADATA <id>25</id>
\put(4798,-1291){\makebox(0,0)[b]{\smash{{\SetFigFont{11}{13.2}{\rmdefault}{\mddefault}{\updefault}{\color[rgb]{0,0,0}$Y_t$}%
}}}}
%  METADATA <id>37</id>
\put(4837,-1867){\makebox(0,0)[b]{\smash{{\SetFigFont{11}{13.2}{\rmdefault}{\mddefault}{\updefault}{\color[rgb]{0,0,0}$S_{t}^r$}%
}}}}
%  METADATA <id>44</id>
\put(2094,-470){\makebox(0,0)[b]{\smash{{\SetFigFont{11}{13.2}{\rmdefault}{\mddefault}{\updefault}{\color[rgb]{0,0,0}$S_{t}^a$}%
}}}}
%  METADATA <id>45</id>
\put(2094,-2718){\makebox(0,0)[b]{\smash{{\SetFigFont{11}{13.2}{\rmdefault}{\mddefault}{\updefault}{\color[rgb]{0,0,0}$S_{t}^b$}%
}}}}
%  METADATA <id>42</id>
\put(1017,-781){\makebox(0,0)[b]{\smash{{\SetFigFont{11}{13.2}{\rmdefault}{\mddefault}{\updefault}{\color[rgb]{0,0,0}$W_a$}%
}}}}
%  METADATA <id>43</id>
\put(1017,-1994){\makebox(0,0)[b]{\smash{{\SetFigFont{11}{13.2}{\rmdefault}{\mddefault}{\updefault}{\color[rgb]{0,0,0}$W_b$}%
}}}}
%  METADATA <id>2</id>
\put(1863,-860){\makebox(0,0)[b]{\smash{{\SetFigFont{11}{13.2}{\rmdefault}{\mddefault}{\updefault}{\color[rgb]{0,0,0}Encoder}%
}}}}
%  METADATA <id>3</id>
\put(1872,-1088){\makebox(0,0)[b]{\smash{{\SetFigFont{11}{13.2}{\rmdefault}{\mddefault}{\updefault}{\color[rgb]{0,0,0}$\phi_{t}^{(a)}(W_a,S_{[t]}^a)$}%
}}}}
%  METADATA <id>8</id>
\put(1854,-2047){\makebox(0,0)[b]{\smash{{\SetFigFont{11}{13.2}{\rmdefault}{\mddefault}{\updefault}{\color[rgb]{0,0,0}Encoder}%
}}}}
%  METADATA <id>9</id>
\put(1880,-2275){\makebox(0,0)[b]{\smash{{\SetFigFont{11}{13.2}{\rmdefault}{\mddefault}{\updefault}{\color[rgb]{0,0,0}$\phi_{t}^{(b)}(W_b,S_{[t]}^b)$}%
}}}}
%  METADATA <id>83</id>
\put(3800,-2905){\makebox(0,0)[lb]{\smash{{\SetFigFont{12}{14.4}{\rmdefault}{\mddefault}{\updefault}{\color[rgb]{0,0,0}$\bigoplus$}%
}}}}
%  METADATA <id>82</id>
\put(3800,-262){\makebox(0,0)[lb]{\smash{{\SetFigFont{12}{14.4}{\rmdefault}{\mddefault}{\updefault}{\color[rgb]{0,0,0}$\bigoplus$}%
}}}}
%  METADATA <id>94</id>
\put(4000,-3200){\makebox(0,0)[lb]{\smash{{\SetFigFont{12}{14.4}{\rmdefault}{\mddefault}{\updefault}{\color[rgb]{0,0,0}$N_b$}%
}}}}
%  METADATA <id>92</id>
\put(4000, 63){\makebox(0,0)[lb]{\smash{{\SetFigFont{12}{14.4}{\rmdefault}{\mddefault}{\updefault}{\color[rgb]{0,0,0}$N_a$}%
}}}}
\end{picture}
\caption{The multiple-access channel with causal noisy CSIT and noisy CSIR.}
\label{fig:mac-rec-noi}
\vspace{-0.2in}
\end{figure}

For every memoryless stationary team policy $\pi$ defined in (\ref{def:team-pol-shst}), let $\mathcal{R}_{NS}(\pi)$ denote the region of all rate pairs $R=(R_a,R_b)$ satisfying
\begin{eqnarray}
R_a &<& I(T^{a};Y|T^{b},S^r) \quad \label{eq:ra1-shst-rn}\\
R_b &<& I(T^{b};Y|T^{a},S^r) \quad  \label{eq:ra2-shst-rn}\\
R_a+R_b &<& I(T^a,T^{b};Y|S^r) \label{eq:ra3-shst-rn}
\end{eqnarray}
where $S^r$, $T^a$, $T^b$ and $Y$ are random variables taking values in ${\cal S}_r$, ${\cal T}_{a}$, ${\cal T}_{b}$ and $\cal Y$, respectively and whose joint probability distribution factorizes as
\begin{eqnarray}
P_{S^r,T^a,T^b,Y}(s^r,t^a,t^b,y)=P_{S^r}(s^r)P_{Y|T^a,T^b,S^r}(y|t^a,t^b,s^r)\pi_{T^a}(t^a)\pi_{T^b}(t^b)\label{eq:joi-dist-shst-rn}.
\end{eqnarray}
Let ${\cal \tilde{C}}_{IN}:=\overline{co}\bigg(\bigcup_{\pi}\mathcal{R}_{NS}(\pi)\bigg)$ denotes the closure of the convex hull of the rate regions $\mathcal{R}_{NS}(\pi)$ given by (\ref{eq:ra1-shst-rn})-(\ref{eq:ra3-shst-rn}) associated to all possible memoryless team policies as defined in (\ref{eq:team-pol-shst}).

\begin{remark}\label{rem:eqiuv-cha}
It should be observed that once we have the Markov property (\ref{eq:noi-recnoi-mrk}), the setup with noisy CSIR described above is no more general then the setup with complete CSIR. This is because, one can define an equivalent channel with conditional output probability
\begin{eqnarray}
P^{eq}_{Y|X^a, X^b, S^r}(y|x^a,x^b,s^r)=\sum_{s \in {\cal S}}P_{Y|X^a,X^b,S}(y|x^a,x^b,s)P_{S|S^r}(s|s^r)\label{eq:eqiv-chn}
\end{eqnarray}
which follows from (\ref{eq:star-iid}). With (\ref{eq:eqiv-chn}), the noisy CSIR problem  reduces to the complete CSIR problem since we can now define a new channel with state $S^r$ and
\begin{eqnarray}
P_{S_{t}^{a}, S_{t}^{b},S_t^r}(s_{t}^{a}, s_{t}^{b},s_t^r)=P_{S_{t}^{a}|S_t^r}(s_{t}^{a}|s_t^r)P_{S_{t}^{b}|S_t^r}(s_{t}^{b}|s_t^r)P_{S_t^r}(s_t^r)\nonumber.
\end{eqnarray}
Hence, the proofs of the corollaries below follow directly from the complete CSIR case.
\end{remark}

\begin{corollary}[Inner Bound to ${\cal \tilde{C}}_{NS}$]\label{cor:inbnd-cau-rn}
 ${\cal \tilde{C}}_{IN}\subseteq {\cal \tilde{C}}_{NS}$.
\end{corollary}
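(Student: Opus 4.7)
The plan is to invoke the equivalent-channel reduction spelled out in Remark \ref{rem:eqiuv-cha} and then apply Theorem \ref{theo:inbnd-cau-fl} as a black box. The idea is that under (\ref{eq:noi-recnoi-mrk}), the true state $S_t$ appears only through the channel kernel and is conditionally independent of $(S_t^a,S_t^b)$ given $S_t^r$; hence, after marginalizing $S_t$ out, one obtains a memoryless MAC whose state is $S_t^r$, with transition probability $P^{eq}_{Y|X^a,X^b,S^r}$ defined in (\ref{eq:eqiv-chn}). This fits the framework of Section \ref{sec:causal} with $S^r$ in the role previously played by $S$.

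First, I would check that the joint law of the new state triple $(S_t^a,S_t^b,S_t^r)$ satisfies the asymmetric Markov structure (\ref{eq:noi-recfl-mrk}) required by the complete-CSIR setting; this is immediate from (\ref{eq:noi-recnoi-mrk}) by marginalizing $S_t$, and the i.i.d.\ product form (\ref{eq:sta-iid}) over time follows directly from (\ref{eq:star-iid}). Next, I would verify that the Shannon-strategy mutual informations in (\ref{eq:ra1-shst-rn})--(\ref{eq:ra3-shst-rn}), computed with the joint distribution (\ref{eq:joi-dist-shst-rn}), coincide with those in (\ref{eq:ra1-shst-fl})--(\ref{eq:ra3-shst-fl}) for the equivalent channel, since
\begin{equation*}
P_{Y|T^a,T^b,S^r}(y|t^a,t^b,s^r)=\sum_{s\in\mathcal{S}}P_{Y|T^a,T^b,S}(y|t^a,t^b,s)\,P_{S|S^r}(s|s^r)=P^{eq}_{Y|T^a,T^b,S^r}(y|t^a,t^b,s^r).
\end{equation*}
Thus $\mathcal{R}_{NS}(\pi)$ for the noisy-CSIR channel is exactly $\mathcal{R}_{FS}(\pi)$ for the equivalent complete-CSIR channel, for every memoryless stationary team policy $\pi$.

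Applying Theorem \ref{theo:inbnd-cau-fl} to the equivalent channel then gives $\mathcal{R}_{NS}(\pi)\subseteq\tilde{\cal C}_{NS}$ for every $\pi$; taking the union over $\pi$ and the closure of the convex hull (which is preserved by the one-to-one correspondence between the two setups) yields $\tilde{\cal C}_{IN}\subseteq\tilde{\cal C}_{NS}$. The only step requiring any care is the verification of the equivalence of mutual information expressions under the marginalization in (\ref{eq:eqiv-chn}); everything else is a direct transcription of the achievability argument (codebook generation over Shannon strategies, joint-typicality decoding using $(Y_{[n]},S_{[n]}^r)$ instead of $(Y_{[n]},S_{[n]})$, and the union-bound error analysis) already carried out in the proof of Theorem \ref{theo:inbnd-cau-fl}. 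No genuinely new obstacle arises, which is precisely the point of the equivalent-channel observation in Remark \ref{rem:eqiuv-cha}.
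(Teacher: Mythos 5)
Your proposal is correct and follows exactly the route the paper takes: Remark \ref{rem:eqiuv-cha} reduces the noisy-CSIR setup to the complete-CSIR setup via the equivalent channel $P^{eq}_{Y|X^a,X^b,S^r}$, after which Theorem \ref{theo:inbnd-cau-fl} applies verbatim with $S^r$ in the role of $S$. One small caution: the intermediate identity you wrote, $P_{Y|T^a,T^b,S^r}=\sum_s P_{Y|T^a,T^b,S}\,P_{S|S^r}$, should be read with the conditional independence of $(S^a,S^b)$ and $S$ given $S^r$ in mind (as in (\ref{eq:noi-recnoi-mrk})), not the other way around; but this does not affect the validity of the reduction.
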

\begin{corollary}[Outer Bound to ${\cal \tilde{C}}_{NS}$]\label{cor:outbnd-cau-rn}
${\cal \tilde{C}}_{NS}\subseteq {\cal \tilde{C}}_{OUT}$, where
\begin{eqnarray}
{\cal \tilde{C}}_{OUT}:=\biggl\{(R_a,R_b)\in \mathds{R}^{+}\times\mathds{R}^{+}: R_a+R_b\leq\sup_{\pi_{T^a}(t^a)\pi_{T^b}(t^b)}I(T^a,T^b;Y|S^r)\biggr\}\nonumber.
\end{eqnarray}
\end{corollary}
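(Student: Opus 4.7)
The plan is to prove Corollary \ref{cor:outbnd-cau-rn} by reducing the noisy CSIR setup to the complete CSIR setup already handled by Theorem \ref{theo:outbnd-cau-fl}, using the equivalent channel construction outlined in Remark \ref{rem:eqiuv-cha}. Since the outer bound is only a sum-rate bound, no refinement of the converse technique is needed; the reduction does all the work.

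First, I would treat $S^r_t$ as the \emph{effective} channel state of a new memoryless FS-MAC, replacing the role that $S_t$ played in Section \ref{sec:causal}. Define the equivalent channel transition kernel
\begin{eqnarray}
P^{eq}_{Y|X^a,X^b,S^r}(y|x^a,x^b,s^r) := \sum_{s \in \mathcal{S}} P_{Y|X^a,X^b,S}(y|x^a,x^b,s)\, P_{S|S^r}(s|s^r),\nonumber
\end{eqnarray}
which is well-defined by (\ref{eq:star-iid}). The key structural observation is that the joint law (\ref{eq:noi-recnoi-mrk}) of $(S_t, S^a_t, S^b_t, S^r_t)$ implies
\begin{eqnarray}
P_{S^a_t, S^b_t, S^r_t}(s^a_t, s^b_t, s^r_t) = P_{S^a_t | S^r_t}(s^a_t | s^r_t)\, P_{S^b_t | S^r_t}(s^b_t | s^r_t)\, P_{S^r_t}(s^r_t),\nonumber
\end{eqnarray}
which is exactly the conditional-independence structure (\ref{eq:noi-recfl-mrk}) with $S^r$ playing the role of the ``true'' state. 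Moreover, using (\ref{eq:ch-recfl}) together with the fact that, conditional on $S^r_t$, the state $S_t$ is independent of $(W_a, W_b, S^a_{[t]}, S^b_{[t]}, X^a_t, X^b_t)$, one checks that the equivalent channel is memoryless with state process $\{S^r_t\}$, and that the analogue of (\ref{eq:sta-iid}) holds for the tuple $(S^a_{[n]}, S^b_{[n]}, S^r_{[n]}, W_a, W_b)$ with respect to this new state.

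With this reformulation, an $(n, 2^{nR_a}, 2^{nR_b})$ code for the original noisy-CSIR problem is exactly an $(n, 2^{nR_a}, 2^{nR_b})$ code for an equivalent problem that \emph{has complete CSIR} (namely $S^r$), with the same probability of error. Hence $\tilde{\mathcal{C}}_{NS}$ equals the capacity region $\mathcal{C}_{FS}$ of the equivalent memoryless FS-MAC $(P^{eq}, S^r)$ with noisy CSITs $(S^a, S^b)$. Applying Theorem \ref{theo:outbnd-cau-fl} verbatim to this equivalent channel yields
\begin{eqnarray}
R_a + R_b \;\leq\; \sup_{\pi_{T^a}(t^a)\pi_{T^b}(t^b)} I(T^a, T^b; Y | S^r),\nonumber
\end{eqnarray}
where the mutual information is evaluated under the joint distribution (\ref{eq:joi-dist-shst-rn}) induced by $P^{eq}$, giving $\tilde{\mathcal{C}}_{NS} \subseteq \tilde{\mathcal{C}}_{OUT}$.

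I do not expect any serious obstacle: the whole argument hinges on verifying that the Markov chain assumption (\ref{eq:noi-recnoi-mrk}) is precisely what is needed so that $(S^a, S^b)$ are conditionally independent given $S^r$, mirroring (\ref{eq:noi-recfl-mrk}), and that the equivalent channel is still memoryless in $S^r$. The only mild bookkeeping step is justifying that the product-form lemma (Lemma \ref{lem:fact-cau-fl}) transports to the new setting --- but since its proof only uses (\ref{eq:noi-recfl-mrk}), (\ref{eq:sta-iid}) and (\ref{eq:ch-recfl}), all of which have direct analogues after substituting $S^r$ for $S$ and $P^{eq}$ for $P$, no new idea is required.
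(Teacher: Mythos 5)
Your proposal is correct and takes essentially the same route as the paper: the paper handles this corollary via Remark~\ref{rem:eqiuv-cha}, which defines the very same equivalent channel $P^{eq}$ with effective state $S^r$, notes that (\ref{eq:noi-recnoi-mrk}) gives the required conditional-independence factorization of $(S^a,S^b,S^r)$, and concludes that the noisy-CSIR corollaries follow directly from the complete-CSIR results (here, Theorem~\ref{theo:outbnd-cau-fl}). Your extra remarks about memorylessness of $P^{eq}$ and the transportability of Lemma~\ref{lem:fact-cau-fl} are exactly the small checks the paper leaves implicit, and they go through.
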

\begin{corollary}\label{cor:sumra-cau-rn}
\begin{eqnarray}
{\cal \tilde{C}}_{NS}^{\sum}=\sup_{\pi_{T^a}(t^a)\pi_{T^b}(t^b)}I(T^a,T^b;Y|S^r)\label{eq:sumra-cau-rn}.
\end{eqnarray}
\end{corollary}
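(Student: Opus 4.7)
The plan is to piggy-back on the reduction described in Remark \ref{rem:eqiuv-cha} together with the sum-rate argument already developed for the complete CSIR case in Corollary \ref{cor:sumra-cau-fl}. Concretely, I would first invoke the equivalent-channel construction: because of the Markov factorization (\ref{eq:noi-recnoi-mrk}), the channel with true state $S$, noisy encoder CSI $(S^a,S^b)$, and noisy decoder CSI $S^r$ is statistically equivalent to a new memoryless FS-MAC whose state is $S^r$, whose transition law is the $P^{eq}_{Y|X^a,X^b,S^r}$ of (\ref{eq:eqiv-chn}), and whose asymmetric encoder observations $(S^a,S^b)$ are conditionally independent given $S^r$. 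In this reduced model the decoder has complete CSI (namely $S^r$), so Theorems \ref{theo:inbnd-cau-fl} and \ref{theo:outbnd-cau-fl} apply verbatim with $S$ replaced by $S^r$, yielding exactly Corollaries \ref{cor:inbnd-cau-rn} and \ref{cor:outbnd-cau-rn}.

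Given those two corollaries, the proof of the sum-rate identity becomes the same algebraic fact used for Corollary \ref{cor:sumra-cau-fl}. First I would fix an arbitrary product memoryless stationary team policy $\pi_{T^a}(t^a)\pi_{T^b}(t^b)$ and evaluate the three bounds defining $\mathcal{R}_{NS}(\pi)$. Since $T^a$ and $T^b$ are independent of each other and of $S^r$ under (\ref{eq:joi-dist-shst-rn}), the chain-rule expansions give
\begin{eqnarray}
I(T^a;Y|T^b,S^r)&=&H(T^a)-H(T^a|T^b,Y,S^r),\nonumber\\
I(T^b;Y|T^a,S^r)&=&H(T^b)-H(T^b|T^a,Y,S^r),\nonumber\\
I(T^a,T^b;Y|S^r)&=&H(T^a)+H(T^b)-H(T^a|T^b,Y,S^r)-H(T^b|Y,S^r).\nonumber
\end{eqnarray}
Since conditioning reduces entropy, $H(T^b|Y,S^r)\geq H(T^b|T^a,Y,S^r)$, and hence the sum of the first two individual bounds is at least the joint bound. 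Therefore the sum-rate constraint is the binding one within $\mathcal{R}_{NS}(\pi)$, and the corner point with $R_a+R_b = I(T^a,T^b;Y|S^r)$ lies in $\mathcal{R}_{NS}(\pi)\subseteq{\cal \tilde C}_{IN}$. Taking the supremum over product distributions establishes the achievability side $\tilde{\cal C}_{NS}^{\sum}\geq \sup_{\pi_{T^a}\pi_{T^b}} I(T^a,T^b;Y|S^r)$.

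The converse direction is immediate from Corollary \ref{cor:outbnd-cau-rn}, which already yields $R_a+R_b\leq \sup_{\pi_{T^a}\pi_{T^b}} I(T^a,T^b;Y|S^r)$ for every achievable pair $(R_a,R_b)$. Combining the two directions gives (\ref{eq:sumra-cau-rn}). I do not foresee any significant obstacle here: the only nontrivial ingredient is the equivalent-channel reduction of Remark \ref{rem:eqiuv-cha}, which is purely a relabeling once the Markov condition (\ref{eq:noi-recnoi-mrk}) is in force, and after that the argument is a verbatim transcription of the proof of Corollary \ref{cor:sumra-cau-fl}.
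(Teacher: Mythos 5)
Your proposal is correct and follows exactly the route the paper takes: invoke the equivalent-channel reduction of Remark \ref{rem:eqiuv-cha} so that the noisy-CSIR problem becomes the complete-CSIR problem with state $S^r$, then apply Theorems \ref{theo:inbnd-cau-fl} and \ref{theo:outbnd-cau-fl} and repeat the algebraic argument from Corollary \ref{cor:sumra-cau-fl} showing that the sum-rate constraint in $\mathcal{R}_{NS}(\pi)$ is always active. The paper states this reduction in Remark \ref{rem:eqiuv-cha} and then asserts the corollaries follow directly, which is precisely what you have spelled out.
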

This corollary indicates the fact that even if we have noisy CSIR, Shannon strategies are still optimal for the sum-rate capacity as long as conditioning on the past CSIR gives a product form on these strategies.

\subsection{CSITs are Deterministic Functions of CSIR:Causal and Non-Causal Cases}\label{subsec:csit-fnc-rn}
Since the computation of optimal strategies in Corollaries \ref{cor:sumra-cau-fl} and \ref{cor:sumra-cau-rn} requires an optimization over extended input alphabets, it is worth to consider the case in which optimization can be performed over the input alphabets ${\cal X}_a$ and ${\cal X}_b$. The usual approach is to assume that the transmitters have access to partial (through a deterministic function such as a quantizer) state information at the decoder. In particular, let $S_t^i=f^i(S_t^r)$, where $f^i:{\cal S}_r\rightarrow {\cal S}_{i}$, $i=\{a,b\}$.

The equivalent channel defined in (\ref{eq:eqiv-chn}) shows that the causal setup of this problem is no more general than \cite{como-yuksel}. Hence, the main contribution of this subsection is to provide a single letter characterization for the capacity region for the non-causal case. The expression shows that the result of \cite{como-yuksel} also holds for non-causal coding.

We keep the channel codes definition identical for the causal and non-causal cases, except for the non-causal case we have; $\phi_{t}^{(i)}: \mathcal{S}_{i}^n \times \mathcal{W}_i  \rightarrow {\cal X}_i^n$, $i=\{a,b\}$, $t=1,\cdots,n$.

Let ${\cal C}^{Q}_{NS}$ and ${\cal C}^{Q'}_{NS}$ denote the capacity region for the causal and non-causal cases, respectively. We need to modify Definition \ref{def:team-pol-shst} in order to take the current CSI into account.
\begin{definition}\label{def:team-pol-inp-ncas}
A memoryless stationary (in time) team policy is a family
\begin{eqnarray}
\bar{\Pi}=\left\{\bar{\pi}=\left(\pi_{X^a|S^a}(\cdot|f^a(s^r)),\pi_{X^b|S^b}(\cdot|f^b(s^r))\right)\in {\cal P}({\cal X}_a)\times{\cal P}({\cal X}_b)\right\}.\label{eq:team-pol-inp-ncas}
\end{eqnarray}
\end{definition}
For every $\bar{\pi}$ defined in (\ref{eq:team-pol-inp-ncas}), $\mathcal{R}_{NS}^Q(\bar{\pi})$ denotes the region of all rate pairs $R=(R_a,R_b)$ satisfying{\allowdisplaybreaks
\begin{eqnarray}
R_a &<& I(X^{a};Y|X^{b},S^r) \quad \label{eq:ra1-inp-ncas}\\
R_b &<& I(X^{b};Y|X^{a},S^r) \quad  \label{eq:ra2-inp-ncas}\\
R_a+R_b &<& I(X^a,X^{b};Y|S^r) \label{eq:ra3-inp-ncas}
\end{eqnarray}
where $S^r$, $X^a$, $X^b$ and $Y$ are random variables taking values in ${\cal S}_r$, ${\cal X}_{a}$, ${\cal X}_{b}$ and $\cal Y$, respectively, and whose joint probability distribution factorizes as
\begin{eqnarray}
&&\hspace{-0.5in}P_{S^r,X^a,X^b,Y}(s^r,x^a,x^b,y)\nonumber\\
&=&P_{S^r}(s^r)P_{Y|X^a,X^b,S^r}(y|x^a,x^b,s^r)\pi_{X^a|S^a}(x^a|f^a(s^r))\pi_{X^b|S^b}(x^b|f^b(s^r)).\label{eq:joi-dist-inp-ncas}
\end{eqnarray}}
Let $\overline{co}\bigg(\bigcup_{\bar{\pi}}\mathcal{R}_{NS}^Q(\bar{\pi})\bigg)$ denote the closure of the convex hull of the rate regions $\mathcal{R}_{NS}^Q(\bar{\pi})$ given by (\ref{eq:ra1-inp-ncas})-(\ref{eq:ra3-inp-ncas}) associated to all possible memoryless stationary team polices as defined in (\ref{eq:team-pol-inp-ncas}).
\begin{theorem}\label{theo:main-inp-ncas}
${\cal C}^{Q}_{NS}={\cal C}^{Q'}_{NS}=\overline{co}\bigg(\bigcup_{\bar{\pi}}\mathcal{R}_{NS}^Q(\bar{\pi})\bigg)$.
\end{theorem}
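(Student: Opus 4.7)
The plan is to reduce to the equivalent-channel formulation of Remark \ref{rem:eqiuv-cha}, in which $S^r$ is effectively known at the decoder and $S^a=f^a(S^r)$, $S^b=f^b(S^r)$, and then split the proof into an immediate achievability and the substantive non-causal converse. Achievability follows because every causal code is also a non-causal code, so the causal characterization ${\cal C}^Q_{NS}=\overline{co}\bigg(\bigcup_{\bar{\pi}}\mathcal{R}^Q_{NS}(\bar{\pi})\bigg)$, which is the content of \cite{como-yuksel} applied to the equivalent channel, yields $\overline{co}\bigg(\bigcup_{\bar{\pi}}\mathcal{R}^Q_{NS}(\bar{\pi})\bigg)\subseteq {\cal C}^{Q'}_{NS}$. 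All the work sits in proving the reverse inclusion.

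For the non-causal converse I would start from Fano's inequality in conditional form to obtain, up to a vanishing term $\chi(\epsilon)$, the usual three bounds $R_a+R_b\leq \frac{1}{n}I(W_a,W_b;Y_{[n]}|S_{[n]}^r)$, $R_a\leq \frac{1}{n}I(W_a;Y_{[n]}|W_b,S_{[n]}^r)$, and symmetrically for $R_b$. Expanding by the chain rule in $t$ and using that $X_t^a=\phi_t^{(a)}(W_a,f^a(S_{[n]}^r))$ and $X_t^b=\phi_t^{(b)}(W_b,f^b(S_{[n]}^r))$ are deterministic given $(W_a,W_b,S_{[n]}^r)$, I can freely insert the $X_t^i$ into the conditioning. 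The same calculation as in (\ref{eq:conv-cau-fl-6a}) reduces the conditional distribution of $Y_t$ given $(X_t^a,X_t^b,W_a,W_b,S_{[n]}^r,Y_{[t-1]})$ to the equivalent-channel law $P^{eq}(y|X_t^a,X_t^b,S_t^r)$, and standard entropy monotonicity applied to the ``numerator'' entropies yields, with $U_t:=S^r_{[n]\setminus t}$, the per-letter bounds
\[
R_a+R_b\leq\frac{1}{n}\sum_{t}I(X_t^a,X_t^b;Y_t|S_t^r,U_t)+\chi(\epsilon),\qquad R_a\leq\frac{1}{n}\sum_t I(X_t^a;Y_t|X_t^b,S_t^r,U_t)+\chi(\epsilon),
\]
and analogously for $R_b$.

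The crux of the argument, playing the role of Lemma \ref{lem:fact-cau-fl} in the non-causal setting, is a product-form identity
\[
P_{X_t^a,X_t^b|S_t^r,U_t}(x^a,x^b|s^r,u)=\pi^{u,a}(x^a|f^a(s^r))\,\pi^{u,b}(x^b|f^b(s^r))
\]
for suitable stochastic kernels $\pi^{u,a},\pi^{u,b}$. This is where conditioning on the \emph{whole} complement $U_t$ rather than only on the past (as in \cite{como-yuksel}) is essential: once $U_t$ is fixed, the full sequence $S_{[n]}^r$ is determined, so the encoders become deterministic functions of $(W_a,W_b)$ alone, and $W_a\perp W_b$ combined with the observation that each $X_t^i$ uses $S_{[n]}^r$ only through $f^i$ delivers both the product structure and the requisite single-letter dependence of the $i$-th factor on $f^i(s^r)=s^i$. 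I expect this factorization argument to parallel (\ref{eq:conv-fact-fl-1})-(\ref{eq:conv-fact-fl-5}), with the summations now running over the full complementary CSIR string instead of only its past segment.

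Finally I would introduce a time-sharing variable $J$ uniform on $\{1,\dots,n\}$ independent of everything, set $U:=(J,U_J)$, and average the per-$t$ bounds to the single-letter inequalities $R_a+R_b\leq I(X^a,X^b;Y|S^r,U)+\chi(\epsilon)$, $R_a\leq I(X^a;Y|X^b,S^r,U)+\chi(\epsilon)$, and likewise for $R_b$. Conditional on $U=u$ the joint law then has precisely the factorized shape (\ref{eq:joi-dist-inp-ncas}) with product policy $\bar{\pi}^u$, so the three inequalities identify $(R_a,R_b)$, up to $\chi(\epsilon)$, as a convex combination of rate pairs drawn from the regions $\{\mathcal{R}^Q_{NS}(\bar{\pi}^u)\}$, hence as a point of $\overline{co}\bigg(\bigcup_{\bar{\pi}}\mathcal{R}^Q_{NS}(\bar{\pi})\bigg)$. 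Sending $\epsilon\to 0$ closes the converse. The Fano/chain-rule and time-sharing steps are entirely standard once the factorization is in hand; the genuine difficulty is the product-form step of the third paragraph.
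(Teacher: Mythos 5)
Your proposal matches the paper's proof in essentially every respect: achievability is obtained by noting any causal code is a non-causal code and invoking the causal result of \cite{como-yuksel} on the equivalent channel of Remark \ref{rem:eqiuv-cha}; the converse (Appendix \ref{app:0}) applies Fano and the chain rule with $S_{[n]}^r$ in the conditioning, inserts the deterministic $\mathbf{X}_{[n]}$, and establishes precisely the product-form factorization (your third paragraph) by conditioning on the complementary CSIR string $(\mathbf{\mu_p},\mathbf{\mu_f})=S^r_{[n]\setminus t}$, which fixes $S_{[n]}^r$ once $s_t^r$ is also given, rendering each $X_t^i$ a deterministic function of the independent $W_i$ and of $s_t^r$ only through $f^i(s_t^r)$. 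The only cosmetic difference is that you make the time-sharing variable $J$ explicit, whereas the paper averages directly with the weights $\alpha_{\mathbf{\mu_{p,f}}}$; the two are equivalent.
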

For the achievability proof, see \cite[Section III]{como-yuksel} and observe that any rate which is achievable with causal CSI is also achievable with non-causal CSI. For the converse proof of the non-causal case see Appendix \ref{app:0}. The proof for the non-causal case is realized by observing that there is no loss of optimality if not only the past, as shown in \cite{como-yuksel}, but also the future CSI is ignored given that the receiver is provided with complete CSI.

It should be noted that the causal result can be thought of as an extension of \cite[Propositon 1]{caire-shamai} to a multi-user case and the non-causal case is also considered in \cite[Theorem 3]{Cemal} where inner and outer bounds are provided.

\begin{remark}\label{rem:ncas-1}
Following \cite[Remark 1]{como-yuksel}, it is worth to emphasize that for the above argument to work, it is crucial that the past and future state realizations only affect the team policies and that the state information available at the decoder contains the one available at the two transmitters.
\end{remark}
In particular, the latter fact plays a role in the converse part of the coding theorem by enabling the decoder to ignore the past channel outputs, given that the channel is memoryless, without any loss of optimality.

Let us investigate this remark via considering the setup in Section \ref{sec:causal} in order to observe that for the non-causal case the optimality of Shannon strategies are not guaranteed. Recall that, we have
\begin{eqnarray}
I(\mathbf{W};Y_{[n]},S_{[n]}){\leq}\sum_{t=1}^{n}\left[H(Y_t|S_{[n]},Y_{[t-1]})-H(Y_t|\mathbf{W},S_{[n]},Y_{[t-1]},\mathbf{T}_t)\right]\label{eq:cas-sumrate}
\end{eqnarray}
where $\mathbf{T}_t:=(T_t^a,T_t^b)$. Consider now the right hand side of (\ref{eq:cas-sumrate}) and observe that
\begin{eqnarray}
&&\hspace{-0.5in}P_{Y_t|\mathbf{W},S_{[n]},Y_{[t-1]},T_t^a,T_t^b}(y_t|\mathbf{w},s_{[n]},y_{[t-1]},t_t^a,t_t^b)\nonumber\\
&&=\sum_{s_t^a,s_t^b}P_{Y_t|S_t,S_t^a,S_t^b,T_t^a,T_t^b}(y_t|s_t,s_t^a,s_t^b,t_t^a,t_t^b) P_{S_t^a,S_t^b|Y_{[t-1]},S_t}(s_t^a,s_t^b|y_{[t-1]},s_t),\nonumber
%&&\hspace{-0.3in}=\sum_{s_t,s_t^a,s_t^b}P(y_t|s_t,s_t^a,s_t^b,t_t^a,t_t^b)P(s_t,s_t^a,s_t^b|s_t,t_t^a,t_t^b)\nonumber\\
%&&=P_{Y_t|S_t,T_t^a,T_t^b}(y_t|s_t,t_t^a,t_t^b)\label{eq-prob}
\end{eqnarray}
and therefore, the past channel outputs cannot be ignored. Recall that in the causal setup the conditional probability $P_{S_t^a,S_t^b|Y_{[t-1]},S_t}(s_t^a,s_t^b|y_{[t-1]},s_t)$ is independent of past channel outputs.
\section{Asymmetric Delayed, Asymmetric Noisy CSIT and Complete CSIR}\label{sec:delay}
Consider the problem defined in Section \ref{sec:causal} where the two encoders have accesses to asymmetrically delayed, where delays are $d_a\geq 1$ and $d_b\geq1$, respectively, and noisy versions of the state information $S_t$ at each time $t\geq1$, modeled by $S_{t-d_a}^{a} \in {\cal S}_{a}$, $S_{t-d_b}^{b} \in {\cal S}_{b}$, respectively. The rest of the channel model is identical and hence, (\ref{eq:noi-recfl-mrk}), (\ref{eq:sta-iid}) and (\ref{eq:ch-recfl}) are valid throughout the section. We also assume that $S_t$ is fully available at the receiver. A code can be defined as in Definition \ref{def:maccode-causal}, except now
\begin{center}
$\phi_{t}^{(a)}: \mathcal{S}_{a}^{t-d_a} \times \mathcal{W}_a  \rightarrow {\cal X}_a, \enspace t=1,2,...n$;\\
\end{center}
\begin{center}
$\phi_{t}^{(b)}: \mathcal{S}_{b}^{t-d_b} \times \mathcal{W}_b \rightarrow {\cal X}_b, \enspace t=1,2,...n$.\footnote{Obviously, when $d_l\geq t, \enspace l=a,b$ then $X_t^a=\phi_t^{(a)}(W_a)$ and $X_t^b=\phi_t^{(b)}(W_b)$.}\\
\end{center}
Let ${\cal C}_{DN}$ denote the capacity region of the delayed setup.

In the main result of this section the team policies are composed of probability distributions on the channel inputs rather than Shannon strategies.
\begin{definition}\label{def:team-pol-inp}
A memoryless stationary (in time) team policy is a family
\begin{eqnarray}
\tilde{\Pi}=\left\{\tilde{\pi}=\left(\pi_{X^a}(\cdot),\pi_{X^b}(\cdot)\right)\in {\cal P}({\cal X}^a)\times{\cal P}({\cal X}^b)\right\}\label{eq:team-pol-inp}.
%\pi=\{\pi_{T^i}(\cdot)\in {\cal P}({\cal T}^i)| i \in \{a,b\}\}\label{eq-team-pol}
\end{eqnarray}
\end{definition}

For every memoryless stationary team policy $\tilde{\pi}$, $\mathcal{R}_{DN}(\tilde{\pi})$ denotes the region of all rate pairs $R=(R_a,R_b)$ satisfying
\begin{eqnarray}
R_a &<& I(X^{a};Y|X^{b},S) \quad \label{eq:ra1-inp}\\
R_b &<& I(X^{b};Y|X^{a},S) \quad  \label{eq:ra2-inp}\\
R_a+R_b &<& I(X^a,X^{b};Y|S) \label{eq:ra3-inp}
\end{eqnarray}
where $S$, $X^a$, $X^b$ and $Y$ are random variables taking values in ${\cal S}$, ${\cal X}^{a}$, ${\cal X}^{b}$ and $\cal Y$, respectively and whose joint probability distribution factorizes as
\begin{eqnarray}
P_{S,X^a,X^b,Y}(s,x^a,x^b,y)=P_S(s)P_{Y|X^a,X^b,S}(y|x^a,x^b,s)\pi_{X^a}(x^a)\pi_{X^b}(x^b)\label{eq:joi-dist-inp}.
\end{eqnarray}
Let $\overline{co}\bigg(\bigcup_{\tilde{\pi}}\mathcal{R}_{DN}(\tilde{\pi})\bigg)$ denotes the closure of the convex hull of the rate regions $\mathcal{R}_{DN}(\tilde{\pi})$ given by (\ref{eq:ra1-inp})-(\ref{eq:ra3-inp}) associated to all possible memoryless stationary team polices as defined in (\ref{eq:team-pol-inp}).
\begin{theorem}\label{theo:main-inp}
${\cal C}_{DN}=\overline{co}\bigg(\bigcup_{\tilde{\pi}}\mathcal{R}_{DN}(\tilde{\pi})\bigg)$.
\end{theorem}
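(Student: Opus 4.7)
I would split the proof into achievability and converse, following closely the template of Theorems \ref{theo:inbnd-cau-fl} and \ref{theo:outbnd-cau-fl} but with two simplifications stemming from $d_a, d_b \ge 1$: (i) the inputs $X_t^a, X_t^b$ no longer depend on the current noisy CSI, so Shannon strategies are unnecessary and team policies live directly on $\mathcal{P}(\mathcal{X}^a) \times \mathcal{P}(\mathcal{X}^b)$; and (ii) the conditional product form on the inputs given past CSI will allow a tight single-letter converse for every rate constraint, not just the sum rate.

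For achievability, I would fix $\tilde{\pi} = (\pi_{X^a}, \pi_{X^b})$ and generate codebooks $\{x_{[n],w_a}^a\}$, $\{x_{[n],w_b}^b\}$ i.i.d.\ according to $\prod_t \pi_{X^a}$ and $\prod_t \pi_{X^b}$, revealed to the decoder. The encoders simply ignore their delayed noisy CSI and transmit $x_{t,w_a}^a$ and $x_{t,w_b}^b$ (the justification being that delayed information about an i.i.d.\ state is useless for the next symbol). The decoder, with access to $(Y_{[n]}, S_{[n]})$, looks for the unique $(w_a, w_b)$ jointly $\epsilon$-typical with $(x_{[n], w_a}^a, x_{[n], w_b}^b, Y_{[n]}, S_{[n]})$. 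The error analysis mirrors that of Theorem \ref{theo:inbnd-cau-fl} verbatim with Shannon strategies replaced by direct inputs, and the union bound over the three classes of incorrect-pair events gives conditions (\ref{eq:ra1-inp})--(\ref{eq:ra3-inp}) under $\tilde{\pi}$; time-sharing yields $\overline{co}(\bigcup_{\tilde{\pi}} \mathcal{R}_{DN}(\tilde{\pi}))$.

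For the converse, I would apply Fano's inequality in its three variants to get $R_a$, $R_b$, $R_a+R_b$ bounded by normalized versions of $I(W_a; Y_{[n]}, S_{[n]}|W_b)$, $I(W_b; Y_{[n]}, S_{[n]}|W_a)$, and $I(\mathbf{W}; Y_{[n]}, S_{[n]})$ respectively, modulo $O(\epsilon)$. Using $W_a \perp (W_b, S_{[n]})$ and the i.i.d.\ state assumption, I would chain-rule each term into $\sum_t I(\,\cdot\,;Y_t \mid \,\cdot\,, Y_{[t-1]}, S_{[n]})$ and then drop the future state $S_{[t+1,n]}$ from the conditioning (by independence from everything at times $\le t$). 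Next, I would condition on $S_{[t-1]} = \mu$ and state the analogue of Lemma \ref{lem:fact-cau-fl}: given $\mu$, the joint distribution $P(X_t^a, X_t^b, S_t, Y_t \mid S_{[t-1]} = \mu)$ factorizes precisely as (\ref{eq:joi-dist-inp}) with an induced product team policy $\pi^{t,\mu} \in \tilde{\Pi}$, defined by marginalizing over $S_{[t-1]}^a$ and $S_{[t-1]}^b$ exactly as in (\ref{eq:conv-fact-fl-1})--(\ref{eq:conv-fact-fl-2}). The three ingredients are $d_a, d_b \ge 1$ (so $X_t^a$ and $X_t^b$ are functions of $(W_a, S_{[t-1]}^a)$ and $(W_b, S_{[t-1]}^b)$ respectively), the noise Markov property (\ref{eq:noi-recfl-mrk}) giving $S_{[t-1]}^a \perp S_{[t-1]}^b \mid S_{[t-1]}$, and $W_a \perp W_b$ both independent of the state processes. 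With this product form, and using the analogue of (\ref{eq:conv-cau-fl-6a}) so that $Y_t$ is conditionally distributed as $P_{Y|X^a,X^b,S}$ whenever $(X_t^a, X_t^b, S_t)$ appears in the conditioning, I would obtain per-$(t,\mu)$ single-letter bounds $I(W_a; Y_t \mid W_b, Y_{[t-1]}, S_t, \mu) \le I(X_t^a; Y_t \mid X_t^b, S_t, \mu)_{\pi^{t,\mu}}$, and similarly for $R_b$ and $R_a + R_b$. Weighting by $\alpha_{t,\mu} = \tfrac{1}{n} P(S_{[t-1]} = \mu)$, summing over $t$, passing to the supremum over $\mu$, and letting $\epsilon \to 0$ places the rate triple in $\overline{co}\bigl(\bigcup_{\tilde{\pi}} \mathcal{R}_{DN}(\tilde{\pi})\bigr)$.

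I expect the main obstacle to be the per-$(t,\mu)$ single-letter bound on the individual rates. In the Shannon-strategies setting of Theorem \ref{theo:outbnd-cau-fl}, only the sum rate is tight because the ``first entropy'' cannot be cleanly replaced by a marginal one under the product form. The gain in the delayed setup is that the inputs themselves (not derived strategies) carry the product structure given $\mu$: conditioning on $(X_t^b, S_t)$ suffices to reduce the relevant entropy, and the Markov relation $(W_a, S_{[t-1]}^a) \perp (W_b, S_{[t-1]}^b) \mid S_{[t-1]}$ prevents the ``explaining away'' through $Y_{[t-1]}$ from coupling the two inputs after conditioning on $\mu$. This is what permits the individual-rate converses to go through and closes the capacity region exactly.
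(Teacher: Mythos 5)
Your overall architecture (random-coding achievability; converse via Fano, chain rule, conditioning on the realization of past full CSI $S_{[t-1]}=\mu$, and a factorization lemma giving an induced product policy $\pi^{t,\mu}$) matches the paper's proof in Appendix B, and your achievability sketch and factorization lemma are correct. However, there is a genuine gap in your individual-rate converse.

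You start from $I(W_a;Y_{[n]},S_{[n]}\mid W_b)$ and claim the per-$(t,\mu)$ bound $I(W_a;Y_t\mid W_b,Y_{[t-1]},S_t,\mu)\le I(X_t^a;Y_t\mid X_t^b,S_t,\mu)_{\pi^{t,\mu}}$. The second-entropy side works (add $S_{[t-1]}^a,S_{[t-1]}^b$ to the conditioning to make $X_t^a,X_t^b$ deterministic, then invoke the memoryless channel). But the first entropy does \emph{not} reduce: you need $H(Y_t\mid W_b,Y_{[t-1]},S_{[t]})\le H(Y_t\mid X_t^b,S_{[t]})$, and this does not follow from ``conditioning reduces entropy'' because $X_t^b=\phi_t^{(b)}(W_b,S_{[t-d_b]}^b)$ is \emph{not} a function of $(W_b,Y_{[t-1]},S_{[t]})$ --- the two conditioning $\sigma$-algebras are incomparable. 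Your explanation that the Markov relation $(W_a,S_{[t-1]}^a)\perp(W_b,S_{[t-1]}^b)\mid S_{[t-1]}$ ``prevents explaining away through $Y_{[t-1]}$'' is not correct: conditioning on $Y_{[t-1]}$ does couple the two encoders' private variables even given $S_{[t-1]}=\mu$ (that is precisely why Lemma \ref{lem:conv-fact-inp} does \emph{not} condition on $Y_{[t-1]}$).

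The paper's proof (Appendix \ref{app:1}) resolves this before the chain rule is applied: it uses
\begin{equation*}
I(W_a;Y_{[n]},S_{[n]})\;\le\; I\bigl(W_a;Y_{[n]},S_{[n]}\,\big|\,W_b,S_{[n]}^b\bigr),
\end{equation*}
which is valid since $W_a\perp (W_b,S_{[n]}^b)$ under (\ref{eq:sta-iid}). With $S_{[n]}^b$ in the conditioning, $X_{[n]}^b$ is deterministic, so one may insert $X_{[n]}^b$, keep $(S_{[t]},X_t^b)$, and \emph{drop} $Y_{[t-1]},W_b,S_{[n]}^b$ from the first entropy by ``conditioning reduces entropy,'' landing on $H(Y_t\mid S_{[t]},X_t^b)$. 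This is the step your sketch is missing for both $R_a$ (conditioning on $S_{[n]}^b$) and $R_b$ (conditioning on $S_{[n]}^a$). You also need to handle the $H(S_t\mid\cdot)$ terms from the chain rule as in (\ref{eq:conv-ena-inp-3a}); and the final step should be the convex-combination argument over $\mu$ with common weights $\alpha_\mu$ (as needed to place $(R_a,R_b)$ in $\overline{co}(\bigcup_{\tilde{\pi}}\mathcal R_{DN}(\tilde\pi))$), rather than ``passing to the supremum over $\mu$'' in each constraint separately.
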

Achievability can be shown by random coding arguments. For the converse, see Appendix \ref{app:1}.
\begin{remark}[Strictly Causal Case]
When $d_a=d_b=1$, Theorem \ref{theo:main-inp} is the capacity region of the setup with strictly causal CSITs. In \cite{lap-ste-1} and \cite{lap-ste-3}, achievable rate regions are provided for the case when the channel is driven by two independent states (with no CSIT). When the encoders have strictly causal CSI (not noisy/not asymmetric), the authors proposed a region which is based on sending a compressed version of the state information available at the encoders to the decoder. Theorem \ref{theo:main-inp} verifies that since the full CSI is available at the receiver and since the decoder does not need to access the current CSI at the encoders, there exists no loss of optimality if the past information at the encoders are ignored.
\end{remark}
\section{Cooperative FS-MAC with Noisy CSIT}\label{sec:corr}
We now consider the last scenario of the paper. Assume a common message is provided to both encoders and one of the encoders has its own private message. Assume further that the encoder with the private message causally observes noisy state information, whereas the encoder with the common message only observes noisy state information with delay $d_a\geq1$. Let the common and the private messages be $W_a$ and $W_b$, respectively, and $S_{[t-d_a]}^a$, $d_a\geq1$, and $S_{[t]}^b$ denote the CSI at encoder $a$, $b$, respectively, where $(S_t,S_t^a,S_t^b)$ satisfies (\ref{eq:noi-recfl-mrk}) and (\ref{eq:sta-iid}). Hence, $X_t^a=\phi_{t}^{(a)}(W_a,S_{[t-d_a]}^a)$ and $X_t^b=\phi_{t}^{(b)}(W_a, W_b,S_{[t]}^b)$; see Fig. \ref{fig:mac-noi-corr}. Let ${\cal C}_{C}$ denote the capacity region for this channel. Recall that ${\cal T}_b=\mathcal X_b^{\mathcal S_b}$.
%  Created by WinFIG version 3.04
%  METADATA <version>1.0</version>
%
\begin{figure}
\setlength{\unitlength}{3947sp}%
\begingroup\makeatletter\ifx\SetFigFont\undefined%
\gdef\SetFigFont#1#2#3#4#5{%
  \reset@font\fontsize{#1}{#2pt}%
  \fontfamily{#3}\fontseries{#4}\fontshape{#5}%
  \selectfont}%
\fi\endgroup%
\begin{picture}(5945,2853)(0,-3377)
%  METADATA <id>41</id>
\thinlines
{\color[rgb]{0,0,0}\put(1083,-2549){\vector( 1, 0){425}}
}%
%  METADATA <id>7</id>
{\color[rgb]{0,0,0}\put(1500,-2909){\framebox(1358,712){}}
}%
%  METADATA <id>50</id>
{\color[rgb]{0,0,0}\put(1550,-1725){\framebox(1258,712){}}
}%
%  METADATA <id>40</id>
{\color[rgb]{0,0,0}\put(1083,-1327){\vector( 1, 0){461}}
}%
%  METADATA <id>83</id>
{\color[rgb]{0,0,0}\put(1083,-1327){\line( 0,-1){951}}
\put(1083,-2278){\vector( 1, 0){426}}
}%
%  METADATA <id>11</id>
{\color[rgb]{0,0,0}\put(3283,-2296){\framebox(1521,705){}}
}%
%  METADATA <id>52</id>
{\color[rgb]{0,0,0}\put(5281,-2311){\framebox(1149,712){}}
}%
%  METADATA <id>33</id>
{\color[rgb]{0,0,0}\put(6430,-1753){\vector( 1, 0){536}}
}%
%  METADATA <id>34</id>
{\color[rgb]{0,0,0}\put(6430,-2201){\vector( 1, 0){536}}
}%
%  METADATA <id>59</id>
{\color[rgb]{0,0,0}\put(2810,-1317){\line( 1, 0){283}}
\put(3092,-1317){\line( 0,-1){391}}
\put(3092,-1708){\vector( 1, 0){190}}
}%
%  METADATA <id>64</id>
{\color[rgb]{0,0,0}\put(2860,-2538){\line( 1, 0){233}}
\put(3092,-2538){\line( 0, 1){371}}
\put(3092,-2160){\vector( 1, 0){190}}
}%
%  METADATA <id>65</id>
{\color[rgb]{0,0,0}\put(2207,-536){\vector( 0,-1){468}}
\put(2207,-536){\line( 1, 0){1735}}
\put(3940,-536){\line( 0,-1){1053}}
}%
%  METADATA <id>67</id>
{\color[rgb]{0,0,0}\put(2208,-3365){\vector( 0, 1){461}}
\put(2208,-3365){\line( 1, 0){1746}}
\put(3951,-3365){\line( 0, 1){1064}}
}%
%  METADATA <id>56</id>
{\color[rgb]{0,0,0}\put(4811,-1762){\vector( 1, 0){470}}
}%
%  METADATA <id>74</id>
{\color[rgb]{0,0,0}\put(4811,-2159){\vector( 1, 0){470}}
}%
%  METADATA <id>42</id>
\put(1356,-1230){\makebox(0,0)[b]{\smash{{\SetFigFont{11}{13.2}{\rmdefault}{\mddefault}{\updefault}{\color[rgb]{0,0,0}$W_a$}%
}}}}
%  METADATA <id>2</id>
\put(2200,-1281){\makebox(0,0)[b]{\smash{{\SetFigFont{11}{13.2}{\rmdefault}{\mddefault}{\updefault}{\color[rgb]{0,0,0}Encoder}%
}}}}
%  METADATA <id>3</id>
\put(2200,-1520){\makebox(0,0)[b]{\smash{{\SetFigFont{11}{13.2}{\rmdefault}{\mddefault}{\updefault}{\color[rgb]{0,0,0}$\phi_{t}^{(a)}(W_a, S_{[t-d_a]}^a)$}%
}}}}
%  METADATA <id>9</id>
\put(2200,-2691){\makebox(0,0)[b]{\smash{{\SetFigFont{11}{13.2}{\rmdefault}{\mddefault}{\updefault}{\color[rgb]{0,0,0}$\phi_{t}^{(b)}(W_a,W_b,S_{[t]}^b)$}%
}}}}
%  METADATA <id>8</id>
\put(2240,-2484){\makebox(0,0)[b]{\smash{{\SetFigFont{11}{13.2}{\rmdefault}{\mddefault}{\updefault}{\color[rgb]{0,0,0}Encoder}%
}}}}
%  METADATA <id>14</id>
\put(3995,-1802){\makebox(0,0)[b]{\smash{{\SetFigFont{11}{13.2}{\rmdefault}{\mddefault}{\updefault}{\color[rgb]{0,0,0}Channel}%
}}}}
%  METADATA <id>15</id>
\put(4091,-2110){\makebox(0,0)[b]{\smash{{\SetFigFont{11}{13.2}{\rmdefault}{\mddefault}{\updefault}{\color[rgb]{0,0,0}$P(Y_t|X_t^a, X_t^b,S_t)$}%
}}}}
%  METADATA <id>46</id>
\put(5890,-2132){\makebox(0,0)[b]{\smash{{\SetFigFont{11}{13.2}{\rmdefault}{\mddefault}{\updefault}{\color[rgb]{0,0,0}$\psi(Y_{[n]},S_{[n]})$}%
}}}}
%  METADATA <id>22</id>
\put(5834,-1885){\makebox(0,0)[b]{\smash{{\SetFigFont{11}{13.2}{\rmdefault}{\mddefault}{\updefault}{\color[rgb]{0,0,0}Decoder}%
}}}}
%  METADATA <id>35</id>
\put(6703,-1710){\makebox(0,0)[b]{\smash{{\SetFigFont{11}{13.2}{\rmdefault}{\mddefault}{\updefault}{\color[rgb]{0,0,0}$\hat{W}_a$}%
}}}}
%  METADATA <id>36</id>
\put(6708,-2149){\makebox(0,0)[b]{\smash{{\SetFigFont{11}{13.2}{\rmdefault}{\mddefault}{\updefault}{\color[rgb]{0,0,0}$\hat{W}_b$}%
}}}}
%  METADATA <id>43</id>
\put(1363,-2460){\makebox(0,0)[b]{\smash{{\SetFigFont{11}{13.2}{\rmdefault}{\mddefault}{\updefault}{\color[rgb]{0,0,0}$W_b$}%
}}}}
%  METADATA <id>16</id>
\put(3099,-1268){\makebox(0,0)[b]{\smash{{\SetFigFont{11}{13.2}{\rmdefault}{\mddefault}{\updefault}{\color[rgb]{0,0,0}$X_t^a$}%
}}}}
%  METADATA <id>17</id>
\put(3063,-2763){\makebox(0,0)[b]{\smash{{\SetFigFont{11}{13.2}{\rmdefault}{\mddefault}{\updefault}{\color[rgb]{0,0,0}$X_t^b$}%
}}}}
%  METADATA <id>44</id>
\put(2438,-827){\makebox(0,0)[b]{\smash{{\SetFigFont{11}{13.2}{\rmdefault}{\mddefault}{\updefault}{\color[rgb]{0,0,0}$S_{t-d_a}^a$}%
}}}}
%  METADATA <id>45</id>
\put(2444,-3159){\makebox(0,0)[b]{\smash{{\SetFigFont{11}{13.2}{\rmdefault}{\mddefault}{\updefault}{\color[rgb]{0,0,0}$S_{t}^b$}%
}}}}
%  METADATA <id>25</id>
\put(5009,-1700){\makebox(0,0)[b]{\smash{{\SetFigFont{11}{13.2}{\rmdefault}{\mddefault}{\updefault}{\color[rgb]{0,0,0}$Y_t$}%
}}}}
%  METADATA <id>37</id>
\put(5042,-2050){\makebox(0,0)[b]{\smash{{\SetFigFont{11}{13.2}{\rmdefault}{\mddefault}{\updefault}{\color[rgb]{0,0,0}$S_{t}$}%
}}}}
\end{picture}
\caption{Cooperative multiple-access channel with noisy state feedback.}
\label{fig:mac-noi-corr}
\vspace{-0.2in}
\end{figure}
\begin{definition}\label{def:team-pol-corr}
A memoryless stationary (in time) team policy is a family
\begin{eqnarray}
\hat{\Pi}=\left\{\hat{\pi}=\left(\pi_{X^a,T^b}(\cdot,\cdot)\right)\in {\cal P}({\cal X}^a)\times{\cal P}({\cal T}^b)\right\}\label{eq:team-pol-corr}
%\pi=\{\pi_{T^i}(\cdot)\in {\cal P}({\cal T}^i)| i \in \{a,b\}\}\label{eq-team-pol}
\end{eqnarray}
of probability distributions on $({\cal X}_a, {\cal T}_b)$.
\end{definition}

Let for every $\hat{\pi}$, $\mathcal{R}_{C}(\hat{\pi})$ denote the region of all rate pairs $R=(R_a,R_b)$ satisfying
\begin{eqnarray}
R_b &<& I(T^{b};Y|X^{a},S) \quad  \label{eq:ra2-corr}\\
R_a+R_b &<& I(X^a,T^{b};Y|S) \label{eq:ra3-corr}
\end{eqnarray}
where $S$, $X^a$, $T^b$ and $Y$ are random variables taking values in ${\cal S}$, ${\cal X}_{a}$, ${\cal T}_{b}$ and $\cal Y$, respectively and whose joint probability distribution factorizes as
\begin{eqnarray}
P_{S,X^a,T^b,Y}(s,x^a,t^b,y)=P_S(s)P_{Y|X^a,T^b,S}(y|x^a,t^b,s)\pi_{X^a, T^b}(x^a,t^b)\label{eq:joi-dist-corr}.
\end{eqnarray}
Let $\overline{co}\bigg(\bigcup_{\hat{\pi}}\mathcal{R}_{C}(\hat{\pi})\bigg)$ denotes the closure of the convex hull of the rate regions $\mathcal{R}_{C}(\hat{\pi})$ given by (\ref{eq:ra2-corr}) and (\ref{eq:ra3-corr}) associated to all possible memoryless stationary team polices as defined in (\ref{eq:team-pol-corr}).
\begin{theorem}\label{theo:main-corr}
${\cal C}_{C}=\overline{co}\bigg(\bigcup_{\hat{\pi}}\mathcal{R}_{C}(\hat{\pi})\bigg)$.
\end{theorem}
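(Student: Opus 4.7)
The plan is to adapt the achievability-and-converse template of Theorems~\ref{theo:inbnd-cau-fl}-\ref{theo:outbnd-cau-fl}, replacing the product team policy $\pi_{T^a}\pi_{T^b}$ by a \emph{joint} policy $\pi_{X^a,T^b}$ on $\mathcal X_a\times\mathcal T_b$. The joint form is essential because the two encoders share the common message $W_a$ and must be allowed to correlate their outputs; the mixed form---raw channel symbol on the $a$-side, Shannon strategy on the $b$-side---reflects the asymmetry between encoder $a$'s delayed CSI, which carries no information about the current $S_t$, and encoder $b$'s causal CSI $S^b_t$. As in Theorem~\ref{theo:main-inp}, one loses no optimality by letting encoder $a$ ignore its delayed CSI entirely.

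For achievability I would fix $\pi_{X^a,T^b}$ and build a superposition codebook: for each common-message index $w_a$ draw $x^a_{[n]}(w_a)$ i.i.d.\ from the marginal $\pi_{X^a}$, and for each pair $(w_a,w_b)$ draw $t^b_{[n]}(w_a,w_b)$ component-wise from $\pi_{T^b|X^a}(\cdot|x^a_t(w_a))$. Encoder $a$ transmits $x^a_t(W_a)$ and encoder $b$ transmits $t^b_t(W_a,W_b)(S^b_t)$; the decoder uses joint-typicality decoding with respect to (\ref{eq:joi-dist-corr}). The error analysis is of exactly the form (\ref{eq:erbound-cau-fl})-(\ref{eq:erbo2-cau-fl}) but with only two nontrivial error events---wrong $w_b$ paired with correct $w_a$, and wrong $(w_a,w_b)$---yielding constraints (\ref{eq:ra2-corr}) and (\ref{eq:ra3-corr}) respectively; standard time-sharing then gives $\overline{co}\bigcup_{\hat\pi}\mathcal R_C(\hat\pi)\subseteq\mathcal C_C$.

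For the converse I would, in analogy with (\ref{eq:conv-cau-fl-1}), define the Shannon strategy $T_t^b(s_b):=\phi_t^{(b)}(W_a,W_b,S^b_{[t-1]},s_b)$ and note that $X_t^a=\phi_t^{(a)}(W_a,S^a_{[t-d_a]})$ is already a channel-input-valued random variable. Fano's inequality yields
\begin{eqnarray*}
n(1-\epsilon)R_b &\leq& I(W_b;Y_{[n]},S_{[n]}|W_a)+H(\epsilon),\\
n(1-\epsilon)(R_a+R_b) &\leq& I(W_a,W_b;Y_{[n]},S_{[n]})+H(\epsilon).
\end{eqnarray*}
Chain-rule expansion, together with the i.i.d.\ state assumption (\ref{eq:sta-iid}) and the identity $P_{Y_t|W_a,W_b,Y_{[t-1]},S_{[t]},X_t^a,T_t^b}=P_{Y|X^a,T^b,S}$---an analogue of (\ref{eq:conv-cau-fl-6a}) obtained by averaging the channel kernel (\ref{eq:ch-recfl}) against $P_{S^b_t|S_t}$---single-letterises these bounds into $\sum_t I(T_t^b;Y_t|X_t^a,S_{[t]})$ and $\sum_t I(X_t^a,T_t^b;Y_t|S_{[t]})$ respectively, which are then rewritten as convex combinations indexed by $\mu\in\mathcal S^{(t-1)}$ with weights $\alpha_\mu$ as in (\ref{eq:eta-epsilon}).

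The central remaining step is the analogue of Lemma~\ref{lem:fact-cau-fl}: for every $\mu\in\mathcal S^{t-1}$,
\[ P_{X_t^a,T_t^b,Y_t,S_t|S_{[t-1]}}(x^a,t^b,y,s|\mu)=P_S(s)\,P_{Y|X^a,T^b,S}(y|x^a,t^b,s)\,\pi_{X^a,T^b}^\mu(x^a,t^b), \]
where the joint policy $\pi_{X^a,T^b}^\mu$ is built by summing the indicator $1_{\{x^a=\phi_t^{(a)}(w_a,\mu_a),\ t^b=\phi_t^{(b)}(w_a,w_b,\mu_b,\cdot)\}}$ against $\tfrac{1}{|\mathcal W_a||\mathcal W_b|}P_{S^a_{[t-1]},S^b_{[t-1]}|S_{[t-1]}}(\mu_a,\mu_b|\mu)$ over $(w_a,w_b,\mu_a,\mu_b)$. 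This is the essential \emph{simplification} relative to Lemma~\ref{lem:fact-cau-fl}: I do not need the marginals of $X^a$ and $T^b$ to be independent---they are in general correlated through $W_a$---only the factor $P_S(s)$, from the independence of $S_t$ from the past, and the channel factor $P_{Y|X^a,T^b,S}$, from averaging over $S^b_t$, must be peeled off. Taking suprema over $\mu$ and $\hat\pi$ and invoking the closed-convex-hull construction then yields $\mathcal C_C\subseteq\overline{co}\bigcup_{\hat\pi}\mathcal R_C(\hat\pi)$. The main obstacle I anticipate is bookkeeping: cleanly handling the boundary case $d_a\geq t$ (where $X_t^a=\phi_t^{(a)}(W_a)$ is deterministic given $W_a$) and verifying in the $R_b$-expansion that conditioning on $W_a$ really does reduce $X_t^a$ to part of the conditioning $\sigma$-algebra, so that the single-letter bound $I(T_t^b;Y_t|X_t^a,S_t)$ cleanly matches the achievability constraint (\ref{eq:ra2-corr}).
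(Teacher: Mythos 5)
Your proposal is correct and matches the paper's proof: superposition coding (draw $x^a_{[n]}$ from $\pi_{X^a}$, then $t^b_{[n]}$ from $\pi_{T^b|X^a}$) for achievability, and a converse via Fano's inequality, the Shannon strategy $T_t^b=\phi_t^{(b)}(W_a,W_b,S^b_{[t-1]},\cdot)$, and a factorization lemma that—as you rightly emphasise—needs only the state-independence and memoryless-channel factors, not a product form on $(X^a,T^b)$. The bookkeeping point you flag is genuine: in the $R_b$ bound one must condition on $(W_a,S^a_{[n]})$ rather than $W_a$ alone so that $X_t^a$ actually lies in the conditioning $\sigma$-algebra, which is exactly what the paper does in (\ref{eq:conv-ena-corr-3}); with that adjustment the rest goes through as you describe.
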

See Appendix \ref{app:2} for the proof.
\begin{remark}\label{rem:coop-causal-fl}
Theorem \ref{theo:main-corr} shows that when the common message encoder has no access to the current noisy CSI (since the delay $d_a \geq 1$), by enlarging the optimization space of the other encoder, via Shannon strategies, the past CSI can be ignored without loss of optimality if the decoder is provided with complete CSI.
\end{remark}
One important observation to be made in the cooperative scenario is that we do not require a product form on the pair $(X^a,T^b)$ (see (\ref{eq:joi-dist-corr})). In connection with this observation, let us consider the following noisy CSIR setup.

Let the encoder with the private message causally observe noisy state information, whereas the encoder with the common message has no CSI, i.e., $X_t^a=\phi_{t}^{(a)}(W_a)$ and $X_t^b=\phi_{t}^{(b)}(W_a, W_b,S_{[t]}^b)$, and the decoder also has access to noisy CSI at time $t$, $S_t^r \in {\cal S}_r$; see Fig. \ref{fig:mac-noi-corr-gen}. Let ${\cal C}^G_{C}$ denote the capacity region for this setup.
%  Created by WinFIG version 3.04
%  METADATA <version>1.0</version>
%
\begin{figure}
\setlength{\unitlength}{3947sp}%
\begingroup\makeatletter\ifx\SetFigFont\undefined%
\gdef\SetFigFont#1#2#3#4#5{%
  \reset@font\fontsize{#1}{#2pt}%
  \fontfamily{#3}\fontseries{#4}\fontshape{#5}%
  \selectfont}%
\fi\endgroup%
\begin{picture}(5945,2853)(0,-3377)
%  METADATA <id>41</id>
\thinlines
{\color[rgb]{0,0,0}\put(1083,-2549){\vector( 1, 0){425}}
}%
%  METADATA <id>7</id>
{\color[rgb]{0,0,0}\put(1500,-2909){\framebox(1358,712){}}
}%
%  METADATA <id>50</id>
{\color[rgb]{0,0,0}\put(1550,-1725){\framebox(1258,712){}}
}%
%  METADATA <id>40</id>
{\color[rgb]{0,0,0}\put(1083,-1327){\vector( 1, 0){461}}
}%
%  METADATA <id>83</id>
{\color[rgb]{0,0,0}\put(1083,-1327){\line( 0,-1){951}}
\put(1083,-2278){\vector( 1, 0){426}}
}%
%  METADATA <id>11</id>
{\color[rgb]{0,0,0}\put(3283,-2296){\framebox(1521,705){}}
}%
%  METADATA <id>52</id>
{\color[rgb]{0,0,0}\put(5281,-2311){\framebox(1149,712){}}
}%
%  METADATA <id>33</id>
{\color[rgb]{0,0,0}\put(6430,-1753){\vector( 1, 0){536}}
}%
%  METADATA <id>34</id>
{\color[rgb]{0,0,0}\put(6430,-2201){\vector( 1, 0){536}}
}%
%  METADATA <id>59</id>
{\color[rgb]{0,0,0}\put(2810,-1317){\line( 1, 0){283}}
\put(3092,-1317){\line( 0,-1){391}}
\put(3092,-1708){\vector( 1, 0){190}}
}%
%  METADATA <id>64</id>
{\color[rgb]{0,0,0}\put(2860,-2538){\line( 1, 0){233}}
\put(3092,-2538){\line( 0, 1){371}}
\put(3092,-2160){\vector( 1, 0){190}}
}%
%  METADATA <id>65</id>
%{\color[rgb]{0,0,0}\put(2207,-536){\vector( 0,-1){468}}
%\put(2207,-536){\line( 1, 0){1735}}
%\put(3940,-536){\line( 0,-1){1053}}
%}%
%  METADATA <id>67</id>
{\color[rgb]{0,0,0}\put(2208,-3365){\vector( 0, 1){461}}
\put(2208,-3365){\line( 1, 0){1746}}
\put(3951,-3365){\line( 0, 1){1064}}
}%
%  METADATA <id>56</id>
{\color[rgb]{0,0,0}\put(4811,-1762){\vector( 1, 0){470}}
}%
%  METADATA <id>74</id>
{\color[rgb]{0,0,0}\put(4811,-2159){\vector( 1, 0){470}}
}%
%  METADATA <id>42</id>
\put(1356,-1230){\makebox(0,0)[b]{\smash{{\SetFigFont{11}{13.2}{\rmdefault}{\mddefault}{\updefault}{\color[rgb]{0,0,0}$W_a$}%
}}}}
%  METADATA <id>2</id>
\put(2200,-1281){\makebox(0,0)[b]{\smash{{\SetFigFont{11}{13.2}{\rmdefault}{\mddefault}{\updefault}{\color[rgb]{0,0,0}Encoder}%
}}}}
%  METADATA <id>3</id>
\put(2200,-1520){\makebox(0,0)[b]{\smash{{\SetFigFont{11}{13.2}{\rmdefault}{\mddefault}{\updefault}{\color[rgb]{0,0,0}$\phi_{t}^{(a)}(W_a)$}%
}}}}
%  METADATA <id>9</id>
\put(2200,-2691){\makebox(0,0)[b]{\smash{{\SetFigFont{11}{13.2}{\rmdefault}{\mddefault}{\updefault}{\color[rgb]{0,0,0}$\phi_{t}^{(b)}(W_a,W_b,S_{[t]}^b)$}%
}}}}
%  METADATA <id>8</id>
\put(2240,-2484){\makebox(0,0)[b]{\smash{{\SetFigFont{11}{13.2}{\rmdefault}{\mddefault}{\updefault}{\color[rgb]{0,0,0}Encoder}%
}}}}
%  METADATA <id>14</id>
\put(3995,-1802){\makebox(0,0)[b]{\smash{{\SetFigFont{11}{13.2}{\rmdefault}{\mddefault}{\updefault}{\color[rgb]{0,0,0}Channel}%
}}}}
%  METADATA <id>15</id>
\put(4091,-2110){\makebox(0,0)[b]{\smash{{\SetFigFont{11}{13.2}{\rmdefault}{\mddefault}{\updefault}{\color[rgb]{0,0,0}$P(Y_t|X_t^a, X_t^b,S_t)$}%
}}}}
%  METADATA <id>46</id>
\put(5890,-2132){\makebox(0,0)[b]{\smash{{\SetFigFont{11}{13.2}{\rmdefault}{\mddefault}{\updefault}{\color[rgb]{0,0,0}$\psi(Y_{[n]},S_{[n]}^r)$}%
}}}}
%  METADATA <id>22</id>
\put(5834,-1885){\makebox(0,0)[b]{\smash{{\SetFigFont{11}{13.2}{\rmdefault}{\mddefault}{\updefault}{\color[rgb]{0,0,0}Decoder}%
}}}}
%  METADATA <id>35</id>
\put(6703,-1710){\makebox(0,0)[b]{\smash{{\SetFigFont{11}{13.2}{\rmdefault}{\mddefault}{\updefault}{\color[rgb]{0,0,0}$\hat{W}_a$}%
}}}}
%  METADATA <id>36</id>
\put(6708,-2149){\makebox(0,0)[b]{\smash{{\SetFigFont{11}{13.2}{\rmdefault}{\mddefault}{\updefault}{\color[rgb]{0,0,0}$\hat{W}_b$}%
}}}}
%  METADATA <id>43</id>
\put(1363,-2460){\makebox(0,0)[b]{\smash{{\SetFigFont{11}{13.2}{\rmdefault}{\mddefault}{\updefault}{\color[rgb]{0,0,0}$W_b$}%
}}}}
%  METADATA <id>16</id>
\put(3099,-1268){\makebox(0,0)[b]{\smash{{\SetFigFont{11}{13.2}{\rmdefault}{\mddefault}{\updefault}{\color[rgb]{0,0,0}$X_t^a$}%
}}}}
%  METADATA <id>17</id>
\put(3063,-2763){\makebox(0,0)[b]{\smash{{\SetFigFont{11}{13.2}{\rmdefault}{\mddefault}{\updefault}{\color[rgb]{0,0,0}$X_t^b$}%
}}}}
%  METADATA <id>44</id>
%\put(2438,-827){\makebox(0,0)[b]{\smash{{\SetFigFont{11}{13.2}{\rmdefault}{\mddefault}{\updefault}{\color[rgb]{0,0,0}$S_{t-d_a}^a$}%
%}}}}
%  METADATA <id>45</id>
\put(2444,-3159){\makebox(0,0)[b]{\smash{{\SetFigFont{11}{13.2}{\rmdefault}{\mddefault}{\updefault}{\color[rgb]{0,0,0}$S_{t}^b$}%
}}}}
%  METADATA <id>25</id>
\put(5009,-1700){\makebox(0,0)[b]{\smash{{\SetFigFont{11}{13.2}{\rmdefault}{\mddefault}{\updefault}{\color[rgb]{0,0,0}$Y_t$}%
}}}}
%  METADATA <id>37</id>
\put(5042,-2050){\makebox(0,0)[b]{\smash{{\SetFigFont{11}{13.2}{\rmdefault}{\mddefault}{\updefault}{\color[rgb]{0,0,0}$S_{t}^r$}%
}}}}
\end{picture}
\caption{Cooperative multiple-access channel with noisy CSIT and CSIR.}
\label{fig:mac-noi-corr-gen}
\vspace{-0.2in}
\end{figure}
Let for every memoryless stationary team policy $\hat{\pi}$ defined in (\ref{eq:team-pol-corr}), $\mathcal{R}^G_{C}(\hat{\pi})$ denote the region of all rate pairs $R=(R_a,R_b)$ satisfying,
\begin{eqnarray}
R_b &<& I(T^{b};Y|X^{a},S^r) \quad  \label{eq:ra2-corr-gen}\\
R_a+R_b &<& I(X^a,T^{b};Y|S^r) \label{eq:ra3-corr-gen}
\end{eqnarray}
where $S^r$, $X^a$, $T^b$ and $Y$ are random variables taking values in ${\cal S}_r$, ${\cal X}_{a}$, ${\cal T}_{b}$ and $\cal Y$, respectively and whose joint probability distribution factorizes as
\begin{eqnarray}
P_{S^r,X^a,T^b,Y}(s^r,x^a,t^b,y)=P_{S^r}(s^r)P_{Y|X^a,T^b,S^r}(y|x^a,t^b,s^r)\pi_{X^a, T^b}(x^a,t^b)\label{eq:joi-dist-corr-gen}.
\end{eqnarray}
Let $\overline{co}\bigg(\bigcup_{\hat{\pi}}\mathcal{R}^G_{C}(\hat{\pi})\bigg)$ denotes the closure of the convex hull of the rate regions $\mathcal{R}^G_{C}(\hat{\pi})$ given by (\ref{eq:ra2-corr-gen}) and (\ref{eq:ra3-corr-gen}) associated to all possible $\hat{\pi}$ as defined in (\ref{eq:team-pol-corr}).
\begin{theorem}\label{theo:main-corr-gen}
${\cal C}^G_{C}=\overline{co}\bigg(\bigcup_{\hat{\pi}}\mathcal{R}^G_{C}(\hat{\pi})\bigg)$.
\end{theorem}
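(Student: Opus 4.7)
The plan is to establish Theorem \ref{theo:main-corr-gen} by combining the equivalent-channel reduction of Remark \ref{rem:eqiuv-cha} with an adaptation of the proof of Theorem \ref{theo:main-corr}. Under the Markov structure (\ref{eq:noi-recnoi-mrk}) and the i.i.d.\ hypothesis (\ref{eq:star-iid}), the system can be regarded as a memoryless state-dependent MAC with state $S^r$ and channel transition
\[
P^{eq}_{Y|X^a,X^b,S^r}(y|x^a,x^b,s^r)=\sum_{s\in\mathcal S}P_{Y|X^a,X^b,S}(y|x^a,x^b,s)\,P_{S|S^r}(s|s^r).
\]
Relative to Theorem \ref{theo:main-corr}, the only differences are that (i) the decoder observes $S^r$ instead of $S$, and (ii) the common-message encoder has no CSI at all, i.e.\ $X_t^a=\phi_t^{(a)}(W_a)$.

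For achievability I would use a standard superposition-coding argument with Shannon strategies, in the spirit of \cite{som-sha-verdu} and Theorem \ref{theo:inbnd-cau-fl}. Fix any $\hat\pi\in\hat\Pi$. For each common message $w_a$ draw a codeword $x^a_{[n]}(w_a)$ with entries i.i.d.\ from the $X^a$-marginal of $\hat\pi$; for each $(w_a,w_b)$ draw $t^b_{[n]}(w_a,w_b)$ so that each entry $t_i^b(w_a,w_b)$ is independently sampled from $\pi_{T^b|X^a}(\,\cdot\,|\,x_i^a(w_a))$. Encoder $b$ then transmits $X_t^b=t_t^b(w_a,w_b)(S_t^b)$. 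Joint $\epsilon$-typical decoding with respect to $(X^a,T^b,Y,S^r)$, together with the union-bound analysis of Theorem \ref{theo:inbnd-cau-fl}, yields (\ref{eq:ra2-corr-gen})--(\ref{eq:ra3-corr-gen}); in contrast to Theorem \ref{theo:inbnd-cau-fl} no product form on $(X^a,T^b)$ is needed, since the two codebooks are generated jointly.

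For the converse I would follow Lemmas \ref{lem:conv-cau-fl}--\ref{lem:fact-cau-fl} almost verbatim, replacing $S$ by $S^r$ throughout. Define the Shannon strategies $T_t^b(s):=\phi_t^{(b)}(W_a,W_b,S_{[t-1]}^b,s)$ for $s\in\mathcal S_b$, and recall that $X_t^a$ depends only on $W_a$. With $\alpha_\mu:=\tfrac{1}{n}P_{S_{[t-1]}^r}(\mu)$, Fano's inequality combined with the chain-rule expansion of Lemma \ref{lem:conv-cau-fl} gives
\begin{eqnarray}
R_a+R_b &\le& \sum_{\mu\in\mathcal S_r^{(n)}}\alpha_\mu\, I(X_t^a,T_t^b;Y_t\,|\,S_t^r,S_{[t-1]}^r=\mu)+\eta(\epsilon),\nonumber\\
R_b &\le& \sum_{\mu\in\mathcal S_r^{(n)}}\alpha_\mu\, I(T_t^b;Y_t\,|\,X_t^a,S_t^r,S_{[t-1]}^r=\mu)+\eta(\epsilon),\nonumber
\end{eqnarray}
where the second bound uses the standard device of first providing $W_a$ to the decoder as side information in Fano's inequality so as to isolate $R_b$.

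The main technical obstacle, exactly as in Lemma \ref{lem:fact-cau-fl}, is to verify the factorization
\[
P_{X_t^a,T_t^b,Y_t,S_t^r\,|\,S_{[t-1]}^r}(x^a,t^b,y,s^r\,|\,\mu)=P_{S^r}(s^r)\,P^{eq}_{Y|X^a,T^b,S^r}(y|x^a,t^b,s^r)\,\hat\pi^{\mu}(x^a,t^b)
\]
for some $\hat\pi^{\mu}\in\hat\Pi$, where $P^{eq}_{Y|X^a,T^b,S^r}$ is obtained from $P_{Y|X^a,X^b,S}$ by averaging over $(S,S^b)$ against the kernel $P_{S|S^r}P_{S^b|S^r}$ furnished by (\ref{eq:noi-recnoi-mrk}). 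The crucial observation, already highlighted in Remark \ref{rem:corr-cau-fl}, is that because $W_a$ is shared by the two encoders we only need to identify a joint law $\hat\pi^{\mu}$ on $(X^a,T^b)$ rather than a product one; this weaker requirement is precisely what makes the noisy-CSIR extension go through. Mimicking Lemma \ref{lem:fact-cau-fl}, one sums over $(W_a,W_b,S_{[t-1]}^b)$ and invokes (\ref{eq:star-iid}) together with (\ref{eq:noi-recnoi-mrk}) to peel off $P_{S^r}(s^r)$ and $P^{eq}$, identifying the remaining factor as a valid $\hat\pi^{\mu}\in\hat\Pi$. Taking the convex combination over $\mu$ and letting $\epsilon\to 0$ then places $(R_a,R_b)$ in $\overline{co}\bigl(\bigcup_{\hat\pi}\mathcal R^G_C(\hat\pi)\bigr)$, completing the converse.
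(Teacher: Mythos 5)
Your overall route (achievability by joint random coding over $(X^a,T^b)$ with Shannon strategies for encoder $b$; converse by defining $T_t^b=\phi_t^{(b)}(W_a,W_b,S_{[t-1]}^b,\cdot)$, applying Fano with $W_a$ supplied as side information, and then factorizing the one-step conditional law) matches the paper's proof, and you correctly identify the two essential structural facts: $X_t^a$ depends only on $W_a$, and no product form on $(X^a,T^b)$ is required.

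However, you build the whole argument on the equivalent-channel reduction of Remark \ref{rem:eqiuv-cha}, and you explicitly invoke the Markov factorization (\ref{eq:noi-recnoi-mrk}) (``averaging over $(S,S^b)$ against the kernel $P_{S|S^r}P_{S^b|S^r}$''). The paper emphasizes in Remark \ref{rem:coop-gen-1} that Theorem \ref{theo:main-corr-gen} does \emph{not} require any Markov condition on $P_{S_t,S_t^b,S_t^r}$. Its proof of (\ref{eq:conv-fact-corr-gen-4}) uses the general joint law $P_{S_t^b,S_t,S_t^r}$ directly: since $(S_t,S_t^b,S_t^r)$ is i.i.d.\ and hence independent of $(W_a,W_b,S_{[t-1]}^b)$, the pair $(S_t,S_t^b)$ is independent of $(X_t^a,T_t^b)$ given $S_t^r$, so the aggregated kernel $P_{Y|X^a,T^b,S^r}$ can be formed by marginalizing against $P_{S,S^b|S^r}$ without assuming it splits as $P_{S|S^r}P_{S^b|S^r}$. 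The absence of a product-form requirement on $\hat\pi^\mu$ is precisely what lets the remaining joint dependence be absorbed harmlessly. Your version therefore proves a strictly weaker statement than the theorem as stated; the remedy is simply to drop the equivalent-channel framing and replace $P_{S|S^r}P_{S^b|S^r}$ by the general $P_{S,S^b|S^r}$ throughout. A minor side issue: the ``no product form needed'' observation you attribute to Remark \ref{rem:corr-cau-fl} is really the content of the cooperative discussion surrounding Theorems \ref{theo:main-corr} and \ref{theo:main-corr-gen} (and Remark \ref{rem:coop-gen-1}); Remark \ref{rem:corr-cau-fl} addresses the opposite point, namely why a product form \emph{is} needed for the non-cooperative sum-rate result.
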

\begin{proof}
The achievability proof is identical to that of Theorem \ref{theo:main-corr}. Converse proof is also similar and therefore, we only provide a sketch. In particular, observe the following lines of equations for the converse proof of the condition on $R_b$:
{\allowdisplaybreaks
\begin{eqnarray}
I(W_b;Y_{[n]}, S_{[n]}^r)&\leq&I(W_b;Y_{[n]}, S_{[n]}^r|W_a)\nonumber\\
&=&\sum_{t=1}^n\left[H(Y_t,S_t^r|S_{[t-1]}^r,Y_{[t-1]},W_a)-H(Y_t,S_t^r|S_{[t-1]}^r,Y_{[t-1]}, W_a, W_b)\right]\nonumber\\
&\overset{(i)}{=}&\sum_{t=1}^n\left[H(Y_t|S_{[t]}^r,Y_{[t-1]},W_a)-H(Y_t|S_{[t]}^r,Y_{[t-1]}, W_a, W_b)\right]\nonumber\\
&{=}&\sum_{t=1}^n\left[H(Y_t|S_{[t]}^r,Y_{[t-1]},W_a,X_{t}^a)-H(Y_t|S_{[t]}^r,Y_{[t-1]}, W_a, W_b, X_{t}^a)\right]\nonumber\\
&\overset{(ii)}{\leq}&\sum_{t=1}^n\left[H(Y_t|S_{[t]}^r,X_{t}^a)-H(Y_t|S_{[t]}^r,Y_{[t-1]},W_a, W_b,X_{t}^a,T_{t}^b)\right]\nonumber\\
&\overset{(iii)}{=}&\sum_{t=1}^n\left[H(Y_t|S_{[t]}^r,X_{t}^a)-H(Y_t|S_{[t]}^r,X_{t}^a,T_t^b)\right]\nonumber\\
&=&\sum_{t=1}^n I(T_t^b;Y_t|X_t^a,S_{[t]}^r)\label{eq:conv-ena-corr-gen-1}
\end{eqnarray}}where $(i)$ follows since state is i.i.d., where $T_t^b$ is the Shannon strategy induced by encoder $b$ at time $t$ as shown in (\ref{eq:sha-str-ind-enb-coop}), and $(ii)$ is valid since conditioning reduces entropy, and $(iii)$ is valid since state is i.i.d. and can be shown along the similar lines as (\ref{eq:conv-sr-corr-3a}). Hence, one can directly obtain that{\allowdisplaybreaks
\begin{eqnarray}
R_b&\leq& \sum_{\mathbf{\mu_r} \in {\cal S}_r^{(n)}}\alpha_{\mathbf{\mu_r}} I(T_t^b;Y_t|X_t^a,S_t^r,S_{[t-1]}^r=\mathbf{\mu_r})+\eta(\epsilon)\label{eq:co-com-corr-gen-2}\\
R_a+R_b&\leq& \sum_{\mathbf{\mu_r} \in {\cal S}_r^{(n)}}\alpha_{\mathbf{\mu_r}} I(X_t^a,T_t^b;Y_t|S_t^r,S_{[t-1]}^r=\mathbf{\mu_r})+\eta(\epsilon)\label{eq:co-com-corr-gen-3}
\end{eqnarray}}where $\alpha_{\mathbf{\mu_r}}:= \frac{1}{n}P_{S_{[t-1]}^r}(\mathbf{\mu_r})$ and $\eta(\epsilon)$ is given in (\ref{eq:eta-epsilon}). We now need to show that the joint distribution $P_{X_t^a,T_t^b,Y_t,S_t^r|S_{[t-1]}^r}(x^a,t^b,y,s^r|\mathbf{\mu_r})$ satisfies (\ref{eq:joi-dist-corr-gen}). Let $\pi_{X^a, T^b}^{\mathbf{\mu_r}}(x^a,t^b):=P_{X_t^a,T_t^b|S_{[t-1]}^r}(x^a,t^b|\mathbf{\mu_r})$ and observe that{\allowdisplaybreaks
\begin{eqnarray}
&&\hspace{-0.5in}P_{X_t^a,T_t^b,Y_t,S_t^r|S_{[t-1]}^r}(x^a,t^b,y,s^r|\mathbf{\mu_r})\nonumber\\
&=&\sum_{s_t^b \in {\cal S}^b}\sum_{s_t \in {\cal S}}P_{Y_t|X_t^a,X_t^b,S_t}(y|x^a,t^b(s_t^b),s)P_{S_t^b,S_t,S_r}(s_t^b,s_t,s^r)P_{X_t^a,T_t^b|S_{[t-1]}^r}(x^a,t^b|\mathbf{\mu_r})\nonumber\\
&=&\pi_{X^a, T^b}^{\mathbf{\mu}}(x^a,t^b)P_{S_t^r}(s^r)P_{Y_t|X_t^a,T_t^b,S_t^r}(y|x^a,t^b,s^r)\label{eq:conv-fact-corr-gen-4}
\end{eqnarray}}where the first equality is verified by (\ref{eq:ch-recfl}) and by the fact that $(X_t^a,T_t^b)$ is independent of $(S_t,S_t^b,S_t^r)$.
\end{proof}
\begin{remark}\label{rem:coop-gen-1}
It should be observed that unlike Theorem \ref{theo:main-corr} and results in the previous sections, for the validity of Theorem \ref{theo:main-corr-gen}, it is not required to have a Markov condition on $P_{S_t,S_t^b, S_t^r}(s_t,s_t^b,s_t^r)$ such as the one given in (\ref{eq:noi-recnoi-mrk}). Furthermore, the result also holds with no CSIT, i.e., ${\cal S}_r=\emptyset$ is allowed, and in this case Theorem \ref{theo:main-corr-gen} is as an extension of \cite[Theorem 4]{som-sha-verdu} to a noisy setup.
\end{remark}

Note that for the setup given in \cite[Theorem 4]{som-sha-verdu}, Theorem \ref{theo:main-corr-gen} provides an equivalent characterization. Recall that in \cite[Theorem 4]{som-sha-verdu} the informed encoder has full CSI, i.e., $X_t^b=\phi_{t}^{(b)}(W_a, W_b, S_{[t]})$, both the uniformed encoder and the decoder has no CSI and the capacity region, ${\cal C}_{AS}$, is given as the closure of all rate pairs $(R_a,R_b)$ satisfying
\begin{eqnarray}
R_b&<&I(U;Y|X^a) \label{eq:rate-somek-enb}\\
R_b+R_a &<& I(U,X^a;Y) \label{eq:rate-somek}
\end{eqnarray}
for some joint measure on ${\cal S}\times{\cal X}_a\times{\cal X}_b\times{\cal Y}\times{\cal U}$ having the form
\begin{eqnarray}
P_{Y|X^a,X^b,S}(y|x^a,x^b,s)P_{X^b|U,X^a,S}(x^b|u,x^a,s)P_{S}(s)P_{X^a,U}(x^a,u)\label{eq:dist-somek},
\end{eqnarray}
where $|{\cal U}|\leq|{\cal S}||{\cal X}_a||{\cal X}_b|+1$. On the other hand, for this setup, Theorem \ref{theo:main-corr-gen} gives the capacity region, ${\cal C}^G_{FS}$, as $\overline{co}\bigg(\bigcup_{\hat{\pi}}\mathcal{R}_{C}^{'}(\hat{\pi})\bigg)$ where $\mathcal{R}_{C}^{'}(\hat{\pi})$ denotes the region of all rate pairs $R=(R_a,R_b)$ satisfying
\begin{eqnarray}
R_b &<& I(T;Y|X^{a}) \quad  \label{eq:ra2-corr-alt}\\
R_a+R_b &<& I(T,X^a;Y) \label{eq:ra3-corr-alt}
\end{eqnarray}
where $P_{Y,T,X^a,X^b,S}(y,t,x^a,x^b,s)$ factorizes as
\begin{eqnarray}
P_{Y|X^a,X^b,S}(y|x^a,x^b,s)P_{X^b|S,T}(x^b|s,t)P_{S}(s)\hat{\pi}_{X^a, T}(x^a,t)\label{eq:joi-dist-corr-alt},
\end{eqnarray}
and $T: {\cal S}\rightarrow{\cal X}_b$.

Although the relation between an auxiliary variable and Shannon strategies is well understood for the single-user case (e.g., see \cite[Section 3.2]{keshet}), we believe that it requires more attention in the multi user case; in particular, note the difference between $|{\cal U}|$ and $|{\cal T}|$. Hence, we provide a proof for ${\cal C}^G_{FS}={\cal C}_{AS}$, see Appendix \ref{app:3}.

We conclude this section with the following remark.
\begin{remark}\label{rem:coop-gen-2}
For the validity of converse proof of Theorem \ref{theo:main-corr-gen} it is crucial that $X_t^a$ only depends on $W_a$. To be more explicit, let us assume ${\cal S}_r=\emptyset$ and consider the following steps of the converse
\begin{eqnarray}
I(W_b;Y_{[n]})&\leq& \sum_{t=1}^nH(Y_t|Y_{[t-1]},X_{[n]}^a)-H(Y_t|Y_{[t-1]},W_a,W_b,X_{[n]}^a,T_t^b)\nonumber\\
&=&\sum_{t=1}^nH(Y_t|Y_{[t-1]},X_{[n]}^a)-H(Y_t|Y_{[t-1]},X_{t}^a,T_t^b)\label{eq:corr-alt-cv5}.
\end{eqnarray}
Since $S_t$ is not available to the decoder, the above equality is valid if and only if $X_{[n]}^a$ does not provide any information about $S_t$. Hence, in other words, whether CSITs are noisy or not, if there is no CSI or noisy CSI at the decoder, the arguments above would fail if the uninformed encoder observes some degree of CSI, i.e., $d_a<\infty$ so that $X_{[n]}^a$ carry some information about $(S_t, S_t^b, S_t^r)$.
\end{remark}
\section{Examples}\label{sec:example}
We present two examples. In the first example we discuss the state dependent modulo-additive MAC with noisy CSIT and complete CSIR (as in Section \ref{sec:causal}) and show that the proposed inner and outer bounds are tight and yield the capacity region. In the second example we consider the problem defined in Section \ref{subsec:csit-fnc-rn} where the channel is a binary multiplier MAC with state being an interference sequence.

\subsection{Modulo-Additive FS-MAC with Noisy CSIT and Complete CSIR}\label{sec:ex-modmac}
Recall that both the achievable regions and the sum-rate capacities of Sections \ref{sec:causal} and \ref{sec:noisy} are given in terms of Shannon-strategies. Hence, their computation requires an optimization over an extended space of the input alphabet to a space of strategies and is often hard; in fact, very few explicit solutions exist even in the single-user case. In \cite{erez-zamir} symmetric, modulo-additive, single-user finite-state channel with complete CSIT is considered and a closed-form solution for the capacity is derived. Based on this result, we now consider the modulo-additive FS-MAC with asymmetric noisy CSIT and show that for the sum-rate capacity, the optimal set of strategies has uniform distribution. This enable us to determine the entire capacity region by observing that under the uniform distribution both inner and outer bounds are tight.

To be more explicit, we consider a two-user FS-MAC in which the channel noise, defined by a process $\{Z_t\}_{t=1}^{\infty}$, is correlated with the state process. The channel is given by $Y=X^a\oplus X^b\oplus Z$ where ${\cal X}_a={\cal X}_b={\cal Y}={\cal Z}=\{0,\cdots,q-1\}$ and ${Z}$, is conditionally independent of $(X^a,X^b)$ given the state $S$ and in the sequel addition (and subtraction) is understood to be performed mod-$q$. Assume further that we have the setup of Section \ref{sec:causal}. The following theorem is the main result of this example and can be though as an extension of \cite[Theorem 1]{erez-zamir} to a noisy multi-user setting.
\begin{theorem}\label{theo:mod-mac}
The capacity region of the modulo-additive FS-MAC defined above is given by the closure of the rate pairs $(R_a,R_b)$ satisfying{\allowdisplaybreaks
\begin{eqnarray}
R_a<\log q - H_{\min}\nonumber\\
R_b<\log q - H_{\min} \nonumber\\
R_a+R_b<\log q - H_{\min} \label{eq:mod-mac1}
\end{eqnarray}}where $H_{\min}:=\min_{t^a,t^b}H(Z+t^a(S^a)+t^b(S^b)|S)$.
\end{theorem}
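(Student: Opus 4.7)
The region in the statement is the closed triangle
$$\bigl\{(R_a,R_b) \in \mathds{R}^{+}\times\mathds{R}^{+} \,:\, R_a + R_b \leq \log q - H_{\min}\bigr\},$$
because the per-user bounds are redundant once $R_a,R_b\ge 0$ and the sum bound are imposed. My plan is therefore to (i) tighten the sum-rate converse of Corollary~\ref{cor:sumra-cau-fl} for this additive channel down to $\log q - H_{\min}$, and (ii) exhibit a \emph{single} product policy $\pi^{\star}$ whose achievable region $\mathcal{R}_{FS}(\pi^{\star})$ from Theorem~\ref{theo:inbnd-cau-fl} already contains the whole triangle, so no time-sharing is needed beyond taking closures.

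\textbf{Converse step.} Starting from
$${\cal C}_{FS}^{\sum} \;=\; \sup_{\pi_{T^a}\pi_{T^b}} I(T^a,T^b;Y|S) \;=\; \sup_{\pi_{T^a}\pi_{T^b}}\bigl[H(Y|S)-H(Y|T^a,T^b,S)\bigr],$$
I would bound $H(Y|S)\le \log q$ (as $Y$ is $q$-ary). Using $Y = T^a(S^a)+T^b(S^b)+Z$ together with \eqref{eq:noi-recfl-mrk} and the fact that $Z$ is conditionally independent of $(T^a,T^b,S^a,S^b)$ given $S$, the second term obeys
$$H(Y|T^a,T^b,S) \;=\; \sum_{t^a,t^b}\pi_{T^a}(t^a)\pi_{T^b}(t^b)\,H\bigl(t^a(S^a)+t^b(S^b)+Z\,\big|\,S\bigr) \;\ge\; H_{\min}.$$
Combining these yields ${\cal C}_{FS}^{\sum}\le \log q - H_{\min}$, which in turn forces $R_a+R_b\le \log q-H_{\min}$ for every achievable pair.

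\textbf{Achievability step.} Let $(t^a_{*},t^b_{*})$ attain the minimum in the definition of $H_{\min}$ and let $U,U'$ be mutually independent uniform random variables on $\{0,\ldots,q-1\}$, independent of $(W_a,W_b,S,S^a,S^b,Z)$. I would take $\pi^{\star}$ to be the product distribution induced by the Shannon strategies $T^a(\cdot):=t^a_{*}(\cdot)+U$ and $T^b(\cdot):=t^b_{*}(\cdot)+U'$; the independence $U\perp U'$ ensures the product form required by \eqref{eq:team-pol-shst}. Translation invariance of entropy then gives
$$H(Y|T^a,T^b,S) \;=\; H\bigl(t^a_{*}(S^a)+t^b_{*}(S^b)+Z\,\big|\,S\bigr) \;=\; H_{\min},$$
while conditional on $(T^b,S)$ the shift $U$ is still uniform and independent of $(S^a,S^b,Z,T^b)$, so $Y$ is uniform on $\{0,\ldots,q-1\}$ given $(T^b,S)$; hence $H(Y|T^b,S)=\log q$, and symmetrically $H(Y|T^a,S)=H(Y|S)=\log q$. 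Substituting into \eqref{eq:ra1-shst-fl}--\eqref{eq:ra3-shst-fl}, all three mutual informations equal $\log q-H_{\min}$, so $\mathcal{R}_{FS}(\pi^{\star})$ is the open triangle whose closure is precisely the claimed region.

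\textbf{Main obstacle.} The delicate point is simultaneously maximizing $H(Y|S)$ and minimizing $H(Y|T^a,T^b,S)$ under a \emph{product} law on $(T^a,T^b)$: naively these two objectives pull in opposite directions---randomize versus concentrate the strategies. Independent uniform shifts on the cyclic group of order $q$ reconcile them because (a) additive shifts preserve entropies of sums, keeping the inner term at $H_{\min}$, and (b) a single uniform summand already uniformizes $Y$ over $\{0,\ldots,q-1\}$. This is the multi-user adaptation of the single-user trick of \cite{erez-zamir}; the main extra verification is that the independence $U\perp U'$ yields a genuine product policy and that $P_{Y|T^a,T^b,S}$ factorizes as in \eqref{eq:joi-dist-shst-fl} under \eqref{eq:noi-recfl-mrk}.
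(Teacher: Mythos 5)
Your proposal is correct and takes essentially the same approach as the paper: both bound the sum-rate converse by $\log q - H_{\min}$ using $H(Y\mid S)\le\log q$ together with the observation that $H(Y\mid T^a,T^b,S)$ is a convex combination of $H(Z+t^a(S^a)+t^b(S^b)\mid S)\ge H_{\min}$, and both achieve all three rate constraints at $\log q-H_{\min}$ by applying independent uniform cyclic shifts to the minimizing deterministic strategies (the paper parameterizes the shifted strategy sets $\mathcal T^*_a,\mathcal T^*_b$ explicitly, while you describe the same policy via $T^a=t^a_*+U$, $T^b=t^b_*+U'$). The one cosmetic difference is that you observe upfront that the per-user bounds are redundant, which slightly streamlines the presentation but does not change the argument.
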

\begin{proof}
First, recall the rate condition given in Theorem \ref{theo:outbnd-cau-fl};
\begin{eqnarray}
R_a+R_b&\leq& H(Y|S)-H(Y|T^a,T^b,S)\label{eq:mod-mac2}.
\end{eqnarray}
The sketch of the proof is to first determine the optimal distributions of $t^a,t^b$, the distributions achieving the sum-rate capacity, and then concluding with the fact that these distributions yield the same inner bound. Let us first consider $H(Y|T^a,T^b,S)$. Clearly, $P_{Y|X^a,X^b,S}(y|x^a,x^b,s)=P_{Z|S}(y-x^a-x^b|s)$ and  $H(Y|T^a,T^b,S)\geq \min_{t^a,t^b}H(Y|T^a=t^a,T^b=t^b,S)$. Observe that
\begin{eqnarray}
P_{Y|T^a,T^b,S}(y|t^a,t^b,s)&=&\sum_{s^a,s^b}P_{Y|T^a,T^b,S^a,S^b,S}(y|t^a,t^b,s^a,s^b,s)P_{S^a,S^b|S}(s^a,s^b|s)\nonumber\\
&=&\sum_{s^a,s^b}P_{Z|S}(Z=y-t^a(s^a)-t^b(s^b)|s)P_{S^a,S^b|S}(s^a,s^b|s)\nonumber\\
&=&P_{Z+t^a(S^a)+t^b(S^b)|S}(y|s).\label{eq:mod-mac3}
\end{eqnarray}
where the second step is valid since $Z$ is conditionally independent of $(S^a,S^b)$ given $S$. Therefore, $H(Y|T^a=t^a,T^b=t^b,S)=H(Z+t^a(S^a)+t^b(S^b)|S)$. Let $(t^{a*},t^{b*})$ be two mappings from ${\cal S}_a$ to ${\cal X}_a$ and  ${\cal S}_b$ to ${\cal X}_b$ for which $H(Y|T^a=t^{a*},T^b=t^{b*},S)=H_{\min}$. Now, by Corollary \ref{cor:sumra-cau-fl}, we have{\allowdisplaybreaks
\begin{eqnarray}
{\cal C}^{FS}_{\sum}&=&\sup_{\pi_{T^a}(t^a)\pi_{T^b}(t^b)}\left[H(Y|S)-H(Y|T^a,T^b,S)\right]\nonumber\\
&\leq&\sup_{\pi_{T^a}(t^a)\pi_{T^b}(t^b)}H(Y|S) - H_{\min} \label{eq:mod-mac4},
\end{eqnarray}}and we now determine the policies $\{\pi_{T^a}(t^a), \enspace t^a \in {\cal T}_a\}$ and $\{\pi_{T^b}(t^b), \enspace t^b \in {\cal T}_b\}$ achieving the supremum above. Let us first define the following class of strategies
\begin{eqnarray}
{\cal T}_a^{*}&:=&\{t^{a}_{\tau}\}, \enspace \mbox{where} \enspace t^{a}_{\tau}(s^a)=t^{a*}(s^a)+\tau, \enspace \tau=1,\cdots, q \label{eq:mod-mac5}\\
{\cal T}_b^{*}&:=&\{t^{b}_{\tau}\}, \enspace \mbox{where} \enspace t^{b}_{\tau}(s^b)=t^{b*}(s^b)-\tau, \enspace \tau=1,\cdots, q \label{eq:mod-mac6}.
\end{eqnarray}
It should be noted that $H(Y|T^a=t^{a*},T^b=t^{b*},S)=H(Y|T^a=t^{a}_{\tau},T^b=t^{b}_{\tau},S)$ since $H(Y|T^a=t^a,T^b=t^b,S)=H(Z+t^a(S^a)+t^b(S^b)|S)$. Note that $H(Y|S)\leq \log |{\cal Y}|=\log q$, but if we choose $T^a$ and $T^b$ uniformly distributed within ${\cal T}_a^{*}$ and ${\cal T}_b^{*}$, respectively (with zero mass on strategies not in ${\cal T}_a^{*}$ and ${\cal T}_b^{*}$), we would get
{\allowdisplaybreaks
\begin{eqnarray}
P_{Y|S}(y|s)&\overset{(i)}{=}&\sum_{s^a,s^b}\sum_{t^a \in {\cal T}_a^{*}}\sum_{t^b \in {\cal T}_b^{*}}P_{Y|T^b,T^b,S^a,S^b,S}(y|t^a,t^b,s^a,s^b,s)\frac{1}{q^2}P_{S^a,S^b|S}(s^a,s^b|s)\nonumber\\
&=&\sum_{s^a,s^b}P_{S^a,S^b|S}(s^a,s^b|s)\frac{1}{q^2}\sum_{t^a \in {\cal T}_a^{*}}\sum_{t^b \in {\cal T}_b^{*}}P_{Z|S}(y-t^a(s^a)-t^b(s^b)|s)\nonumber\\
&\overset{(ii)}{=}&\sum_{s^a,s^b}P_{S^a,S^b|S}(s^a,s^b|s)\frac{1}{q^2}\sum_{t^a \in {\cal T}_a^{*}}1\nonumber\\
&\overset{(iii)}{=}&\frac{1}{q}
\end{eqnarray}}where $(i)$ valid since $T^a$ and $T^b$ are uniformly distributed, $(ii)$ is due to (\ref{eq:mod-mac6}) (i.e., follows from the fact that ${t^b \in {\cal T}_b^{*}}$ traces all possible values of $Z$) and finally, $(iii)$ is valid since $|{\cal T}_a^{*}|=q$. Therefore, we get that ${\cal C}_{FS}^{\sum}=\log q- H_{\min}$ which is achieved by
\begin{eqnarray}
\pi_{T^a}(t^a)=\frac{1}{q}, \enspace \forall t^a \in {\cal T}_a^{*}, \enspace \pi_{T^b}(t^b)=\frac{1}{q}, \enspace \forall t^b \in {\cal T}_b^{*}\label{eq:mod-mac7}.
\end{eqnarray}
Let us now consider the inner bound. In particular, we need to show that the sets of policies in (\ref{eq:mod-mac7}) give $H(Y|T^a,S)=H(Y|T^b,S)=\log q$. Consider $H(Y|T^a,S)$ and observe that{\allowdisplaybreaks
\begin{eqnarray}
P_{Y|T^a,S}(y|t^a,s)&\overset{(iv)}{=}&\sum_{s^a,s^b}\sum_{t^b \in {\cal T}_b^{*}}P_{Y|T^b,T^b,S^a,S^b,S}(y|t^a,t^b,s^a,s^b,s)\frac{1}{q}P_{S^a,S^b|S}(s^a,s^b|s)\nonumber\\
&=&\sum_{s^a,s^b}P_{S^a,S^b|S}(s^a,s^b|s)\frac{1}{q}\sum_{t^b \in {\cal T}_b^{*}}P_{Z|S}(y-t^a(s^a)-t^b(s^b)|s)\nonumber\\
&\overset{(v)}{=}&\sum_{s^a,s^b}P_{S^a,S^b|S}(s^a,s^b|s)\frac{1}{q}\nonumber\\
&=&\frac{1}{q}
\end{eqnarray}}where $(iv)$ is valid since $T^b$ is uniformly distributed and $(v)$ is due to (\ref{eq:mod-mac6}) (i.e., follows from the fact that ${t^b \in {\cal T}_b^{*}}$ traces all possible values of $Z$). Thus, $H(Y|T^a,S)=\log q$. It can be shown similarly that under (\ref{eq:mod-mac7}) $H(Y|T^b,S)=\log q$.
\end{proof}

Finally, it is easy to see that when there is no side information at the encoders and at the decoder the capacity region of modulo-addtive FS-MAC is given by the closure of rate pairs $(R_a,R_b)$ where{\allowdisplaybreaks
\begin{eqnarray}
R_a\leq \log q - H(Z)\nonumber\\
R_b\leq \log q - H(Z)  \nonumber\\
R_a+R_b\leq \log q - H(Z).
\end{eqnarray}} Observe that we have{\allowdisplaybreaks
\begin{eqnarray}
H(Z+t^{a}(S^a)+t^{b}(S^b)|S)&\leq& H(Z|S)+H(t^{a}(S^a)+t^{b}(S^b)|S)\nonumber\\
H_{\min}=\min_{t^a,t^b}H(Z+t^{a}(S^a)+t^{b}(S^b)|S)&\leq& \min_{t^a,t^b}\left[H(Z|S)+H(t^{a}(S^a)+t^{b}(S^b)|S)\right]\nonumber\\
&\overset{(vi)}{=}& H(Z|S)\nonumber\\
&\overset{(vii)}{<}&H(Z)\nonumber
\end{eqnarray}}where $(vi)$ can be achieved with any deterministic mapping and $(vii)$ is valid since $Z$ and $S$ (and hence $S$) are correlated. Therefore, availability of state information strictly increases, by an amount of at least $I(S;Z)$, the capacity region of the modulo-additive FS-MAC.
\subsection{Binary Multiplier FS-MAC with Interference}
Consider the binary multiplier MAC with state process interfering the output, namely $Y=X^aX^b\oplus S$ where ${\cal X}_a={\cal X}_b={\cal Y}={\cal S}=\{0,1\}$. Assume further that the communication setup is given as in Section \ref{subsec:csit-fnc-rn} with $S^r=S\oplus Z^r$ where $Z^r\sim$ $\mbox{Ber}(p_r)$ is Bernoulli with $P(Z^r=1)=p_r$ . We now show that the capacity region, with both causal and non-causal coding, of this channel is given by the closure of $(R_a,R_b)$ where $R_a<1-H(S|S^r)$, $R_b<1-H(S|S^r)$ and $R_a+R_b<1-H(S|S^r)$.

First recall the capacity region given in Theorem \ref{theo:main-inp-ncas} and observe that $H(Y|S^r,X^a,X^b)=H(X^aX^b\oplus S|S^r,X^a,X^b)=H(S|S^r,X^a,X^b)=H(S|S^r)$, where the last equality follows from (\ref{eq:noi-recnoi-mrk}). Hence, input distributions do not effect $H(Y|S^r,X^a,X^b)$. Obviously, $H(Y|S^r)\leq1$, $H(Y|S^r,X^a)\leq1$ and $H(Y|S^r,X^b)\leq1$ and we now show that equalities can be achieved. More explicitly, we have the following optimizing distributions which can be shown using basic inequalities{\allowdisplaybreaks
\begin{eqnarray}
\argmax_{\pi_{X^a|S^a}(x^a|f^a(s^r)), \pi_{X^b|S^b}(x^b|f^b(s^r))}H(Y|S^r)&=&\left\{\pi_{X^a|S^a}(0|f^a(0))=\pi_{X^a|S^a}(0|f^a(1))=0.5,\right.\nonumber\\
&&\hspace{-0.2in}\left.\pi_{X^b|S^b}(0|f^b(0))=\pi_{X^b|S^b}(0|f^b(1))=0.5\right\}\label{eq:bin-mac-opt-pi1}\\
\argmax_{\pi_{X^a|S^a}(x^a|f^a(s^r)), \pi_{X^b|S^b}(x^b|f^b(s^r))}H(Y|S^r,X^a)&=&\left\{\pi_{X^a|S^a}(0|f^a(0))=\pi_{X^a|S^a}(0|f^a(1))=0,\right.\nonumber\\
&&\hspace{-0.2in}\left.\pi_{X^b|S^b}(0|f^b(0))=\pi_{X^b|S^b}(0|f^b(1))=0.5\right\}\label{eq:bin-mac-opt-pi2}\\
\argmax_{\pi_{X^a|S^a}(x^a|f^a(s^r)),
\pi_{X^b|S^b}(x^b|f^b(s^r))}H(Y|S^r,X^b)&=&\left\{\pi_{X^b|S^b}(0|f^b(0))=\pi_{X^b|S^b}(0|f^b(1))=0,\right.\nonumber\\
&&\hspace{-0.2in}\left.\pi_{X^a|S^a}(0|f^b(0))=\pi_{X^b|S^b}(0|f^b(1))=0.5\right\}\label{eq:bin-mac-opt-pi3}
\end{eqnarray}}and in the rest, let us show that these yield the equalities in the conditional entropies. Let us start with $R_a$, i.e., $H(Y|S^r,X^b)$. Note that
\begin{eqnarray}
H(Y|S^r,X^b)=\sum_{s^r \in \{0,1\}}\sum_{x^b\in \{0,1\}}P_{S^r}(s^r)\pi_{X^b|S^b}(x^b|f^b(s^r))H(Y|S^r=s^r,X^b=x^b)\label{eq:bin-mac-ra-1}.
\end{eqnarray}
Substituting (\ref{eq:bin-mac-opt-pi3}) in (\ref{eq:bin-mac-ra-1}) gives
\begin{eqnarray}
H(Y|S^r,X^b)=P_{S^r}(0)H(X^a\oplus S|X^b=1,S^r=0)+P_{S^r}(1)H(X^a\oplus S|X^b=1,S^r=1)\label{eq:bin-mac-ra-2}.
\end{eqnarray}
We next show that under (\ref{eq:bin-mac-opt-pi3}) $H(X^a\oplus S|X^b=1,S^r=0)=1$, for which it is enough to show that $P_{X^a\oplus S|X^b,S^r}(0|1,0)=0.5$. We have{\allowdisplaybreaks
\begin{eqnarray}
&&\hspace{-0.7in}P_{X^a\oplus S|X^b,S^r}(0|1,0)\nonumber\\
&=&\sum_{s \in \{0,1\}}\sum_{x^a \in \{0,1\}}P_{X^a\oplus S|S,X^a,X^b,S^r}(0|s,x^a,1,0)P_{S|S^r}(s|0)\pi_{X^a|S^a}(x^a|f^a(0))\label{eq:bin-mac-ra-3}\\
&=&P_{S|S^r}(0|1)\big[0.5P_{X^a\oplus S|S,X^a,X^b,S^r}(0|0,0,1,0)+0.5P_{X^a\oplus S|S,X^a,X^b,S^r}(0|0,1,1,0)\big]\nonumber\\
&&\hspace{-0.1in}+P_{S|S^r}(1|1)\big[0.5P_{X^a\oplus S|S,X^a,X^b,S^r}(0|1,0,1,0)+0.5P_{X^a\oplus S|S,X^a,X^b,S^r}(0|1,1,1,0)\big]\nonumber\\
&=&0.5\nonumber,
\end{eqnarray}
where (\ref{eq:bin-mac-ra-3}) is due to (\ref{eq:noi-recnoi-mrk}) and (\ref{eq:team-pol-inp-ncas}). We can similarly show that $P_{X^a\oplus S|X^b,S^r}(0|1,1)=0.5$ and hence, $H(X^a\oplus S|X^b=1,S^r=1)=1$. Therefore, $H(Y|S^r,X^b)=1$. Since the above derivation is symmetric, under (\ref{eq:bin-mac-opt-pi2}) $H(Y|X^a,S^r)=1$.}

It now remains to show that with (\ref{eq:bin-mac-opt-pi1}) $H(Y|S^r)$ is equal to one. It should be observed that{\allowdisplaybreaks
\begin{eqnarray}
&&\hspace{-0.4in}P_{X^aX^b\oplus S|S^r}(\cdot|s^r)\nonumber\\
&\overset{(i)}{=}&\sum_{x^a,x^b,s \in \{0,1\}}P_{X^aX^b\oplus S|X^a,X^b,S}(\cdot|x^a,x^b,s)\pi_{X^a|S^a}(x^a|f^a(s^r))\pi_{X^b|S^b}(x^b|f^b(s^r))P_{S|S^r}(s|s^r)\nonumber\\
&\overset{(ii)}{=}&0.25\sum_{s \in \{0,1\}}P_{S|S^r}(s|s^r)\sum_{x^a,x^b \{0,1\}}P_{X^aX^b\oplus S|X^a,X^b,S}(\cdot|x^a,x^b,s)\nonumber\\
&=&0.5\nonumber
\end{eqnarray}}where $(i)$ is due to (\ref{eq:noi-recnoi-mrk}) and (\ref{eq:team-pol-inp-ncas}), $(ii)$ is due to (\ref{eq:bin-mac-opt-pi1}) and the last step is valid since for given $s$, there are only two pairs of $(x^a,x^b)$ for which $P_{X^aX^b\oplus S|X^a,X^b,S}(\cdot|x^a,x^b,s)=1$ (and zero for the other twos). Hence, $H(Y|S^r)=1$.

Finally, it can be easily shown that the capacity region of $Y=X^aX^b\oplus S$ without CSIT and CSIR is given by the closure of $(R_a,R_b)$ where $R_a<1-H(S)$, $R_b<1-H(S)$ and $R_a+R_b<1-H(S)$. Therefore, availability of noisy CSI at the encoders (both causal and non-causal) and at the decoder increases the capacity region by an amount of $I(S;S^r)$.

\section{Conclusion and Remarks}\label{sec:conc}
We have considered several scenarios for the memoryless FS-MAC with asymmetric noisy CSI at the encoders and complete and noisy CSI at the receiver. When the encoders have access to causal noisy CSI, single letter inner and outer bounds, which are tight for the sum-rate capacity, are obtained. Furthermore, under the assumption that CSI at the encoders are provided by the decoder through noisy feedback links, we demonstrate that a tight converse for the sum-rate capacity still holds if the decoder also observes noisy CSI. In order to reduce the space of optimization, from Shannon strategies to channel inputs, we consider the case where CSITs are asymmetric deterministic functions of noisy CSIR. The equivalent channel demonstrates that the causal setup of this problem is considered in \cite{como-yuksel} and a single-letter characterization for capacity region is provided. Hence, we also considered the non-causal setup and showed that the causal and non-causal capacity regions are identical.

When the decoder does not need to access the current CSI at the encoder, which matches with the delayed scenario, we observe that a single letter characterization of the capacity region can be obtained when the channel state is an i.i.d. stochastic process. We further discuss a cooperative scenario and show that when the common message encoder does not have an access to the current noisy CSI, due to delay, it is possible to obtain a single letter expression for the capacity region. Since a product form is not required in a cooperative scenario, we observed that as soon as the common message encoder does not have access to CSI, then in any noisy setup, covering the cases where no CSIR or noisy CSIR, it is possible to obtain the capacity region.

Finally, the following further problems are worth to be explored: the complete characterization of the capacity region for the problem defined in Section \ref{sec:causal} and its non-causal extension, the cooperative FS-MAC where both encoders observe causal noisy CSI and the cooperative FS-MAC where informed encoder observe noisy CSI non-causally and the other encoder observes noisy CSI with delay.
\bigskip

\appendices
\section{Converse Proof of Theorem \ref{theo:main-inp-ncas}: Non-Causal Case}\label{app:0}
\begin{proof}
Let
\begin{eqnarray}
\alpha_{\mathbf{\mu_{p,f}}}:= \frac{1}{n}P_{S_{[1,t-1]}^r,S_{[t+1,n]}^r}(\mathbf{\mu_p}, \mathbf{\mu_f})\label{eq:alp-mupf}.
\end{eqnarray}
Observe that $(\mathbf{\mu_{p}}:\mathbf{\mu_{f}}) \in {\cal S}_r^{n-1}$, where $(v:w)$ denotes the concatenation of two vectors $v$ and $w$, and
\begin{eqnarray}
\sum_{(\mathbf{\mu_{p}}:\mathbf{\mu_{f}}) \in {{\cal S}_r}^{n-1}}\alpha_{\mathbf{\mu_{p,f}}}=\frac{1}{n}\sum_{1\leq t \leq n}\enspace \sum_{\mathbf{\mu_{p}},\mathbf{\mu_f}}P_{S_{[1,t-1]}^r,S_{[t+1,n]}^r}(\mathbf{\mu_p}, \mathbf{\mu_f})=1\nonumber.
\end{eqnarray}
\begin{lemma}\label{lem:conv-ncas}
Assume that a rate pair $R=(R_a,R_b)$, with block length $n\geq1$ and a constant $\epsilon \in (0,1/2)$, is achievable. Then,
{\allowdisplaybreaks
\begin{eqnarray}
R_a&\leq&\sum_{(\mathbf{\mu_{p}}:\mathbf{\mu_{f}})}\alpha_{\mathbf{\mu_{p,f}}} I(X_t^a;Y_t|X_t^b,S_t^r,S_{[t-1]}^r=\mathbf{\mu_p},S_{[t+1,n]}^r=\mathbf{\mu_f})+\eta(\epsilon)\label{eq:co-com-ncas-inp-1}\\
R_b&\leq&\sum_{(\mathbf{\mu_{p}}:\mathbf{\mu_{f}})}\alpha_{\mathbf{\mu_{p,f}}} I(X_t^b;Y_t|X_t^a,S_t^r,S_{[t-1]}^r=\mathbf{\mu_p},S_{[t+1,n]}^r=\mathbf{\mu_f})+\eta(\epsilon)\label{eq:co-com-ncas-inp-2}\\
R_a+R_b&\leq&\sum_{(\mathbf{\mu_{p}}:\mathbf{\mu_{f}})}\alpha_{\mathbf{\mu_{p,f}}} I(X_t^a,X_t^b;Y_t|S_t^r,S_{[t-1]}^r=\mathbf{\mu_p},S_{[t+1,n]}^r=\mathbf{\mu_f})+\eta(\epsilon)\label{eq:co-com-ncas-inp-3}
\end{eqnarray}
}
\end{lemma}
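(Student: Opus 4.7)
The plan is to adapt the proof of Lemma \ref{lem:conv-cau-fl} to the non-causal setting, with the crucial modification that the conditioning is now on the \emph{full} CSIR sequence $S_{[n]}^r$ rather than only the past $S_{[t-1]}^r$. First I would apply Fano's inequality to $H(\mathbf{W}\,|\,Y_{[n]},S_{[n]}^r)$ exactly as in (\ref{eq:conv-cau-fl-3})--(\ref{eq:conv-cau-fl-5}) to obtain, for every $\epsilon\in(0,1/2)$,
\begin{eqnarray}
R_a + R_b \;\leq\; \frac{1}{1-\epsilon}\frac{1}{n}\Bigl[I(\mathbf{W};Y_{[n]},S_{[n]}^r)+H(\epsilon)\Bigr],\nonumber
\end{eqnarray}
together with the analogous bounds on $R_a$ (with $W_b$ in the conditioning) and on $R_b$ (with $W_a$ in the conditioning). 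Since $\mathbf{W}$ is independent of $S_{[n]}^r$, these reduce to mutual informations of the form $I(\cdot\,;Y_{[n]}\,|\,S_{[n]}^r,\cdot)$, which I expand by the chain rule into a sum of single-letter terms $I(\cdot\,;Y_t\,|\,S_{[n]}^r,Y_{[t-1]},\cdot)$.

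Next, for the key single-letter step I would verify, along the same lines as (\ref{eq:conv-cau-fl-6a}), the conditional independence
\begin{eqnarray}
P_{Y_t\,|\,\mathbf{W},Y_{[t-1]},S_{[n]}^r,X_t^a,X_t^b}(y_t\,|\,\cdot)\;=\;P_{Y_t\,|\,X_t^a,X_t^b,S_t^r}(y_t\,|\,x_t^a,x_t^b,s_t^r),\nonumber
\end{eqnarray}
using that $(X_t^a,X_t^b)$ is a deterministic function of $(\mathbf{W},S_{[n]}^r)$ in the non-causal model (recall $S_t^i=f^i(S_t^r)$), and that under (\ref{eq:noi-recnoi-mrk}) together with the i.i.d.\ assumption (\ref{eq:star-iid}), the true state $S_t$ is conditionally independent of $(S_{[n]\setminus t}^r,\mathbf{W},Y_{[t-1]})$ given $S_t^r$. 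The same argument also shows that $P_{Y_t\,|\,X_t^a,X_t^b,S_{[n]}^r}=P_{Y_t\,|\,X_t^a,X_t^b,S_t^r}$, so I may freely replace the subtrahend by $H(Y_t\,|\,X_t^a,X_t^b,S_{[n]}^r)$. For the minuend, I drop the conditioning on $Y_{[t-1]}$ (and on $W_b$ or $W_a$, where applicable) via ``conditioning reduces entropy'', leaving $H(Y_t\,|\,X_t^b,S_{[n]}^r)$, $H(Y_t\,|\,X_t^a,S_{[n]}^r)$, and $H(Y_t\,|\,S_{[n]}^r)$ respectively, yielding the single-letter mutual informations $I(X_t^a;Y_t\,|\,X_t^b,S_{[n]}^r)$, $I(X_t^b;Y_t\,|\,X_t^a,S_{[n]}^r)$, and $I(X_t^a,X_t^b;Y_t\,|\,S_{[n]}^r)$.

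Finally, I would absorb the $H(\epsilon)/(n(1-\epsilon))$ residue into $\eta(\epsilon)$ using the bound $I(\cdot\,;Y_t\,|\,S_{[n]}^r)\leq\log|\mathcal{Y}|$, exactly as in step $(a)$ of (\ref{eq:conv-cau-fl-7}), and then rewrite each average $\tfrac{1}{n}\sum_{t}I(\cdot\,;Y_t\,|\,S_{[n]}^r)$ as the convex combination $\sum_{(\mathbf{\mu_p},\mathbf{\mu_f})}\alpha_{\mathbf{\mu_{p,f}}}I(\cdot\,;Y_t\,|\,S_t^r,S_{[t-1]}^r=\mathbf{\mu_p},S_{[t+1,n]}^r=\mathbf{\mu_f})$ by grouping the expectation according to the realization of $(S_{[t-1]}^r,S_{[t+1,n]}^r)$, which is exactly the right-hand side of (\ref{eq:co-com-ncas-inp-1})--(\ref{eq:co-com-ncas-inp-3}). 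The principal obstacle relative to the causal case is that $Y_{[t-1]}$ depends on the \emph{entire} sequence $S_{[n]}^r$ through the non-causal encoders, so the ``drop $Y_{[t-1]}$'' step would fail if only the past CSIR $S_{[t-1]}^r$ were present in the conditioning; placing the future CSIR $S_{[t+1,n]}^r$ in the conditioning from the outset---precisely the reason for the two-sided convex combination in the statement---is what restores the argument and exploits the essential fact, recalled in Remark \ref{rem:ncas-1}, that the decoder's CSI contains that of the encoders.
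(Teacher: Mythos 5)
Your proposal is correct and follows essentially the same route as the paper's proof of Lemma \ref{lem:conv-ncas}: Fano's inequality, the independence of $\mathbf{W}$ and $S_{[n]}^r$, a chain-rule expansion, dropping $Y_{[t-1]}$ (and $W_b$ or $W_a$) from the first entropy term via conditioning-reduces-entropy, the conditional-independence identity (using that $\mathbf{X}_{[n]}$ is a deterministic function of $(\mathbf{W},S_{[n]}^r)$ and the channel memorylessness (\ref{eq:ch-recfl}) together with the i.i.d.\ structure (\ref{eq:star-iid})) to reduce the second term to $H(Y_t\,|\,\mathbf{X}_t,S_{[n]}^r)$, absorption of the Fano residue into $\eta(\epsilon)$, and finally rewriting the time-average as the convex combination over $(\mathbf{\mu_p},\mathbf{\mu_f})$. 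The only cosmetic deviation is that you reduce the subtrahend all the way to $H(Y_t\,|\,\mathbf{X}_t,S_t^r)$ whereas the paper keeps the conditioning on $S_{[n]}^r$ until the final step where it is split into $(S_t^r,\mathbf{\mu_p},\mathbf{\mu_f})$; both are equivalent.
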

\begin{proof}
Let us first consider the sum-rate. With standard steps, we get
\begin{eqnarray}
R_a+R_b &\leq& \frac{1}{1-\epsilon}\frac{1}{n}\left(I(\mathbf{W};Y_{[n]}, S_{[n]}^r)+H(\epsilon)\right)\label{eq:conv-sr-inp-ncas-2}.
\end{eqnarray}
Note that since $S_{[n]}^r$ is independent of $\mathbf{W}$, we have $I(\mathbf{W};Y_{[n]},S_{[n]}^r)=I(\mathbf{W};Y_{[n]}|S_{[n]}^r)$ and{\allowdisplaybreaks
\begin{eqnarray}
I(\mathbf{W};Y_{[n]}|S_{[n]}^r)&=&\sum_{t=1}^{n}\left[H(Y_t|S_{[n]}^r,Y_{[t-1]})-H(Y_t|\mathbf{W},S_{[n]}^r,Y_{[t-1]})\right]\nonumber\\
&\overset{(i)}{\leq}&\sum_{t=1}^{n}\left[H(Y_t|S_{[n]}^r)-H(Y_t|\mathbf{W},S_{[n]}^r,Y_{[t-1]})\right]\nonumber\\
&\overset{(ii)}{=}&\sum_{t=1}^{n}\left[H(Y_t|S_{[n]}^r)-H(Y_t|\mathbf{W},S_{[n]}^r,Y_{[t-1]},\mathbf{X}_{[n]})\right]\nonumber\\
&\overset{(iii)}{=}&\sum_{t=1}^{n}\left[H(Y_t|S_{[n]}^r)-H(Y_t|S_{[n]}^r,\mathbf{X}_t)\right]\nonumber\\
&=&\sum_{t=1}^{n}I(\mathbf{X}_t;Y_t|S_{[n]}^r)\label{eq:conv-sr-inp-ncas-3}
\end{eqnarray}}where $(i)$ follows since conditioning reduces entropy, $(ii)$ holds since $X_t^i=\phi^{(i)}_t(W_i,f^i(S_{[n]}^r))$, $i=\{a,b\}$, and $(iii)$ is due to (\ref{eq:ch-recfl}). Combining (\ref{eq:conv-sr-inp-ncas-2}) and (\ref{eq:conv-sr-inp-ncas-3}) similar to (\ref{eq:conv-cau-fl-6a}), gives
\begin{eqnarray}
R_a+R_b\leq\frac{1}{n}\sum_{t=1}^nI(X_t^a,X_t^b;Y_t|S_{[n]}^r)+\eta(\epsilon)\label{eq:conv-sr-inp-ncas-4}
\end{eqnarray}
Furthermore,
\begin{eqnarray}
I(X_t^a,X_t^b;Y_t|S_{[n]}^r)=n\sum_{(\mathbf{\mu_{p}}:\mathbf{\mu_{f}})}\alpha_{\mathbf{\mu_{p,f}}}I(X_t^a,X_t^b;Y_t|S_t^r, S_{[t-1]}^r=\mathbf{\mu_p},S_{[t+1,n]}^r=\mathbf{\mu_f}),\label{eq:conv-sr-inp-ncas-4a}
\end{eqnarray}
and substituting the above into (\ref{eq:conv-sr-inp-ncas-4}) yields (\ref{eq:co-com-ncas-inp-3}).

Let us now consider encoder $a$. Using Fano's inequality and standard steps we first get,
\begin{eqnarray}
R_a\leq \frac{1}{1-\epsilon}\frac{1}{n}\left(I(W_a;Y_{[n]}, S_{[n]}^r)+H(\epsilon)\right)\label{eq:conv-sr-inp-ncas-5}.
\end{eqnarray}
Furthermore,
{\allowdisplaybreaks
\begin{eqnarray}
I(W_a;Y_{[n]}, S_{[n]}^r)&\overset{(i)}{\leq}&I(W_a;Y_{[n]}|S_{[n]}^r,W_b)\nonumber\\
&=&\sum_{t=1}^n\left[H(Y_t|S_{[n]}^r,Y_{[t-1]},W_b)-H(Y_t|S_{[n]}^r,Y_{[t-1]},\mathbf{W})\right]\nonumber\\
&\overset{(ii)}{\leq}&\sum_{t=1}^n\left[H(Y_t|S_{[n]}^r,W_b)-H(Y_t|S_{[n]}^r,Y_{[t-1]},\mathbf{W})\right]\nonumber\\
&\overset{(iii)}{=}&\sum_{t=1}^n\left[H(Y_t|S_{[n]}^r,W_b,X_{[n]}^b)-H(Y_t|S_{[n]}^r,Y_{[t-1]},\mathbf{W},\mathbf{X}_{[n]})\right]\nonumber\\
&\overset{(iv)}{\leq}&\sum_{t=1}^n\left[H(Y_t|S_{[n]}^r,X_{t}^b)-H(Y_t|S_{[n]}^r,Y_{[t-1]},\mathbf{W},\mathbf{X}_{[n]})\right]\nonumber\\
&\overset{(v)}{=}&\sum_{t=1}^n\left[H(Y_t|S_{[n]}^r,X_{t}^b)-H(Y_t|S_{[n]}^r,X_{t}^b,X_t^a)\right]\nonumber\\
&=&\sum_{t=1}^n I(X_t^a;Y_t|X_t^b,S_{[n]}^r)\label{eq:conv-sr-inp-ncas-6}
\end{eqnarray}
}where $(i)$ is due to (\ref{eq:sta-iid}) and conditioning reduces entropy, $(ii)$ holds since conditioning reduces entropy, $(iii)$ holds since $X_t^i=\phi^{(i)}(W_i,f^i(S_{[n]}^r))$, $i=\{a,b\}$, $(iv)$ is valid since conditioning reduces entropy and finally, $(v)$ is valid due to (\ref{eq:ch-recfl}) and $S_t^i$, $i=\{a,b\}$, being a function of $S_t^r$.

Now combining (\ref{eq:conv-sr-inp-ncas-5})-(\ref{eq:conv-sr-inp-ncas-6}) and following steps akin to (\ref{eq:conv-sr-inp-ncas-4}) and (\ref{eq:conv-sr-inp-ncas-4a}), we can verify (\ref{eq:co-com-ncas-inp-1}). To verify (\ref{eq:co-com-ncas-inp-2}) for encoder $b$ it is enough to switch the roles of encoder $a$ and $(b)$.
\end{proof}
Observe now that for any $t\geq1$, $I(X_t^a,X_t^b;Y_t|S_t^r,S_{[t-1]}^r=\mathbf{\mu_p}, S_{[t+1,n]}^r=\mathbf{\mu_f})$ is a function of the conditional distribution $P_{X_t^a,X_t^b,Y_t,S_t^r|S_{[t-1]}^r, S_{[t+1,n]}^r}(x_t^a,x_t^b,y_t,s_t^r|\mathbf{\mu_p},\mathbf{\mu_f})$. Hence, we need to show that this distribution factorizes as in (\ref{eq:joi-dist-inp-ncas}). Let
\begin{eqnarray}
\Upsilon_{\mathbf{\mu_p},\mathbf{\mu_f}}^a(x^a,f^a(s^r))&:=&\{w_a:\phi_{t}^{(a)}\left(w_a,f^a(\mathbf{\mu_p},\mathbf{\mu_f}),f^a(s^r)\right)=x^a\}, \nonumber\\
\Upsilon_{\mathbf{\mu_p},\mathbf{\mu_f}}^b(x^b,f^b(s^r))&:=&\{w_b:\phi_{t}^{(b)}\left(w_b,f^b(\mathbf{\mu_p},\mathbf{\mu_f}),f^b(s^r)\right)=x^b\}\label{eq:lem-fact-ncas-set1}
\end{eqnarray}
and
\begin{eqnarray}
\pi_{X^a|S^a}^{\mathbf{\mu_p},\mathbf{\mu_f}}\left(x^a|f^a(s^r)\right)&:=&\sum_{w_a\in \Upsilon_{\mathbf{\mu_p},\mathbf{\mu_f}}^a(x^a,f^a(s^r))}\frac{1}{|{\cal W}_a|}, \nonumber\\
\pi_{X^b|S^b}^{\mathbf{\mu_p},\mathbf{\mu_f}}\left(x^b|f^b(s^r)\right)&:=&\sum_{w_b\in \Upsilon_{\mathbf{\mu_p},\mathbf{\mu_f}}^b(x^b,f^b(s^r))}\frac{1}{|{\cal W}_b|}.\label{eq:lem-fact-ncas-set2}
\end{eqnarray}
\begin{lemma}\label{lem:fact-ncas}
For every $1\leq t\leq n$ and $(\mathbf{\mu_p}:\mathbf{\mu_f}) \in {\cal S}_r^{n-1}$, the following holds
\begin{eqnarray}
&&\hspace{-0.5in}P_{X_t^a,X_t^b,Y_t,S_t^r|S_{[t-1]}^r,S_{[t+1,n]}^r}(x^a,x^b,y,s^r|\mathbf{\mu_p},\mathbf{\mu_f})\nonumber\\
&&\hspace{0.5in}=P_{S^r}(s^r)P_{Y|S^r,X^a,X^b}(y|s^r,x^a,x^b)\pi_{X^a|S^a}^{\mathbf{\mu_p},\mathbf{\mu_f}}(x^a|f^a(s^r))\pi_{X^b|S^b}^{\mathbf{\mu_p},\mathbf{\mu_f}}(x^b|f^b(s^r)).\label{eq:lem-fact-ncas}
\end{eqnarray}
\end{lemma}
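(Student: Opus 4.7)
The plan is to follow the template of Lemma \ref{lem:fact-cau-fl}, adapted to the non-causal regime where $X_t^a = \phi_t^{(a)}(W_a, f^a(S_{[n]}^r))$ depends on the entire quantized CSIR sequence. I would split the left-hand side of (\ref{eq:lem-fact-ncas}) via the chain rule into a channel factor $P_{Y_t|X_t^a,X_t^b,S_t^r,S_{[t-1]}^r,S_{[t+1,n]}^r}$ and an input-state factor $P_{X_t^a,X_t^b,S_t^r|S_{[t-1]}^r,S_{[t+1,n]}^r}$, and handle them separately. For the channel factor, I would augment the conditioning with $(W_a,W_b)$ and marginalize over the actual state $S_t$; using the channel law (\ref{eq:ch-recfl}) together with the i.i.d. and Markov structure (\ref{eq:star-iid})-(\ref{eq:noi-recnoi-mrk}), this factor collapses to the equivalent channel $P^{eq}_{Y|X^a,X^b,S^r}(y|x^a,x^b,s^r)$ of (\ref{eq:eqiv-chn}), with no residual dependence on $(W_a,W_b,S_{[t-1]}^r,S_{[t+1,n]}^r)$.

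For the input-state factor, I would marginalize over $(W_a,W_b)$ and write the non-causal encoder outputs as indicator functions
\begin{equation*}
1_{\{x^a = \phi_t^{(a)}(w_a, f^a(\mathbf{\mu_p},s^r,\mathbf{\mu_f}))\}} \cdot 1_{\{x^b = \phi_t^{(b)}(w_b, f^b(\mathbf{\mu_p},s^r,\mathbf{\mu_f}))\}}.
\end{equation*}
Independence of $(W_a,W_b)$ from the CSIR sequence in (\ref{eq:star-iid}) lets me pull $\frac{1}{|\mathcal{W}_a||\mathcal{W}_b|} P_{S^r}(s^r)$ outside. Since the two indicators depend on disjoint message randomness, the double sum factors as a product of two independent sums, and each inner sum reduces, via the definitions (\ref{eq:lem-fact-ncas-set1})-(\ref{eq:lem-fact-ncas-set2}), to $\pi_{X^a|S^a}^{\mathbf{\mu_p},\mathbf{\mu_f}}(x^a|f^a(s^r))$ and $\pi_{X^b|S^b}^{\mathbf{\mu_p},\mathbf{\mu_f}}(x^b|f^b(s^r))$, respectively. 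Multiplying the resulting expressions for the two factors yields (\ref{eq:lem-fact-ncas}).

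I expect the only real subtlety to lie in the equivalent-channel reduction: I must justify that $Y_t$ is conditionally distributed according to $P^{eq}_{Y|X^a,X^b,S^r}$ even though $(X_t^a,X_t^b)$ are built from $f^a(S_{[n]}^r), f^b(S_{[n]}^r)$, i.e., from quantized samples of the very CSIR sequence on which we are already conditioning. The justification is exactly the observation underlying Remark \ref{rem:eqiuv-cha}: by (\ref{eq:star-iid}) the triple $(S_t,S_t^a,S_t^b)$ conditioned on $S_t^r$ is independent of all variables at other time indices and of $(W_a,W_b)$, so marginalizing out $S_t$ converts (\ref{eq:ch-recfl}) into (\ref{eq:eqiv-chn}) regardless of how elaborately $(X_t^a,X_t^b)$ depend on the remaining CSIR. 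Once this step is in place, the message-sum calculation is essentially mechanical and mirrors steps $(iv)$-$(vii)$ in the proof of Lemma \ref{lem:fact-cau-fl}.
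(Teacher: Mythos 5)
Your proof matches the paper's: both split the joint law via the chain rule into a channel factor and an input--state factor, reduce the $Y_t$ factor to the equivalent channel $P^{eq}_{Y|X^a,X^b,S^r}$ of (\ref{eq:eqiv-chn}), and marginalize the input--state factor over the independent messages $(W_a,W_b)$ (using indicator functions and the definitions (\ref{eq:lem-fact-ncas-set1})--(\ref{eq:lem-fact-ncas-set2})) to obtain the product $\pi_{X^a|S^a}^{\mathbf{\mu_p},\mathbf{\mu_f}}\pi_{X^b|S^b}^{\mathbf{\mu_p},\mathbf{\mu_f}}$. You spell out the equivalent-channel reduction more carefully than the paper's terse appeal to (\ref{eq:ch-recfl}) — in particular, you make explicit why the conditional of $Y_t$ loses the dependence on $(W_a,W_b,S_{[t-1]}^r,S_{[t+1,n]}^r)$ even though $X_t^a,X_t^b$ are built from the full CSIR sequence — but this is an elaboration of the same argument, not a different route.
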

\begin{proof}
First observe that due to (\ref{eq:ch-recfl}) we have
\begin{eqnarray}
&&\hspace{-0.8in}P_{X_t^a,X_t^b,Y_t,S_t^r|S_{[t-1]}^r,S_{[t+1,n]}^r}(x^a,x^b,y,s^r|\mathbf{\mu_p},\mathbf{\mu_f})\nonumber\\
&&=P_{Y_t|S_t^r,X_t^a,X_t^b}(y|s^r,x^a,x^b)P_{X_t^a,X_t^b,S_t^r|S_{[t-1]}^r,S_{[t+1,n]}^r}(x^a,x^b,s^r|\mathbf{\mu_p},\mathbf{\mu_f})\label{eq:lem-fact-ncas-1}.
\end{eqnarray}
Let us now consider the second term in (\ref{eq:lem-fact-ncas-1}). We have{\allowdisplaybreaks
\begin{eqnarray}
&&\hspace{-0.6in}P_{X_t^a,X_t^b,S_t^r|S_{[t-1]}^r,S_{[t+1,n]}^r}(x^a,x^b,s^r|\mathbf{\mu_p},\mathbf{\mu_f})\nonumber\\
&=&\sum_{w_a \in {\cal W}_a}\sum_{w_b \in {\cal W}_b}P_{\mathbf{W},X_t^a,X_t^b,S_t^r|S_{[t-1]}^r,S_{[t+1,n]}^r}(\mathbf{w},x^a,x^b,s^r|\mathbf{\mu_p},\mathbf{\mu_f})\nonumber\\
&\overset{(i)}{=}&\sum_{w_a \in {\cal W}_a}\sum_{w_b \in {\cal W}_b}1_{\left\{x^l=\phi^{(l)}\left(w_l,f^l(s^r,\mathbf{\mu_p},\mathbf{\mu_f})\right),\enspace l=a,b\right\}}P_{W_a,W_b,S_t^r|S_{[t-1]}^r,S_{[t+1,n]}^r}(w_a,w_b,s^r|\mathbf{\mu_p},\mathbf{\mu_f})\nonumber\\
&\overset{(ii)}{=}&\sum_{w_a \in {\cal W}_a}\sum_{w_b \in {\cal W}_b}1_{\left\{x^l=\phi^{(l)}\left(w_l,f^l(s^r,\mathbf{\mu_p},\mathbf{\mu_f})\right),\enspace l=a,b\right\}}\frac{1}{|{\cal W}_a|}\frac{1}{|{\cal W}_b|}P_{S_t^r}(s^r)\nonumber\\
&=&P_{S_t^r}(s^r)\sum_{w_a \in {\cal W}_a}\frac{1}{|{\cal W}_a|}1_{\left\{x^a=\phi^{(a)}\left(w_a,f^a(s^r,\mathbf{\mu_p},\mathbf{\mu_f})\right)\right\}}\sum_{w_b \in {\cal W}_b}\frac{1}{|{\cal W}_b|}1_{\left\{x^b=\phi^{(b)}\left(w_b,f^b(s^r,\mathbf{\mu_p},\mathbf{\mu_f})\right)\right\}}\nonumber\\
&\overset{(iii)}{=}&P_{S_t^r}(s^r)\pi_{X^a|S^a}^{\mathbf{\mu_p},\mathbf{\mu_f}}(x^a|f^a(s^r))\pi_{X^b|S^b}^{\mathbf{\mu_p},\mathbf{\mu_f}}(x^b|f^b(s^r))\label{eq:lem-fact-ncas-2}
\end{eqnarray}}where $(i)$ follows since $X_t^i=\phi^{(i)}(W_i,f^i(S_{[n]}^r))$, $i=\{a,b\}$, $(ii)$ is valid since $W_a$ and $W_b$ are independent of $S_{[n]}^r$ and state process being i.i.d. and $(iii)$ follows due to (\ref{eq:lem-fact-ncas-set1}) and (\ref{eq:lem-fact-ncas-set2}). Substituting (\ref{eq:lem-fact-ncas-2}) in (\ref{eq:lem-fact-ncas-1}) completes the proof.
\end{proof}
We can now complete the proof of Theorem \ref{theo:main-inp-ncas}. With Lemma \ref{lem:conv-ncas}, it is shown that any achievable rate pair can be approximated by the convex combinations of rate conditions given in (\ref{eq:ra1-inp-ncas})-(\ref{eq:ra3-inp-ncas}) which are indexed by $(\mathbf{\mu_p},\mathbf{\mu_f})$ and satisfy (\ref{eq:joi-dist-inp-ncas}) for joint state-input-output distributions. Hence, since $\lim_{\epsilon\rightarrow 0}\eta(\epsilon)=0$, any achievable rate pair belongs to $\overline{co}\bigg(\bigcup_{\bar{\pi}}\mathcal{R}_{NS}^Q(\bar{\pi})\bigg)$.
\end{proof}
\section{Converse Proof of Theorem \ref{theo:main-inp}}\label{app:1}
\begin{proof}
In the proof, we will use the fact that the delayed setup can be modeled by taking the last $d_a$, $d_b$ entries of causal setup as empty. Recall that $\alpha_{\mathbf{\mu}}$ is defined in (\ref{eq:alp-mupf}).
\begin{lemma}\label{lem:conv-inp}
Assume that a rate pair $R=(R_a,R_b)$, with block length $n\geq1$ and a constant $\epsilon \in (0,1/2)$, is achievable. Then,
\begin{eqnarray}
R_a&\leq& \sum_{\mathbf{\mu} \in {\cal S}^{(n)}}\alpha_{\mathbf{\mu}} I(X_t^a;Y_t|X_t^b,S_t,S_{[t-1]}=\mathbf{\mu})+\eta(\epsilon)\label{eq:co-com-inp-1}\\
R_b&\leq& \sum_{\mathbf{\mu} \in {\cal S}^{(n)}}\alpha_{\mathbf{\mu}} I(X_t^b;Y_t|X_t^a,S_t,S_{[t-1]}=\mathbf{\mu})+\eta(\epsilon)\label{eq:co-com-inp-2}\\
R_a+R_b&\leq& \sum_{\mathbf{\mu} \in {\cal S}^{(n)}}\alpha_{\mathbf{\mu}} I(X_t^a,X_t^b;Y_t|S_t,S_{[t-1]}=\mathbf{\mu})+\eta(\epsilon)\label{eq:co-com-inp-3}.
\end{eqnarray}
\end{lemma}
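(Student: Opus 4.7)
My plan is to parallel the argument of Lemma \ref{lem:conv-cau-fl}, replacing the Shannon strategies $T_t^{(\cdot)}$ with the channel inputs $X_t^{(\cdot)}$ and extending the derivation from the sum rate alone to all three inequalities (\ref{eq:co-com-inp-1})--(\ref{eq:co-com-inp-3}). I would begin by invoking Fano's inequality to reduce the problem to upper bounds on $I(\mathbf{W};Y_{[n]},S_{[n]})$, $I(W_a;Y_{[n]},S_{[n]}\mid W_b)$, and $I(W_b;Y_{[n]},S_{[n]}\mid W_a)$, exactly as in (\ref{eq:conv-cau-fl-3})--(\ref{eq:conv-cau-fl-5}); because $W_a\perp W_b$, joint decodability controls $H(W_a\mid W_b,Y_{[n]},S_{[n]})$ and $H(W_b\mid W_a,Y_{[n]},S_{[n]})$ through the same Fano term, so the three rate bounds fall out from the same estimate. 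The sum-rate bound (\ref{eq:co-com-inp-3}) is then obtained essentially verbatim from (\ref{eq:conv-cau-fl-6}): decompose $I(\mathbf{W};Y_{[n]},S_{[n]})$ via the chain rule in the ordering $(Y_t,S_t)$, cancel the $H(S_t\mid\cdot)$ terms using the i.i.d.\ law (\ref{eq:sta-iid}), insert the pair $(X_t^a,X_t^b)$ in the conditioning of the lower-entropy term and apply (\ref{eq:ch-recfl}) to collapse it to $H(Y_t\mid S_t,X_t^a,X_t^b)$, and drop $Y_{[t-1]}$ from the upper-entropy term, producing $I(\mathbf{W};Y_{[n]},S_{[n]})\leq\sum_t I(X_t^a,X_t^b;Y_t\mid S_{[t]})$.

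The main novelty is the derivation of (\ref{eq:co-com-inp-1}); (\ref{eq:co-com-inp-2}) then follows by swapping the roles of $a$ and $b$. The key trick is to enlarge the conditioning of $I(W_a;Y_{[n]}\mid W_b,S_{[n]})$ to include the entire sequence $X_{[n]}^b$. Because $X_{[n]}^b$ is a deterministic function of $(W_b,S_{[n]}^b)$ and, by (\ref{eq:noi-recfl-mrk})--(\ref{eq:sta-iid}), the conditional law of $S_{[n]}^b$ given $S_{[n]}$ equals $\prod_t P_{S_t^b\mid S_t}(\cdot\mid S_t)$ and is independent of $W_a$, I would first establish the identity $I(W_a;X_{[n]}^b\mid W_b,S_{[n]})=0$; a chain-rule manipulation on $I(W_a;Y_{[n]},X_{[n]}^b\mid W_b,S_{[n]})$ then delivers the enlargement $I(W_a;Y_{[n]}\mid W_b,S_{[n]})\leq I(W_a;Y_{[n]}\mid W_b,S_{[n]},X_{[n]}^b)$. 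Next, expanding $I(W_a;Y_{[n]}\mid W_b,S_{[n]},X_{[n]}^b)=\sum_t I(W_a;Y_t\mid W_b,S_{[n]},X_{[n]}^b,Y_{[t-1]})$ and invoking the data-processing inequality through the Markov chain $W_a\to X_t^a\to Y_t$ under the conditioning $(W_b,S_{[n]},X_{[n]}^b,Y_{[t-1]})$---which is valid since by (\ref{eq:ch-recfl}) the channel noise is independent of all side information given $(X_t^a,X_t^b,S_t)$---yields $I(W_a;Y_t\mid\cdot)\leq I(X_t^a;Y_t\mid\cdot)$. Finally I would note that $I(X_t^a;Y_t\mid W_b,S_{[n]},X_{[n]}^b,Y_{[t-1]})\leq I(X_t^a;Y_t\mid X_t^b,S_{[t]})$: the two mutual informations share the common lower-entropy term $H(Y_t\mid X_t^a,X_t^b,S_t)$ by (\ref{eq:ch-recfl}), while $H(Y_t\mid W_b,S_{[n]},X_{[n]}^b,Y_{[t-1]})\leq H(Y_t\mid X_t^b,S_{[t]})$ by conditioning reducing entropy, since the first set strictly contains $(X_t^b,S_{[t]})$.

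Once the three per-symbol inequalities are in place, the $\eta(\epsilon)$ absorption step $(a)$ of (\ref{eq:conv-cau-fl-7})---which exploits the uniform bound $I(\cdot;Y_t\mid\cdot)\leq\log|\mathcal{Y}|$---converts the $\tfrac{1}{1-\epsilon}$ prefactor into an additive $\eta(\epsilon)$, and rewriting the resulting time-average as a convex combination indexed by $\mathbf{\mu}\in\mathcal{S}^{(n)}$ with weights $\alpha_{\mathbf{\mu}}$ produces (\ref{eq:co-com-inp-1})--(\ref{eq:co-com-inp-3}). The step I expect to require the most care is the verification of $I(W_a;X_{[n]}^b\mid W_b,S_{[n]})=0$: although intuitive, its justification hinges on the product structure in (\ref{eq:noi-recfl-mrk}) being preserved at the level of the full vector $S_{[n]}^b$ together with the independence of $W_a$ from $(W_b,S_{[n]})$, so that conditioning on the latter does not introduce any spurious dependence between $S_{[n]}^b$---and hence $X_{[n]}^b$---and $W_a$.
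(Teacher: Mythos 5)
Your proposal is correct and follows essentially the same strategy as the paper's Appendix B argument: for the sum-rate you recycle (\ref{eq:conv-cau-fl-6}) by noting $T_t^i = X_t^i$ when $d_i \geq 1$, and for the individual rate you enlarge the conditioning so that the whole sequence $X_{[n]}^b$ is present and then invoke memorylessness to collapse to $I(X_t^a;Y_t|X_t^b,S_{[t]})$. The only substantive variation is in how that enlargement is justified: the paper passes to $I(W_a;Y_{[n]},S_{[n]}|W_b,S_{[n]}^b)$, where adding $X_{[n]}^b$ is free because it is a deterministic function of $(W_b,S_{[n]}^b)$, and then cancels the $H(S_t|\cdot)$ terms via (\ref{eq:conv-ena-inp-3a}); you instead condition on $(W_b,S_{[n]})$ up front (which kills the $S_t$ cancellation step) and then prove $I(W_a;X_{[n]}^b|W_b,S_{[n]})=0$ to add $X_{[n]}^b$, followed by an explicit data-processing step through $W_a \to X_t^a \to Y_t$ where the paper simply augments the second entropy term with $X_{[n]}^a$. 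Both variants rely on the same underlying facts — the product structure in (\ref{eq:noi-recfl-mrk})--(\ref{eq:sta-iid}), the memoryless channel law (\ref{eq:ch-recfl}), and full CSIR — and both arrive at identical per-letter bounds, so the difference is purely presentational.
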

\begin{proof}
For the sum-rate, observe that the derivation in (\ref{eq:conv-cau-fl-6}) can be performed to verify (\ref{eq:co-com-inp-3}), as for $d_i\geq1$, $T_t^i=X_t^i$ by taking $S_{[t-d_i+1, t-1]}^i=\emptyset$, $i=\{a,b\}$.

Let us now consider encoder $a$. We have
\begin{eqnarray}
R_a\leq \frac{1}{n}\log(|{\cal W}_a|)\leq \frac{1}{1-\epsilon}\frac{1}{n}\left(I(W_a;Y_{[n]}, S_{[n]})+H(\epsilon)\right)\label{eq:conv-ena-inp-2}.
\end{eqnarray}
Furthermore,{\allowdisplaybreaks
\begin{eqnarray}
I(W_a;Y_{[n]}, S_{[n]})&\overset{(i)}{\leq}&I(W_a;Y_{[n]}, S_{[n]}|W_b,S_{[n]}^b)\nonumber\\
&=&\sum_{t=1}^n\left[H(Y_t,S_t|S_{[t-1]},Y_{[t-1]},W_b,S_{[n]}^b)-H(Y_t,S_t|S_{[t-1]},Y_{[t-1]}, \mathbf{W}, S_{[n]}^b)\right]\nonumber\\
&\overset{(ii)}{=}&\sum_{t=1}^n\left[H(Y_t|S_{[t]},Y_{[t-1]},W_b,S_{[n]}^b)-H(Y_t|S_{[t]},Y_{[t-1]}, \mathbf{W}, S_{[n]}^b)\right]\nonumber\\
&\overset{(iii)}{=}&\sum_{t=1}^n\left[H(Y_t|S_{[t]},Y_{[t-1]},W_b,S_{[n]}^b,X_{[n]}^b)-H(Y_t|S_{[t]},Y_{[t-1]}, \mathbf{W}, S_{[n]}^b,X_{[n]}^b)\right]\nonumber\\
&\overset{(iv)}{\leq}&\sum_{t=1}^n\left[H(Y_t|S_{[t]},X_{t}^b)-H(Y_t|S_{[t]},Y_{[t-1]},\mathbf{W},S_{[n]}^b,X_{[n]}^b,X_{[n]}^a)\right]\nonumber\\
&\overset{(v)}{=}&\sum_{t=1}^n\left[H(Y_t|S_{[t]},X_{t}^b)-H(Y_t|S_{[t]},X_{t}^b,X_t^a)\right]\nonumber\\
&=&\sum_{t=1}^n I(X_t^a;Y_t|X_t^b,S_{[t]})\label{eq:conv-ena-inp-3}
\end{eqnarray}
where $(i)$ is due to (\ref{eq:sta-iid}) and conditioning reduces entropy, $(ii)$ is valid since
\begin{eqnarray}
P_{S_t|S_{t}^b}(s_t|s_{t}^b)&=&P_{S_t|Y_{[t-1]},S_{[t-1]},W_a,W_b,S_{[n]}^b}(s_t|y_{[t-1]},s_{[t-1]},w_a,w_b,s_{[n]}^b)\nonumber\\
&=&P_{S_t|Y_{[t-1]},S_{[t-1]},W_b,S_{[n]}^b}(s_t|y_{[t-1]},s_{[t-1]},w_b,s_{[n]}^b)\label{eq:conv-ena-inp-3a}
\end{eqnarray}
where the second equality is due to (\ref{eq:sta-iid}), $(iii)$ is valid since $X_t^b=\phi_{t}^{(b)} \left(W_b,S_{[t-d_b]}^b\right)$, $(iv)$ is valid since conditioning reduces entropy and finally, $(v)$ is valid by (\ref{eq:ch-recfl}).}

Now, recall that $\chi(\epsilon)=\frac{H(\epsilon)}{n(1-\epsilon)}$ and, combining (\ref{eq:conv-ena-inp-2}) and (\ref{eq:conv-ena-inp-3}) gives
\begin{eqnarray}
R_a&\leq&\frac{1}{n}\sum_{t=1}^nI(X_t^a;Y_t|X_t^b,S_{[t]})+\eta(\epsilon)\label{eq:conv-ena-inp-4}.
\end{eqnarray}
Furthermore,
\begin{eqnarray}
I(X_t^a;Y_t|X_t^b,S_{[t]})=n\sum_{\mathbf{\mu} \in {\cal S}^{(t-1)}}\alpha_{\mathbf{\mu}}I(X_t^a;Y_t|X_t^b,S_t, S_{[t-1]}=\mathbf{\mu}),
\end{eqnarray}
and substituting the above into (\ref{eq:conv-ena-inp-4}) yields (\ref{eq:co-com-inp-1}).

Finally, for encoder $b$, (\ref{eq:co-com-inp-2}) can be verified by following the similar steps of encoder $a$.
\end{proof}
Now since, for any $t\geq1$, conditional mutual information terms given in (\ref{eq:co-com-inp-1})-(\ref{eq:co-com-inp-3}) are functions of $P_{X_t^a,X_t^b,Y_t,S_t|S_{[t-1]}}(x^a,x^b,y,s|\mathbf{\mu})$, in order to complete the proof of the converse, we need to show that this term factorizes as in (\ref{eq:joi-dist-inp}).
\begin{lemma}\label{lem:conv-fact-inp}
For every $1\leq t\leq n$ and $\mathbf{\mu} \in {\cal S}^{t-1}$, the following holds
\begin{eqnarray}
P_{X_t^a,X_t^b,Y_t,S_t|S_{[t-1]}}(x^a,x^b,y,s|\mathbf{\mu})=P_S(s)P_{Y|S,X^a,X^b}(y|s,x^a,x^b)\pi_{X^a}^{\mathbf{\mu}}(x^a)\pi_{X^b}^{\mathbf{\mu}}(x^b).\label{eq:conv-fact-inp-3}
\end{eqnarray}
\end{lemma}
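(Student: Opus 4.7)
The plan is to mirror the proof of Lemma \ref{lem:fact-cau-fl}, adapted to the delayed (rather than causal) setup, with the essential ingredient being that conditioning on the full past true state $S_{[t-1]}=\mathbf{\mu}$ is enough to make the two encoders' observations $S_{[t-d_a]}^a$ and $S_{[t-d_b]}^b$ conditionally independent, which in turn produces a product form on $(X_t^a, X_t^b)$.

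First, I would split off the channel: using (\ref{eq:ch-recfl}) together with the fact that $X_t^a, X_t^b$ are functions of $\bigl(W_a, S_{[t-d_a]}^a\bigr)$ and $\bigl(W_b, S_{[t-d_b]}^b\bigr)$, and that conditionally on $(X_t^a, X_t^b, S_t)$ the output $Y_t$ is independent of everything else (including $S_{[t-1]}$), I would write
\begin{eqnarray}
P_{X_t^a,X_t^b,Y_t,S_t|S_{[t-1]}}(x^a,x^b,y,s|\mathbf{\mu}) = P_{Y|S,X^a,X^b}(y|s,x^a,x^b)\, P_{X_t^a,X_t^b,S_t|S_{[t-1]}}(x^a,x^b,s|\mathbf{\mu}). \nonumber
\end{eqnarray}
Since $\{S_t\}$ is i.i.d. by (\ref{eq:sta-iid}), $S_t$ is independent of $S_{[t-1]}$, and since $(X_t^a,X_t^b)$ depends only on $(W_a,W_b,S_{[t-d_a]}^a,S_{[t-d_b]}^b)$, which is independent of the future $S_t$, the term $P_{X_t^a,X_t^b,S_t|S_{[t-1]}}(x^a,x^b,s|\mathbf{\mu})$ splits as $P_S(s)\,P_{X_t^a,X_t^b|S_{[t-1]}}(x^a,x^b|\mathbf{\mu})$.

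The core step is then to establish $P_{X_t^a,X_t^b|S_{[t-1]}}(x^a,x^b|\mathbf{\mu}) = \pi_{X^a}^{\mathbf{\mu}}(x^a)\pi_{X^b}^{\mathbf{\mu}}(x^b)$, where the policies are defined in complete analogy with (\ref{eq:conv-fact-fl-1})--(\ref{eq:conv-fact-fl-2}) but with $\phi_t^{(a)}(w_a,\cdot)$ now indexed by strings in $\mathcal S_a^{t-d_a}$ (and similarly for $b$). Expanding by marginalizing over $(W_a,W_b,S_{[t-d_a]}^a,S_{[t-d_b]}^b)$ and inserting indicators $1_{\{x^l=\phi_t^{(l)}(w_l,\mathbf{\mu_l})\}}$ yields
\begin{eqnarray}
P_{X_t^a,X_t^b|S_{[t-1]}}(x^a,x^b|\mathbf{\mu}) &=& \sum_{\mathbf{\mu_a},\mathbf{\mu_b}} P_{S_{[t-d_a]}^a,S_{[t-d_b]}^b|S_{[t-1]}}(\mathbf{\mu_a},\mathbf{\mu_b}|\mathbf{\mu}) \nonumber \\
&& \times \sum_{w_a}\frac{1}{|\mathcal W_a|}1_{\{x^a=\phi_t^{(a)}(w_a,\mathbf{\mu_a})\}} \sum_{w_b}\frac{1}{|\mathcal W_b|}1_{\{x^b=\phi_t^{(b)}(w_b,\mathbf{\mu_b})\}}, \nonumber
\end{eqnarray}
where I have used that $W_a, W_b$ are independent of each other and of the state process by (\ref{eq:sta-iid}).

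The main obstacle, and the crux of the lemma, is factoring the joint conditional $P_{S_{[t-d_a]}^a,S_{[t-d_b]}^b|S_{[t-1]}}(\mathbf{\mu_a},\mathbf{\mu_b}|\mathbf{\mu})$ into $P_{S_{[t-d_a]}^a|S_{[t-1]}}(\mathbf{\mu_a}|\mathbf{\mu})\,P_{S_{[t-d_b]}^b|S_{[t-1]}}(\mathbf{\mu_b}|\mathbf{\mu})$. This is precisely where the Markov assumption (\ref{eq:noi-recfl-mrk}) is used: at each time $r\le t-1$, conditionally on $S_r$ the noisy observations $S_r^a$ and $S_r^b$ are independent, and the triples are i.i.d.\ across time by (\ref{eq:sta-iid}); hence conditionally on $S_{[t-1]}$, the vectors $S_{[t-d_a]}^a$ and $S_{[t-d_b]}^b$ (being componentwise noisy versions of segments of $S_{[t-1]}$ corrupted by independent noise processes) are independent. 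Plugging this factorization back collapses the double sum over $(\mathbf{\mu_a},\mathbf{\mu_b})$ into a product of two single sums matching the definitions of $\pi_{X^a}^{\mathbf{\mu}}(x^a)$ and $\pi_{X^b}^{\mathbf{\mu}}(x^b)$, which yields (\ref{eq:conv-fact-inp-3}). The edge case $d_l\ge t$, in which $\mathcal S_l^{t-d_l}$ is taken to be empty and $X_t^l=\phi_t^{(l)}(W_l)$, is handled by the same argument with $\pi_{X^l}^{\mathbf{\mu}}(x^l)=\sum_{w_l:\phi_t^{(l)}(w_l)=x^l} 1/|\mathcal W_l|$ not depending on $\mathbf{\mu}$.
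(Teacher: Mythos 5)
Your proof is correct and follows essentially the same route as the paper, which simply defines $\Upsilon_{\mathbf{\mu_i}}^i$ and $\pi_{X^i}^{\mathbf{\mu}}$ in analogy with Lemma \ref{lem:fact-cau-fl} and refers the reader back to that lemma's steps; you have filled in precisely those steps, including the two adaptations the paper flags (no current-state sum in the channel split, since the delayed inputs are already channel inputs rather than Shannon strategies, and the factorization of $P_{S_{[t-d_a]}^a,S_{[t-d_b]}^b|S_{[t-1]}}$ via (\ref{eq:noi-recfl-mrk}) and (\ref{eq:sta-iid})).
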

Note that one of the crucial step in verifying the product form for the causal setup, see (\ref{eq:conv-cau-fl-4}) and (\ref{eq:conv-cau-fl-5}), is the independence of Shannon strategies of the current state. This also holds in the delayed setup. Therefore, let
\begin{eqnarray}
&&\Upsilon_{\mathbf{\mu_i}}^i(x^i):=\{w_i:\phi_{t}^{(i)}(w_i,s_{[t-d_i]}^i=\mathbf{\mu_i})=x^i\}, \enspace i=a,b\label{eq:conv-fact-inp-1}
\end{eqnarray}
and
\begin{eqnarray}
\pi_{X^i}^{\mathbf{\mu_i}}(x^i):=\sum_{w_i\in \Upsilon_{\mathbf{\mu_i}}^i(x^i)}\frac{1}{|{\cal W}_i|}, \enspace \pi_{X^i}^{\mathbf{\mu}}(x^i):=\sum_{\mathbf{\mu_i}}\pi_{X^i}^{\mathbf{\mu_i}}(x^i)P_{S_{[t-d_i]}^i|S_{[t-1]}}(\mathbf{\mu_i}|\mathbf{\mu}),\enspace i=a,b.\nonumber
\end{eqnarray}
Hence, (\ref{eq:conv-fact-inp-3}) can be shown following the same steps in Lemma \ref{lem:fact-cau-fl}.

We can now complete the converse proof of Theorem \ref{theo:main-inp}. With Lemma \ref{lem:conv-inp} it is shown that any achievable rate pair can be approximated by the convex combinations of rate conditions which are indexed by $\mathbf{\mu} \in {\cal S}^{(n)}$ and satisfy (\ref{eq:joi-dist-inp}) for joint state-input-output distributions. Hence, any achievable pair $(R_a,R_b) \in \overline{co}\big(\bigcup_{\tilde{\pi}}\mathcal{R}_{DN}(\tilde{\pi})\big)$.
\end{proof}

\section{Achievability and Converse Proofs of Theorem \ref{theo:main-corr}}\label{app:2}
\begin{proof}[Achievability Proof]
Fix $(R_a,R_b)\in \mathcal{R}_{C}(\hat{\pi})$.

\textbf{\textit{Codebook Generation}}
Fix $\pi_{X^a}(x^a)$ and $\pi_{T^b|X^a}(t^b|x^a)$. For each $w_a \in \{1,\cdots, 2^{nR_a}\}$, randomly generate $x_{[n],w_a}^{a}$, each according to $\prod_{i=1}^n\pi_{X_{i}^{a}}(x_{i,w_a}^{a})$. Reveal this codebook to encoder $b$ and, for each $w_b \in \{1,\cdots, 2^{nR_b}\}$, encoder $b$ randomly generates $t_{[n],w_b}^{b}$, each according to $\prod_{i=1}^n\pi_{T_{i}^{b}|X_{i}^a}(t_{i,w_b}^{b}|x_{i,w_a}^{a})$.
These codeword pairs form the codebook, which is revealed to the decoder.

\textbf{\textit{Encoding}}
Define the encoding functions as follows: $x_{i}^{a}(w_a)=\phi_{i}^{a}(w_a,s_{[i-d_a]}^a)$ and $x_{i}^{b}(w_b)=\phi_{i}^{b}(w_b,s_{[i]}^b)=t_{i,w_b}^b(s_i^b)$ where $x_{i,w_a}^a$ and $t_{i,w_b}^b$ denote the $i$th component of $x_{[n],w_a}^{a}$ and $t_{[n],w_b}^{b}$, respectively. Therefore, to send the messages $w_a$ and $w_b$, transmit the corresponding $x_{[n],w_a}^{a}$ and $t_{[n],w_b}^{b}$, respectively.

\textbf{\textit{Decoding}}
After receiving $(y_{[n]}, s_{[n]})$, the decoder looks for the only $(w_a,w_b)$ pair such that $(x_{[n],w_a}^{a}, t_{[n],w_b}^{b},$ $y_{[n]}, s_{[n]})$ are jointly $\epsilon-$typical and declares this pair as its estimate $(\hat{w}_a, \hat{w}_b)$.

\textbf{\textit{Error Analysis}}
Assume that $(w_a,w_b)=(1,1)$ was sent. Let $E_{\alpha,\beta}\bydef\big\{(X_{[n],\alpha}^a,T_{[n],\beta}^b,Y_{[n]},S_{[n]})\in A_{\epsilon}^n\big\}$, $\alpha\in\{1,\cdots,2^{nR_a}\}$ and $\beta\in\{1,\cdots,2^{nR_b}\}$.
Then
\begin{eqnarray}
P_{e}^{n}=P\big(E_{1,1}^c\bigcup_{(\alpha,\beta)\neq(1,1)}E_{\alpha,\beta}\big)\leq P(E_{1,1}^c)+\sum_{\alpha=1,\beta\neq1}P(E_{\alpha,\beta}) + \sum_{\alpha\neq1,\beta\neq1}P(E_{\alpha,\beta})\label{eq:erbound-corr}.
\end{eqnarray}
Since $\{Y_i,S_i,X_i^a,T_i^b\}_{i=1}^{\infty}$ is an i.i.d. sequence hence, $ P(E_{1,1}^c)\rightarrow0$ for $n\rightarrow \infty$. Next, let us consider the second term
{\allowdisplaybreaks
\begin{eqnarray}
\sum_{\alpha=1,\beta\neq1}P(E_{\alpha=1,\beta\neq1})&=&\sum_{\alpha=1,\beta\neq1}P((X_{[n],1}^a,T_{[n],\beta}^b,Y_{[n]},S_{[n]})\in A_{\epsilon}^n)\nonumber\\
&\overset{(i)}{=}&\sum_{\alpha=1,\beta\neq1}\sum_{(x_{[n]}^a,t_{[n]}^b,y_{[n]},s_{[n]})\in A_{\epsilon}^n}P_{T_{[n]}^b|X_{[n]}^a}(t_{[n]}^b|x_{[n]}^a) P_{X_{[n]}^a, Y_{[n]},S_{[n]}}(x_{[n]}^a, y_{[n]},s_{[n]})\nonumber\\
&\overset{}{\leq}&\sum_{\alpha=1,\beta\neq1}|A_{\epsilon}^n|2^{-n[H(T^b|X^a)-\epsilon]}2^{-n[H(X^a,Y,S)-\epsilon]}\nonumber\\
&\leq&2^{nR_b}2^{-n[H(T^b|X^a)+ H(X^a,Y,S) - H(X^a,T^b,Y,S)-3\epsilon]}\nonumber\\
&\overset{(ii)}{=}&2^{n[R_b-I(T^b;Y|S,X^a)-3\epsilon]}\label{eq:erbo1-corr}
\end{eqnarray}
}where $(i)$ holds since $T_{[n],\beta}^b$ is independent of $(Y_{[n]},S_{[n]})$ given $X_{[n],1}^a$ and $(ii)$ follows since{\allowdisplaybreaks
\begin{eqnarray}
&&\hspace{-1.3in}H(T^b|X^a)+ H(X^a,Y,S) - H(X^a,T^b,Y,S)\nonumber\\
&=&H(T^b|X^a)+ H(X^a,Y,S)-H(Y|X^a,T^b,S)-H(X^a,T^b,S)\nonumber\\
&=&H(X^a,Y,S)-H(Y|X^a,T^b,S)-H(X^a,S)\nonumber\\
&=&I(T^b;Y|S,X^a)\nonumber
\end{eqnarray}}where the second equality follows since $T^b$ and $S$ are independent given $X^a$. Finally, {\allowdisplaybreaks
\begin{eqnarray}
\sum_{\alpha\neq1,\beta\neq1}P(E_{\alpha\neq1,\beta\neq 1})&=&\sum_{\alpha\neq1,\beta\neq 1}P((X_{[n],\alpha}^a,T_{[n],\beta}^b,Y_{[n]},S_{[n]})\in A_{\epsilon}^n)\nonumber\\
&\overset{(iii)}{=}&\sum_{\alpha\neq1,\beta\neq 1}\sum_{(x_{[n]}^a,t_{[n]}^b,y_{[n]},s_{[n]})\in A_{\epsilon}^n}P_{T_{[n]}^b,X_{[n]}^a}(t_{[n]}^b,x_{[n]}^a) P_{Y_{[n]},S_{[n]}}(y_{[n]},s_{[n]})\nonumber\\
&\overset{}{\leq}&\sum_{\alpha\neq1,\beta\neq1}|A_{\epsilon}^n|2^{-n[H(T^b,X^a)-\epsilon]}2^{-n[H(Y,S)-\epsilon]}\nonumber\\
&\leq&2^{n(R_a+R_b)}2^{-n[H(T^b,X^a)+ H(Y,S) - H(X^a,T^b,Y,S)-3\epsilon]}\nonumber\\
&\overset{(iv)}{=}&2^{n[R_a+R_b-I(X^a,T^b;Y|S)-3\epsilon]}\label{eq:erbo1-corr-ra}
\end{eqnarray}}where $(iii)$ holds since for $\alpha, \beta \neq 1$, $(T_{[n],\beta}^b, X_{[n],\alpha}^a)$ is independent of $(Y_{[n]},S_{[n]})$ and $(iv)$ follows since
\begin{eqnarray}
&&\hspace{-1.3in}H(T^b,X^a)+ H(Y,S)-H(X^a,T^b,Y,S)\nonumber\\
&=&H(T^b,X^a)+ H(Y,S)-H(Y|X^a,S,T^b)-H(X^a,S,T^b)\nonumber\\
&=&H(T^b,X^a)+ H(Y,S)-H(Y|X^a,S,T^b)-H(X^a,T^b)-H(S)\nonumber\\
&=&I(X^a,T^b;Y|S)\nonumber,
\end{eqnarray}
and the rate conditions of the $\mathcal{R}_{C}(\hat{\pi})$ imply that each term tends in (\ref{eq:erbound-corr}) tends to zero as $n \rightarrow \infty$.
\end{proof}
Note that the main motivation in indexing mutual information terms by the past CSI, is to get a product form on the team policies. In the cooperative setup, we do not require a product form and therefore, the convex combination argument is not essential. However, we herein keep this indexing (see (\ref{eq:joi-dist-corr})) to avoid the use of a time sharing auxiliary random variable.
\begin{proof}[Converse Proof]
First observe that, since $X_t^b=\phi_{t}^{(b)} \left(W_a,W_b,S_{[t-1]}^b, S_t^b\right)$, we have
\begin{eqnarray}
T_t^b=\phi_{t}^{(b)}\left(W_a, W_b,S_{[t-1]}^b\right) \in {{\cal X}_{b}}^{|{\cal S}_b|}. \label{eq:sha-str-ind-enb-coop}
\end{eqnarray}
\begin{lemma}\label{lem:conv-corr}
Let $T_t^b \in {\cal T}_{b}$ be the Shannon strategy induced by $\phi_{t}^{(b)}$ as shown in (\ref{eq:sha-str-ind-enb-coop}). Assume that a rate pair $R=(R_a,R_b)$, with block length $n\geq1$ and a constant $\epsilon \in (0,1/2)$, is achievable. Then,
\begin{eqnarray}
R_b&\leq& \sum_{\mathbf{\mu} \in {\cal S}^{(n)}}\alpha_{\mathbf{\mu}} I(T_t^b;Y_t|X_t^a,S_t,S_{[t-1]}=\mathbf{\mu})+\eta(\epsilon)\label{eq:co-com-corr-2}\\
R_a+R_b&\leq& \sum_{\mathbf{\mu} \in {\cal S}^{(n)}}\alpha_{\mathbf{\mu}} I(X_t^a,T_t^b;Y_t|S_t,S_{[t-1]}=\mathbf{\mu})+\eta(\epsilon)\label{eq:co-com-corr-3}
\end{eqnarray}
where $\alpha_{\mathbf{\mu}}$ and $\eta(\epsilon)$ are defined in (\ref{eq:eta-epsilon}).
\end{lemma}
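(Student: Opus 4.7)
The plan is to mirror the converse of Lemma~\ref{lem:conv-cau-fl}, applying Fano's inequality together with a chain-rule expansion and then inserting the Shannon strategy $T_t^b$ defined in (\ref{eq:sha-str-ind-enb-coop}). The key difference from the symmetric causal case is that, because encoder $a$ has delay $d_a \geq 1$, its channel input $X_t^a = \phi_t^{(a)}(W_a, S^a_{[t-d_a]})$ is already a channel-alphabet valued random variable and need not be lifted to a Shannon strategy; only encoder $b$'s coding function is lifted. Moreover, since no product form on $(X^a, T^b)$ is required by (\ref{eq:joi-dist-corr}), one can absorb the joint correlation induced by the common message $W_a$ directly into $\hat{\pi}_{X^a, T^b}$, which explains why a single-letter characterization survives.

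For the sum-rate inequality (\ref{eq:co-com-corr-3}), I would start from Fano's inequality applied to $\mathbf{W} = (W_a, W_b)$,
\begin{align*}
R_a + R_b \leq \frac{1}{(1-\epsilon)\,n}\bigl(I(\mathbf{W}; Y_{[n]}, S_{[n]}) + H(\epsilon)\bigr),
\end{align*}
then expand by the chain rule and cancel the $S_t$-entropies using (\ref{eq:sta-iid}) exactly as in (\ref{eq:conv-cau-fl-6}), reducing the expression to $\sum_t [H(Y_t \mid S_{[t]}, Y_{[t-1]}) - H(Y_t \mid \mathbf{W}, S_{[t]}, Y_{[t-1]})]$. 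The first summand is bounded above by $H(Y_t \mid S_{[t]})$, while into the second I would insert $(X_t^a, T_t^b)$ via ``conditioning reduces entropy'' and then simplify to $H(Y_t \mid S_{[t]}, X_t^a, T_t^b)$. Bounding the Fano correction by $\eta(\epsilon)$ as in (\ref{eq:conv-cau-fl-7}) and then using the identity
\begin{align*}
I(X_t^a, T_t^b; Y_t \mid S_{[t]}) = n \sum_{\mu \in \mathcal{S}^{(t-1)}} \alpha_\mu\, I(X_t^a, T_t^b; Y_t \mid S_t, S_{[t-1]} = \mu)
\end{align*}
delivers (\ref{eq:co-com-corr-3}).

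For the bound on $R_b$, the trick is to condition on the common message $W_a$ as if it were genie-aided. I would apply Fano to $W_b$ given $W_a$, legitimate since $W_b \perp W_a$, and obtain $R_b \leq (n(1-\epsilon))^{-1}(I(W_b; Y_{[n]}, S_{[n]} \mid W_a) + H(\epsilon))$. Next I would enlarge the conditioning by the entire sequence $X_{[n]}^a$: because $X_{[n]}^a$ is a function of $(W_a, S_{[n]}^a)$ and $W_b$ is independent of $(W_a, S_{[n]}^a)$, we have $I(W_b; X_{[n]}^a \mid W_a) = 0$, so the enlargement preserves the bound. A chain-rule expansion of $I(W_b; Y_{[n]}, S_{[n]} \mid W_a, X_{[n]}^a)$ then yields, at stage $t$, a difference which upon inserting $T_t^b$ collapses to $I(T_t^b; Y_t \mid X_t^a, S_{[t]})$; the same averaging step as above produces (\ref{eq:co-com-corr-2}).

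The main obstacle will be verifying the per-letter collapse
\begin{align*}
H\bigl(Y_t \,\big|\, W_a, W_b, Y_{[t-1]}, S_{[t]}, X_{[n]}^a, T_t^b\bigr) = H\bigl(Y_t \,\big|\, S_{[t]}, X_t^a, T_t^b\bigr),
\end{align*}
i.e.\ showing that conditionally on $(S_{[t]}, X_t^a, T_t^b)$ all of $\mathbf{W}$, the past channel outputs $Y_{[t-1]}$, and the other components of $X_{[n]}^a$ are irrelevant for $Y_t$. This is the analogue of (\ref{eq:conv-cau-fl-6a}) in Lemma~\ref{lem:conv-cau-fl}, and it is established by marginalizing over $S_t^b$ and combining the memoryless channel law (\ref{eq:ch-recfl}) with the conditional independence of $S_t^b$ from $(\mathbf{W}, Y_{[t-1]}, S_{[t-1]}, X_{[n]}^a)$ given $S_t$, which follows directly from the i.i.d.\ triple assumption (\ref{eq:sta-iid}) and the Markov factorization (\ref{eq:noi-recfl-mrk}). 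The future inputs $X^a_{t+1},\ldots, X^a_n$, introduced by the enlargement step, are functions of $(W_a, S_{[n]}^a)$ and thus also conditionally independent of $S_t^b$ given $S_t$, so they are safely dropped without affecting the output distribution.
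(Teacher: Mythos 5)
Your proposal is correct and follows essentially the same route as the paper: Fano's inequality, genie-aiding the private encoder's bound with the common message and the whole codeword sequence of encoder $a$, insertion of the Shannon strategy $T_t^b$, a per-letter collapse that hinges on $S_t^b$ being conditionally independent of all the conditioning variables given $S_t$ (the exact analogue of (\ref{eq:conv-sr-corr-3a})), and then the $\alpha_{\mu}$-averaging over past state realizations. The only cosmetic difference is that you genie-aid $X_{[n]}^a$ directly, whereas the paper conditions on $(W_a,S_{[n]}^a)$ and then notes $X_{[n]}^a$ is a deterministic function of those; both choices give the required conditional independences via the i.i.d.\ and Markov structure (\ref{eq:sta-iid}), (\ref{eq:noi-recfl-mrk}).
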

\begin{proof}
Let us first consider the sum-rate condition. Since,{\allowdisplaybreaks
\begin{eqnarray}
I(\mathbf{W};Y_{[n]},S_{[n]})&{\leq}&\sum_{t=1}^{n}\left[H(Y_t|S_{[t]})-H(Y_t|\mathbf{W},S_{[t]},Y_{[t-1]},X_t^a,T_t^b)\right]\nonumber\\
&\overset{(i)}{=}&\sum_{t=1}^{n}\left[H(Y_t|S_{[t]})-H(Y_t|S_{[t]},X_t^a,T_t^b)\right]\nonumber\\
&=&\sum_{t=1}^{n}I(X_t^a, T_t^b;Y_t|S_{[t]})\label{eq:conv-sr-corr-3},
\end{eqnarray}}where $(i)$ can be shown in a similar way as (\ref{eq:conv-cau-fl-6a}), we have,
\begin{eqnarray}
R_a+R_b\leq\frac{1}{n}\sum_{t=1}^nI(X_t^a,T_t^b;Y_t|S_{[t]})+\eta(\epsilon)\label{eq:conv-sr-corr-4}
\end{eqnarray}
and
\begin{eqnarray}
I(X_t^a,T_t^b;Y_t|S_{[t]})=n\sum_{\mathbf{\mu} \in {\cal S}^{(t-1)}}\alpha_{\mathbf{\mu}}I(X_t^a,T_t^b;Y_t|S_t, S_{[t-1]}=\mathbf{\mu}).
\end{eqnarray}
Substituting the above into (\ref{eq:conv-sr-corr-4}) yields (\ref{eq:co-com-corr-3}).

Let us now consider encoder $b$. With Fano's inequality and standard steps, we get
\begin{eqnarray}
R_b\leq \frac{1}{n}\log(|{\cal W}_b|)\leq \frac{1}{1-\epsilon}\frac{1}{n}\left(I(W_b;Y_{[n]}, S_{[n]})+H(\epsilon)\right)\label{eq:conv-enb-corr-2}.
\end{eqnarray}
Following similar reasonings as in (\ref{eq:conv-ena-inp-3}) we get,
{\allowdisplaybreaks
\begin{eqnarray}
I(W_b;Y_{[n]}, S_{[n]})&{\leq}&I(W_b;Y_{[n]}, S_{[n]}|W_a,S_{[n]}^a)\nonumber\\
&{=}&\sum_{t=1}^n\left[H(Y_t|S_{[t]},Y_{[t-1]},W_a,S_{[n]}^a)-H(Y_t|S_{[t]},Y_{[t-1]}, W_a, W_b, S_{[n]}^a)\right]\nonumber\\
&{=}&\sum_{t=1}^n\left[H(Y_t|S_{[t]},Y_{[t-1]},W_a,S_{[n]}^a,X_{[n]}^a)-H(Y_t|S_{[t]},Y_{[t-1]}, W_a, W_b, S_{[n]}^a,X_{[n]}^a)\right]\nonumber\\
&{\leq}&\sum_{t=1}^n\left[H(Y_t|S_{[t]},X_{t}^a)-H(Y_t|S_{[t]},Y_{[t-1]},W_a, W_b,S_{[n]}^a,X_{[n]}^a,T_{t}^b)\right]\nonumber\\
&\overset{(i)}{=}&\sum_{t=1}^n\left[H(Y_t|S_{[t]},X_{t}^a)-H(Y_t|S_{[t]},X_{t}^a,T_t^b)\right]\nonumber\\
&=&\sum_{t=1}^n I(T_t^b;Y_t|X_t^a,S_{[t]})\label{eq:conv-ena-corr-3}
\end{eqnarray}
where $(i)$ is valid since
\begin{eqnarray}
&&\hspace{-0.3in}P_{Y_t|S_{[t]},Y_{[t-1]},\mathbf{W}, S_{[n]}^a, X_{[n]}^a,T_t^b}(y_t|s_{[t]},y_{[t-1]},\mathbf{w}, s_{[n]}^a, x_{[n]}^a,t_t^b)\nonumber\\
&=&\sum_{s_t^b \in {\cal S}_b}P_{Y_t|S_t,S_t^b,X_t^a,T_t^b}(y_t|s_t,s_t^b,x_t^a,t_t^b) P_{S_t^b|S_{[t]},Y_{[t-1]},\mathbf{W}, S_{[n]}^a, X_{[n]}^a,T_t^b}(s_t^b|s_{[t]},y_{[t-1]},\mathbf{w}, s_{[n]}^a, x_{[n]}^a,t_t^b)\nonumber\\
&=&\sum_{s_t^b \in {\cal S}_b}P_{Y_t|S_t,S_t^b,X_t^a,T_t^b}(y_t|s_t,s_t^b,x_t^a,t_t^b) P_{S_t^b|S_t}(s_t^b|s_t)\nonumber\\
&=&P_{Y_t|S_t,X_t^a,T_t^b}(y_t|s_t,x_t^a,t_t^b)\label{eq:conv-sr-corr-3a}.
\end{eqnarray}}where the first equality is due to (\ref{eq:ch-recfl}) and the second equality is due to (\ref{eq:noi-recfl-mrk}) and (\ref{eq:sta-iid}). Following (\ref{eq:conv-cau-fl-6a}), we can directly verify (\ref{eq:co-com-corr-2}).
\end{proof}
We now need to show that the joint conditional distribution of channel state $S_t$, inputs $X_t^a, T_t^b$ and output $Y_t$ given the past realization $(S_{[t-1]}=\mathbf{\mu})$, i.e., $P_{X_t^a,T_t^b,Y_t,S_t|S_{[t-1]}}(x^a,t^b,y,s|\mathbf{\mu})$, factorizes as in (\ref{eq:joi-dist-corr}). This is straightforward. Let first
$\pi_{X^a, T^b}^{\mathbf{\mu}}(x^a,t^b):=P_{X_t^a,T_t^b|S_{[t-1]}}(x^a,t^b|\mathbf{\mu})$
and observe that
\begin{eqnarray}
&&\hspace{-0.5in}P_{X_t^a,T_t^b,Y_t,S_t|S_{[t-1]}}(x^a,t^b,y,s|\mathbf{\mu})\nonumber\\
&=&\sum_{s_t^b \in {\cal S}^b}P_{Y_t|X_t^a,X_t^b,S_t}(y|x^a,t^b(s_t^b),s)P_{S_t^b|S_t}(s_t^b|s_t)P_{S_t}(s)P_{X_t^a,T_t^b|S_{[t-1]}}(x^a,t^b|\mathbf{\mu})\nonumber\\
&=&\pi_{X^a, T^b}^{\mathbf{\mu}}(x^a,t^b)P_{S_t}(s)P_{Y_t|X_t^a,T_t^b,S_t}(y|x^a,t^b,s)\label{eq:conv-fact-corr-4}
\end{eqnarray}
where the equalities are verified by (\ref{eq:ch-recfl}), by (\ref{eq:noi-recfl-mrk}) and by the fact that $(X_t^a,T_t^b)$ is independent of $S_t$.
\end{proof}
We can now complete the converse proof of Theorem \ref{theo:main-corr}. With Lemma \ref{lem:conv-corr} it is shown that any achievable rate pair can be approximated by the convex combinations of rate conditions which are indexed by $\mathbf{\mu} \in {\cal S}^{(n)}$ and satisfy (\ref{eq:joi-dist-corr}) for joint state-input-output distributions. Hence, any achievable pair $(R_a,R_b) \in \overline{co}\big(\bigcup_{\hat{\pi}}\mathcal{R}_{C}(\hat{\pi})\big)$.
\section{Proof of ${\cal C}^G_{FS}={\cal C}_{AS}$}\label{app:3}
Let us first show that ${\cal C}^G_{FS}\subseteq {\cal C}_{AS}$. Recall that $T \in |{\cal T}|=|{{\cal X}}_b|^{|\cal S|}$ and $|{\cal U}|\leq|{\cal X}_a||{\cal X}_b||{\cal S}|+1$. Hence, we have either $|{\cal U}|>|{\cal T}|$ or else. In the case where $|{\cal U}|<|{\cal T}|$, we note that $|{\cal U}|$ is limited to a finite set without loss of generality. Hence, we can always take $|{\cal U}|$ at least $|{\cal T}|$ such that it satisfies (\ref{eq:rate-somek-enb}), (\ref{eq:rate-somek}) and (\ref{eq:dist-somek}). Then we can directly conclude that ${\cal C}^G_{FS}\subseteq {\cal C}_{AS}$ since $P_{X^b|S,T}(x^b|s,t)=P_{X^b|S,T}(x^b|s,t,x^a)=1_{\{x^b=t(s)\}}$ and this is a special case of $P_{X^b|U,X^a,S}(x^b|u,x^a,s)$.

In order to prove the other direction, i.e., ${\cal C}_{AS}\subseteq{\cal C}^G_{FS}$, let ${\cal C}_{AS}^E$ be the closure of all rate pairs $(R_a,R_b)$ satisfying
\begin{eqnarray}
R_b&<&I(U;Y|X^a) \label{eq:rate-somek-enb-eq}\\
R_b+R_a &<& I(U,X^a;Y) \label{eq:rate-somek-eq}
\end{eqnarray}
for some joint measure on ${\cal S}\times{\cal X}_a\times{\cal X}_b\times{\cal Y}\times{\cal U}$ having the form
\begin{eqnarray}
P_{Y|X^a,X^b,S}(y|x^a,x^b,s)1_{\{x^b=\mathbf{m}(s,x^a,u)\}}P_{S}(s)P_{X^a,U}(x^a,u)\label{eq:dist-somek-eq},
\end{eqnarray}
for some $\mathbf{m}:{\cal U}\times{{\cal X}_a}\times {\cal S}\rightarrow{\cal X}_b$, where $|{\cal U}|\leq|{\cal S}||{\cal X}_a||{\cal X}_b|+1$, and we first show that ${\cal C}_{AS}={\cal C}_{AS}^E$, and following this, we show that ${\cal C}_{AS}^E\subseteq{\cal C}^G_{FS}$.
\begin{lemma}\label{lem:somek-somekeq}
${\cal C}_{AS}={\cal C}_{AS}^E$.
\end{lemma}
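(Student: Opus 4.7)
The plan is to prove the two inclusions separately, with the non-trivial content lying in the direction $\mathcal C_{AS}\subseteq\mathcal C_{AS}^{E}$. The reverse inclusion $\mathcal C_{AS}^{E}\subseteq\mathcal C_{AS}$ is immediate: any joint distribution of the form (\ref{eq:dist-somek-eq}) is a special case of (\ref{eq:dist-somek}) in which the kernel $P_{X^b|U,X^a,S}$ happens to be the Dirac mass $1_{\{x^b=\mathbf m(s,x^a,u)\}}$, and the rate expressions in (\ref{eq:rate-somek-enb})--(\ref{eq:rate-somek}) are defined identically on both sides, so every rate pair in $\mathcal C_{AS}^{E}$ is in $\mathcal C_{AS}$. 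The only subtle point is that $|\mathcal U|$ in (\ref{eq:dist-somek-eq}) is bounded by the same constant as in (\ref{eq:dist-somek}), but this is assumed in the definition of $\mathcal C_{AS}^E$, so there is nothing to prove.

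For $\mathcal C_{AS}\subseteq\mathcal C_{AS}^{E}$, start with an arbitrary joint distribution of the form (\ref{eq:dist-somek}) whose associated rates satisfy (\ref{eq:rate-somek-enb})--(\ref{eq:rate-somek}), and apply the \emph{functional representation lemma}: the stochastic kernel $P_{X^b|U,X^a,S}$ can be written as $X^b=g(U,X^a,S,V)$, where $V$ is a random variable taking values in some finite set $\mathcal V$ and independent of $(U,X^a,S)$. Define the enlarged auxiliary $U':=(U,V)$ on $\mathcal U\times\mathcal V$. Then $X^b=\mathbf m(S,X^a,U')$ for $\mathbf m\bigl(s,x^a,(u,v)\bigr):=g(u,x^a,s,v)$, and $P_{X^a,U'}(x^a,u,v)=P_{X^a,U}(x^a,u)P_V(v)$, so the new joint distribution is of the form (\ref{eq:dist-somek-eq}). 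Moreover, since $V$ is appended to $U$,
\begin{equation*}
I(U';Y|X^a)=I(U,V;Y|X^a)\ge I(U;Y|X^a),\qquad I(U',X^a;Y)\ge I(U,X^a;Y),
\end{equation*}
so the original rate pair remains achievable.

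What remains is to check that the cardinality of $U'$ can be brought back down to $|\mathcal S||\mathcal X_a||\mathcal X_b|+1$, since $|\mathcal U\times\mathcal V|$ may well exceed this bound. This is a standard Fenchel--Eggleston--Carath\'eodory argument: once the deterministic map $\mathbf m$ is fixed, the quantities $P_S(s)$, $P_{Y|X^a,X^b,S}$, and the rate functionals $I(U';Y|X^a)$ and $I(U',X^a;Y)$ depend on $P_{X^a,U'}$ only through a finite number of linear functionals of its marginal on $\mathcal X_a\times\mathcal X_b\times\mathcal S$ (and through two extra scalars for the two mutual information values). Carath\'eodory's theorem then produces a distribution on a set of size at most $|\mathcal S||\mathcal X_a||\mathcal X_b|+1$ preserving all these quantities, and the resulting distribution still has the product structure (\ref{eq:dist-somek-eq}) because the map $\mathbf m$ is unchanged. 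The main (and only delicate) step is this cardinality reduction, since one must verify that the reduction is compatible with the deterministic form of the $X^b$-kernel; this is handled by redefining $\mathbf m$ only on the support of the reduced $U'$.
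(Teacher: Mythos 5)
Your proof is correct but takes a genuinely different route from the paper. The paper establishes $\mathcal C_{AS}\subseteq\mathcal C_{AS}^E$ by viewing the kernel $P_{X^b|U,X^a,S}$ as a row-stochastic matrix $\bar{\mathbf\Lambda}$ and decomposing it as a convex combination $\bar{\mathbf\Lambda}=\sum_i\lambda_i\mathbf\Lambda^{(i)}$ of \emph{binary} (deterministic) stochastic matrices, citing a Birkhoff--von Neumann--type result. It then uses convexity of the two mutual information functionals in the channel kernel (at fixed $P_{X^a,U}$) to bound the original rates by a convex combination of rates achievable with deterministic kernels, and closes via a time-sharing / convexity argument on $\mathcal C_{AS}^E$. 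Crucially, the auxiliary alphabet $\mathcal U$ is never enlarged, so no cardinality reduction is needed. You instead invoke the functional representation lemma to absorb the randomization into a fresh independent variable $V$, fold it into $U':=(U,V)$, and observe that the rate constraints can only loosen under this enlargement; you then pay for the enlargement with a Fenchel--Eggleston--Carath\'eodory cardinality reduction at the end. Both approaches are standard, but they trade off different costs: yours avoids the slightly exotic matrix-decomposition lemma and the convexity-of-$\mathcal C_{AS}^E$ step (you produce a \emph{single} distribution with a deterministic kernel dominating the original rates, rather than a convex combination of rate pairs), while the paper's avoids the cardinality-reduction bookkeeping entirely. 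One small imprecision in your write-up worth tightening: the rate functionals do not ``depend on $P_{X^a,U'}$ through linear functionals of its marginal on $\mathcal X_a\times\mathcal X_b\times\mathcal S$''; rather, what one does is fix the per-atom data $\bigl(P_{X^a|U'}(\cdot|u'),\mathbf m(\cdot,\cdot,u')\bigr)$, note that the marginal $P_{X^a,X^b,S}$ and the conditional entropy $H(Y|X^a,U')$ are then affine in $P_{U'}$, observe that preserving these preserves both mutual informations (since $H(Y)$ and $H(Y|X^a)$ are determined by $P_{X^a,X^b,S}$), and apply Carath\'eodory to the support of $P_{U'}$. Because the reduction only discards atoms and never alters the per-atom data, the deterministic structure is automatically preserved; no ``redefining $\mathbf m$'' is needed beyond relabeling. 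With that clarification the argument is complete.
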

\begin{proof}
Obviously ${\cal C}_{AS}^E\subseteq{\cal C}_{AS}$ and hence, we will show that ${\cal C}_{AS}\subseteq{\cal C}_{AS}^E$. Let $\bar{P}_{X^b,X^a,U,S}(x^b,x^a,u,s)$ be a joint distribution in the form of (\ref{eq:dist-somek}), i.e.,
\begin{eqnarray}
\bar{P}_{X^b,X^a,U,S}(x^b,x^a,u,s)=\bar{P}_{X^b|X^a,U,S}(x^b|x^a,u,s)P_{S}(s)\bar{P}_{X^a,U}(x^a,u)\label{eq:lem-somek-barp}.
\end{eqnarray}
Let $\bar{\mathbf{\Lambda}}$ denote a $|{\cal X}_a||{\cal U}||{\cal S}|$-by-$|{\cal X}_b|$ matrix where $\bar{\mathbf{\Lambda}}_{i, jkl}=\bar{P}_{X^b|X^a,U,S}(i|j,k,l)$, $1\leq i \leq |{\cal X}_b|$, $1\leq j \leq |{\cal X}_a|$, $1\leq k \leq |{\cal U}|$ and $1\leq l \leq |{\cal S}|$. Hence, $\bar{\mathbf{\Lambda}}$ is a $|{\cal X}_a||{\cal U}||{\cal S}|$-by-$|{\cal X}_b|$ row stochastic matrix, i.e., $\bar{\mathbf{\Lambda}}_{i, jkl}\geq0, \enspace \forall i,j,k,l$ and $\sum_{i=1}^{|{\cal X}_b|}\bar{\mathbf{\Lambda}}_{i, jkl}=1, \enspace \forall j,k,l$. Let $\mathbf{\Lambda}$ denote a $|{\cal X}_a||{\cal U}||{\cal S}|$-by-$|{\cal X}_b|$ binary stochastic matrix, that is a matrix with each row has exactly one non-zero element, which is $1$. Observe now that any row stochastic matrix can be written as a convex combination of binary stochastic matrices (e.g., see \cite[Lemma 5]{hognas} and \cite[Proposition IV.1]{niesen}). Therefore, we have
\begin{eqnarray}
\bar{\mathbf{\Lambda}}=\sum_{i=1}^k\lambda_i\mathbf{\Lambda}^{(i)}, \enspace \sum_{i=1}^k\lambda_i=1 \label{eq:lem-somek-convex},
\end{eqnarray}
where $\mathbf{\Lambda}^{(i)}$ is a binary stochastic matrix and by \cite[Lemma 5]{hognas}, $k\leq(|{\cal X}_a||{\cal U}||{\cal S}|)^2$.

Let, for the joint distribution $\bar{P}_{X^b,X^a,U,S}(x^b,x^a,u,s)$,
\begin{eqnarray}
\bar{R}_b&<&I(U;Y|X^a)_{\bar{\mathbf{\Lambda}}}, \label{eq:lem-somek-somekeq2}\\
\bar{R}_a+\bar{R}_b &<& I(U,X^a;Y)_{\bar{\mathbf{\Lambda}}} \label{eq:lem-somek-somekeq3}.
\end{eqnarray}
Therefore, $(\bar{R}_a,\bar{R}_b)\in{\cal C}_{AS}$. Now, observe that for a fix distribution $P_{X^a,U}(x^a,u)$, both $I(U,X^a;Y)$ and $I(U;Y|X^a)$ are convex in $P_{Y|X^a,U}(y|x^a,u)$ and hence, convex in $P_{X^b|X^a,U,S}(\cdot|x^a,u,s)$. This and (\ref{eq:lem-somek-convex}) imply that
\begin{eqnarray}
I(U;Y|X^a)_{\bar{\mathbf{\Lambda}}}&\leq& \sum_{i=1}^{k}\lambda_i I(U;Y|X^a)_{\mathbf{\Lambda}^{(i)}},\label{eq:lem-somek-somekeq4}\\
I(U,X^a;Y)_{\bar{\mathbf{\Lambda}}}&\leq& \sum_{i=1}^{k}\lambda_i I(U,X^a;Y)_{\mathbf{\Lambda}^{(i)}},\label{eq:lem-somek-somekeq5}
\end{eqnarray}
where $I(U;Y|X^a)_{\mathbf{\Lambda}^{(i)}}$ and $I(U,X^a;Y)_{\mathbf{\Lambda}^{(i)}}$ denote the mutual information terms induced by $\mathbf{\Lambda}^{(i)}$.

Now, let $(R_a^i,R_b^i)$, $1\leq i\leq k$, be such that
\begin{eqnarray}
R_b^i\leq I(U;Y|X^a)_{\mathbf{\Lambda}^{(i)}},\nonumber\\
R_b^i+R_a^i\leq I(U,X^a;Y)_{\mathbf{\Lambda}^{(i)}}\nonumber,
\end{eqnarray}
and hence, $(R_a^i,R_b^i)\in {\cal C}_{AS}^E$, $1\leq i\leq k$. Let $(R_a^f,R_b^f)=\sum_{i=1}^{k}\lambda_i(R_a^i,R_b^i)$. Since a convex combination of achievable rates is also achievable, so $(R_a^f,R_b^f)\in {\cal C}_{AS}^E$. This observation and inequalities (\ref{eq:lem-somek-somekeq2})-(\ref{eq:lem-somek-somekeq5}) complete the claim that $(\bar{R}_a, \bar{R}_b) \in {\cal C}_{AS}^E$.
\end{proof}
Up to now, we have shown that ${\cal C}^G_{FS}\subseteq{\cal C}_{AS}$ and ${\cal C}_{AS}^E={\cal C}_{AS}$. In order to prove that ${\cal C}^G_{FS}={\cal C}_{AS}$, it remains to show that ${\cal C}_{AS}^E\subseteq{\cal C}^G_{FS}$. Note that ${\cal C}_{AS}^E$ still depends on $P_{X^a,U}(x^a,u)$ in which $|{\cal U}|$ can be larger than $|{\cal T}|$. Hence, in the next lemma we basically show that for every $P_{X^a,U}(x^a,u)$, there exists a $\hat{\pi}_{T^a,U}(t^a,u)$ which induces the same rate constraints as induced by $P_{X^a,U}(x^a,u)$.
\begin{lemma}
${\cal C}_{AS}^E\subseteq{\cal C}^G_{FS}$.
\end{lemma}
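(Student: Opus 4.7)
The plan is to construct, from any pair $(P_{X^a,U},\mathbf{m})$ realizing a point of ${\cal C}_{AS}^E$ via (\ref{eq:dist-somek-eq}), a memoryless stationary team policy $\hat{\pi}_{X^a,T}$ of the form (\ref{eq:joi-dist-corr-alt}) that induces the \emph{same} value for each mutual information appearing in the rate bounds, thereby exhibiting every $(R_a,R_b)\in{\cal C}_{AS}^E$ as a member of ${\cal C}^G_{FS}$. The central observation I will exploit is that in ${\cal C}_{AS}^E$ the input $X^b$ equals $\mathbf{m}(S,X^a,U)$ deterministically, so for every fixed $(x^a,u)$ the mapping $s\mapsto\mathbf{m}(s,x^a,u)$ already lives in ${\cal T}={\cal X}_b^{\cal S}$; the role of $U$ is therefore merely to randomize over Shannon strategies, and it may be absorbed into $T$.

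Concretely, I would, for each $(x^a,u)$, define $t_{x^a,u}\in{\cal T}$ by $t_{x^a,u}(s):=\mathbf{m}(s,x^a,u)$ and take $\hat{\pi}_{X^a,T}$ to be the pushforward of $P_{X^a,U}$ under the deterministic map $(x^a,u)\mapsto(x^a,t_{x^a,u})$, namely
\begin{eqnarray}
\hat{\pi}_{X^a,T}(x^a,t)\;:=\;\sum_{u:\,t_{x^a,u}=t}P_{X^a,U}(x^a,u).\nonumber
\end{eqnarray}
Paired with $P_{X^b|S,T}(x^b|s,t)=1_{\{x^b=t(s)\}}$, the state marginal $P_S$, and the original channel $P_{Y|X^a,X^b,S}$, this is a joint law of exactly the form (\ref{eq:joi-dist-corr-alt}), hence a legitimate member of $\mathcal{R}_{C}^{'}(\hat{\pi})$.

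The main computation I will carry out is verifying that the two parameterizations yield identical marginals on $(X^a,Y)$ and an identical conditional law of $Y$ given $(X^a,T)$. Using that $S$ is independent of $(X^a,U)$ (inherited from (\ref{eq:dist-somek-eq})) together with the deterministic relation $X^b=t_{X^a,U}(S)$, I expect the pointwise identity
\begin{eqnarray}
P^{\mathbf{m}}_{Y|X^a,U}(y|x^a,u)\;=\;\sum_{s}P_S(s)\,P_{Y|X^a,X^b,S}(y|x^a,t_{x^a,u}(s),s)\;=\;P^{\hat{\pi}}_{Y|X^a,T}(y|x^a,t_{x^a,u}),\nonumber
\end{eqnarray}
and then, by summing against $P_{X^a,U}$ and regrouping by fibres of $(x^a,u)\mapsto(x^a,t_{x^a,u})$, the marginal identity $P^{\mathbf{m}}_{X^a,Y}=P^{\hat{\pi}}_{X^a,Y}$ together with $H^{\mathbf{m}}(Y|X^a,U)=H^{\hat{\pi}}(Y|X^a,T)$. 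These will immediately give $I(U;Y|X^a)=I(T;Y|X^a)$ and $I(U,X^a;Y)=I(T,X^a;Y)$, so every $(R_a,R_b)$ satisfying (\ref{eq:rate-somek-enb-eq})--(\ref{eq:rate-somek-eq}) automatically satisfies (\ref{eq:ra2-corr-alt})--(\ref{eq:ra3-corr-alt}) under $\hat{\pi}$.

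The only mild subtlety I anticipate is the many-to-one nature of $(x^a,u)\mapsto(x^a,t_{x^a,u})$: distinct values of $u$ generating the same Shannon strategy $t$ get collapsed. This does no harm because, conditional on $(X^a,T)$, the output $Y$ is independent of $U$ (the channel sees $u$ only through $T(S)$), so averaging over a fibre preserves the conditional distribution of $Y$ and hence all the entropies that enter the rate bounds. No auxiliary-random-variable cardinality argument is needed in this direction, since the Shannon-strategy alphabet ${\cal T}$ is fixed by the model and the pushforward construction automatically lands inside ${\cal X}_a\times{\cal T}$.
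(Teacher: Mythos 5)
Your construction is essentially the paper's own proof: the paper likewise defines $\bar{\mathbf{m}}(x^a,u)\in{\cal T}$ by $\bar{\mathbf{m}}(x^a,u)(s)=\mathbf{m}(s,x^a,u)$, sets $\hat{\pi}_{X^a,T}=P^{\ast}_{X^a,T}$ (your pushforward), and shows $I(U;Y|X^a)=I(T;Y|X^a)$ and $I(U,X^a;Y)=I(T,X^a;Y)$ by exactly the two observations you isolate — that $(x^a,u)\mapsto(x^a,t)$ is deterministic, and that $Y\perp U\mid(X^a,T)$. Your proposal is correct and matches the paper's argument, differing only in presentation (pushforward/fibre language versus explicit log-ratio manipulations).
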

\begin{proof}
Let us fix a joint distribution $P^{\ast}_{Y,X^a,X^b,U,S}(y,x^a,x^b,u,s)$ satisfying (\ref{eq:dist-somek-eq}), i.e.,
\begin{eqnarray}
P^{\ast}_{Y,X^a,X^b,U,S}(y,x^a,x^b,u,s)=P^{\ast}_{Y|X^a,X^b,S}(y|x^a,x^b,s)1_{\{x^b=\mathbf{m}(s,x^a,u)\}}P_{S}(s)P^{\ast}_{X^a,U}(x^a,u)\label{eq:somek-ind-func}.
\end{eqnarray}
Observe that for every $\mathbf{m}$ satisfying $x^b=\mathbf{m}(u,x^a,s)$, one can define
\begin{eqnarray}
x^b=\mathbf{m}(u,x^a,s)=\bar{\mathbf{m}}(x^a,u)(s), \enspace \bar{\mathbf{m}}(x^a,u) \in {\cal T}\label{eq:somek-ind-map},
\end{eqnarray}
where ${\cal T}$ is the set of all mappings from ${\cal S}$ to ${\cal X}_b$. Now, let
\begin{eqnarray}
\left(I(U;Y|X^a)_{{P^{\ast}_{Y,X^a,U}(y,x^a,u)}}, I(U,X^a;Y)_{{P^{\ast}_{Y,X^a,U}(y,x^a,u)}}\right)\label{eq:somek-mutinf},
\end{eqnarray}
denote the mutual information pair induced by $P^{\ast}_{Y,X^a,U}(y,x^a,u)$. We have{\allowdisplaybreaks
\begin{eqnarray}
&&\hspace{-0.8in}I(U,X^a;Y)_{{P^{\ast}_{Y,X^a,U}(y,x^a,u)}}\nonumber\\
&=&\sum_{u\in {\cal U}}\sum_{y\in {\cal Y}}\sum_{x^a\in {{\cal X}_a}}P^{\ast}_{Y,X^a,U}(y,x^a,u)\log\frac{P^{\ast}_{Y,U,X^a}(y,u,x^a)}{P^{\ast}_{Y}(y)P^{\ast}_{U,X^a}(u,x^a)}\nonumber\\
&\overset{}{=}&\sum_{t\in{\cal T}}\sum_{u\in {\cal U}}\sum_{y\in {\cal Y}}\sum_{x^a\in {{\cal X}_a}}P^{\ast}_{Y,X^a,U,T}(y,x^a,u,t)\log\frac{P^{\ast}_{Y,U,X^a}(y,u,x^a)}{P^{\ast}_{Y}(y)P^{\ast}_{U,X^a}(u,x^a)}\nonumber\\
&\overset{(i)}{=}&\sum_{t\in{\cal T}}\sum_{u\in {\cal U}}\sum_{y\in {\cal Y}}\sum_{x^a\in {{\cal X}_a}}P^{\ast}_{Y,X^a,U,T}(y,x^a,u,t)\log\frac{P^{\ast}_{Y,U,X^a,T}(y,u,x^a,t)}{P^{\ast}_{Y}(y)P^{\ast}_{U,X^a,T}(u,x^a,t)}\nonumber\\
&\overset{(ii)}{=}&\sum_{t\in{\cal T}}\sum_{u\in {\cal U}}\sum_{y\in {\cal Y}}\sum_{x^a\in {{\cal X}_a}}P^{\ast}_{Y,X^a,U,T}(y,x^a,u,t)\log\frac{P^{\ast}_{Y|X^a,T}(y|x^a,t)P^{\ast}_{U,T,X^a}(u,t,x^a)}{P^{\ast}_{Y}(y)P^{\ast}_{U,T,X^a}(u,t,x^a)}\nonumber\\
&=&\sum_{t\in{\cal T}}\sum_{u\in {\cal U}}\sum_{y\in {\cal Y}}\sum_{x^a\in {{\cal X}_a}}P^{\ast}_{Y,X^a,U,T}(y,x^a,u,t)\log\frac{P^{\ast}_{Y,X^a,T}(y,x^a,t)}{P^{\ast}_{Y}(y)P^{\ast}_{X^a,T}(x^a,t)}\nonumber\\
&=&\sum_{t\in{\cal T}}\sum_{y\in {\cal Y}}\sum_{x^a\in {{\cal X}_a}}P^{\ast}_{Y,X^a,T}(y,x^a,t)\log\frac{P^{\ast}_{Y,X^a,T}(y,x^a,t)}{P^{\ast}_{Y}(y)P^{\ast}_{X^a,t}(x^a,t)}\nonumber\\
&=&I(T,X^a;Y)_{P^{\ast}_{Y,X^a,T}(y,x^a,t)}\label{eq:somek-ind-mutinf},
\end{eqnarray}
where $(i)$ is valid since $\bar{\mathbf{m}}(x^a,u) \in {\cal T}$, i.e., for each $(x^a,u)$ there exists only one $t\in{\cal T}$ such that $P_{T|X^a,U}(t|x^a,u)=1$, $(ii)$ is valid since
\begin{eqnarray}
P^{\ast}_{Y|X^a,T,U}(y|x^a,t,u)&\overset{(iii)}{=}&\sum_{s \in {\cal S}}P^{\ast}_{Y|X^a,T,U,S}(y|x^a,t,u,s)P_{S}(s)\hspace{0.05in}\overset{(iv)}{=}\hspace{0.05in}\sum_{s \in {\cal S}}P_{Y|X^a,T,S}(y|x^a,t,s)P_{S}(s)\nonumber\\
&=&\sum_{s \in {\cal S}}P^{\ast}_{Y,S|X^a,T}(y,s|x^a,t)\hspace{0.05in}=\hspace{0.05in}P^{\ast}_{Y|X^a,T}(y|x^a,t),
\end{eqnarray}
where $(iii)$ is valid since $S$ and $(X^a,T,U)$ are independent and $(iv)$ is valid due to (\ref{eq:ch-recfl}). Similarly, we have
\begin{eqnarray}
&&\hspace{-0.8in}I(U;Y|X^a)_{{P^{\ast}_{Y,X^a,U}(y,x^a,u)}}\nonumber\\
&=&\sum_{u\in {\cal U}}\sum_{y\in {\cal Y}}\sum_{x^a\in {{\cal X}_a}}P^{\ast}_{Y,X^a,U}(y,x^a,u)\log\frac{P^{\ast}_{Y,U|X^a}(y,u|x^a)}{P^{\ast}_{Y|X^a}(y|x^a)P^{\ast}_{U|X^a}(u|x^a)}\nonumber\\
&\overset{}{=}&\sum_{u\in {\cal U}}\sum_{y\in {\cal Y}}\sum_{x^a\in {{\cal X}_a}}P^{\ast}_{Y,X^a,U}(y,x^a,u)\log\frac{P^{\ast}_{Y,U,X^a}(y,u,x^a)}{P^{\ast}_{Y|X^a}(y|x^a)P^{\ast}_{U,X^a}(u,x^a)}\nonumber\\
&\overset{(v)}{=}&\sum_{t\in{\cal T}}\sum_{u\in {\cal U}}\sum_{y\in {\cal Y}}\sum_{x^a\in {{\cal X}_a}}P^{\ast}_{Y,X^a,U,T}(y,x^a,u,t)\log\frac{P^{\ast}_{Y,U,X^a,T}(y,u,x^a,t)}{P^{\ast}_{Y|X^a}(y|x^a)P^{\ast}_{U,X^a,T}(u,x^a,t)}\nonumber\\
&\overset{(vi)}{=}&\sum_{t\in{\cal T}}\sum_{u\in {\cal U}}\sum_{y\in {\cal Y}}\sum_{x^a\in {{\cal X}_a}}P^{\ast}_{Y,X^a,U,T}(y,x^a,u,t)\log\frac{P^{\ast}_{Y|T,X^a}(y|t,x^a)P^{\ast}_{U,T,X^a}(u,t,x^a)}{P^{\ast}_{Y|X^a}(y|x^a)P^{\ast}_{U,T,X^a}(u,t,x^a)}\nonumber\\
&=&\sum_{t\in{\cal T}}\sum_{u\in {\cal U}}\sum_{y\in {\cal Y}}\sum_{x^a\in {{\cal X}_a}}P^{\ast}_{Y,X^a,U,T}(y,x^a,u,t)\log\frac{P^{\ast}_{Y,T|X^a}(y,t|x^a)}{P^{\ast}_{Y|X^a}(y|x^a)P^{\ast}_{T|X^a}(t|x^a)}\nonumber\\
&=&\sum_{t\in{\cal T}}\sum_{y\in {\cal Y}}\sum_{x^a\in {{\cal X}_a}}P^{\ast}_{Y,X^a,T}(y,x^a,t)\log\frac{P^{\ast}_{Y,T|X^a}(y,t|x^a)}{P^{\ast}_{Y|X^a}(y|x^a)P^{\ast}_{T|X^a}(t|x^a)}\nonumber\\
&=&I(T;Y|X^a)_{P^{\ast}_{Y,X^a,T}(y,x^a,t)}\label{eq:somek-ind-mutinfa},
\end{eqnarray}}
where $(v)$ and $(vi)$ follows from the same reasonings of $(i)$ and $(ii)$, respectively.  Now, let $R^{'}_{b}<I(U;Y|X^a)_{{P^{\ast}_{Y,X^a,U}(y,x^a,u)}}$ and $R^{'}_{b}+R^{'}_{a}<I(U,X^a;Y)_{{P^{\ast}_{Y,X^a,U}(y,x^a,u)}}$. Hence, $(R^{'}_{a},R^{'}_{b}) \in {\cal C}_{AS}^E$. Observe now that for a distribution in the form of $P^{\ast}_{Y,X^a,T}(y,x^a,t)$, one can define $\hat{\pi}_{X^a,T}(x^a,t)=P^{\ast}_{X^a,T}(x^a,t)$.  Therefore, since ${\cal C}^G_{FS}=\overline{co}\bigg(\bigcup_{\hat{\pi}}\mathcal{R}_{C}^{'}(\hat{\pi})\bigg)$, and due to (\ref{eq:somek-ind-mutinf}) and (\ref{eq:somek-ind-mutinfa}), $(R^{'}_{a},R^{'}_{b}) \in {\cal C}_{FS}^G$, which completes the claim.
\end{proof}

\end{document}